\numberwithin{equation}{section}
\theoremstyle{plain}
\newtheorem{theorem}{Theorem}[section]
\newtheorem{lemma}{Lemma}[section]
\theoremstyle{definition}
\newtheorem{definition}{Definition}[section]
\newtheorem{remark}{Remark}[section]
\def\co{{\text{conv}}}
\def\R{{\mathbb R}}
\def\Q{{\bm Q}}
\def\P{{\mathbb P}}
\def\LL{{\bm L}}
\def\N{{\mathbb N}}
\def\D{{\bm D}}
\def\U{{\bm U}}
\def\V{{\bm V}}
\def\K{{\bm K}}
\def\X{{\bm X}}
\def\A{{\bm A}}
\def\B{{\bm B}}
\def\S{{\bm S}}
\def\cs{{\text{cs}}}
\def\ex{{\text{ex}}}
\def\rank{{\text{rank}}}
\def\emd{{\text{emd}}}
\def\opt{{\text{opt}}}
\newcommand{\E}{\mathbb{E}}
\renewcommand{\epsilon}{\varepsilon}
\newcommand*\samethanks[1][\value{footnote}]{\footnotemark[#1]}
\DeclareMathOperator*{\argmin}{arg\,min}
\definecolor{mypink1}{rgb}{0.858, 0.188, 0.478}
\definecolor{mypink2}{RGB}{219, 48, 122}
\definecolor{mypink2}{cmyk}{0, 0.7808, 0.4429, 0.1412}
\definecolor{mygray}{gray}{0.6}
\begin{document}
	
	\title{Probabilistic methods for approximate archetypal analysis}
	\date{}
	
	\author[1]{\small Ruijian Han\thanks{Equal contribution.}}
	\author[2]{\small Braxton Osting}
	\author[3,4]{\small Dong Wang}
	\author[2,5]{\small Yiming Xu\samethanks}
	
	\affil[1]{\footnotesize Department of Statistics, The Chinese University of Hong Kong, Hong Kong, China}
	\affil[2]{\footnotesize Department of Mathematics, University of Utah, Salt Lake City}
	\affil[3]{\footnotesize School of Science and Engineering, The Chinese University of Hong Kong, Shenzhen}
	\affil[4]{\footnotesize Guangdong Provincial Key Laboratory of Big Data Computing, The Chinese University of Hong Kong, Shenzhen}
	\affil[5]{\footnotesize Scientific Computing and Imaging Institute, University of Utah, Salt Lake City}

	\maketitle
	
	\begin{abstract}
		Archetypal analysis is an unsupervised learning method for exploratory data analysis. One major challenge that limits the applicability of archetypal analysis in practice is the inherent computational complexity of the existing algorithms. In this paper, we provide a novel approximation approach to partially address this issue. Utilizing probabilistic ideas from high-dimensional geometry, we introduce two preprocessing techniques to reduce the dimension and representation cardinality of the data, respectively. We prove that provided the data is approximately embedded in a low-dimensional linear subspace and the convex hull of the corresponding representations is well approximated by a polytope with a few vertices, our method can effectively reduce the scaling of archetypal analysis. Moreover, the solution of the reduced problem is near-optimal in terms of prediction errors. Our approach can be combined with other acceleration techniques to further mitigate the intrinsic complexity of archetypal analysis. We demonstrate the usefulness of our results by applying our method to summarize several moderately large-scale datasets. 
	\end{abstract}
	
	\begin{keyword}
		Alternating minimization, Approximate convex hulls, Archetypal analysis, Dimensionality reduction, Random projections, Randomized SVD 
	\end{keyword}

	\section{Introduction}
	Archetypal analysis (AA) is an unsupervised learning method introduced by Cutler and Breiman in 1994 \cite{cutler1994archetypal}. For fixed $k\in\mathbb N$, the method finds a convex polytope with $k$ vertices, referred to as \emph{archetypes}, in the convex hull of the data that explains the most variation of the data. 
	Equivalently, given $\{x_i\}_{i\in [N]}\subset\R^d$, AA can be formulated as the following optimization problem:
	\begin{align}
		&\min_{\A\in\R_{\cs}^{N\times k},\B\in\R_\cs^{k\times N}} 
		\frac{1}{\sqrt{N}} \|\X-\X\A\B\|_F 
		&\X = [x_1, \cdots, x_N]\in\R^{d\times N},
		\label{1}
	\end{align}
	where $F$ denotes the Frobenius norm, and `$\cs$' stands for column stochastic matrices, which are entry-wise nonnegative matrices with each column summing to 1.
	The normalizing factor $1/\sqrt{N}$ is introduced for convenience later. 
	To understand this formulation, note that the columns of $\X\A$ are the expected archetypes, and the columns of $\B$ correspond to the projection coefficients of the columns of $\X$ to the convex hull of the archetypes. Consequently, the objective defined in \eqref{1} represents the (average) variation of the data that cannot be explained by the convex combinations of the archetypes. 
	
	AA is closely related to other unsupervised learning methods such as the $k$-means, principal component analysis (PCA) and nonnegative matrix factorization (NMF) \cite{hastie01statisticallearning, javadi2020nonnegative}. 
	In fact, AA can be seen as an interpolation between the $k$-means and PCA; it has more geometry than the former while it is more restrictive than the latter due to additional convexity constraints.
	This allows AA to produce more interpretable results in many applications, e.g., in evolutionary biology \cite{shoval2012evolutionary}, meanwhile raising additional questions of increased computational complexity. 
	Under suitable assumptions, the consistency and convergence of AA have recently been established in \cite{osting2021consistency}, laying the foundation for AA to be applicable to large-scale inference.   
	
	Despite offering interpretable results, AA did not gain equal attention compared to its alternatives. 
	One possible reason, as pointed out in \cite{chen2014fast}, is due to the lack of efficient computational resources for applying AA to large-scale datasets, which are becoming increasingly ubiquitous in the big-data era. 
	Indeed, the optimization defined in \eqref{1} is non-convex, and one common approach to solving \eqref{1} is based on an alternating minimization algorithm \cite{cutler1994archetypal}, which will be reviewed in Section \ref{sec:2}.  
	The subproblems in the alternating minimization scheme are equivalent to quadratic programming problems (see Section \ref{sec:2}), which makes the full loop for solving AA computationally intensive for moderately large dimension $d$ and cardinality $N$.

	The scope of this paper is to provide a promising perspective for addressing the theoretical computational challenges encountered by the AA. 
	Instead of focusing on optimizing the subproblem solvers to accelerate computation, we introduce two separate dimensionality reduction techniques to downsize the problem before applying optimization methods to solve  \eqref{1}.  
	We show that under appropriate conditions, a solution of the reduced AA 
	(i) well-approximates the solution of the original problem \eqref{1} in terms of  projection error and
	(ii) can be obtained significantly faster than the original solution. 
	Our approach relies on a few fundamental results in high-dimensional geometry.  
	Note that our proposed method is a data preprocessing procedure by nature, and complements the many existing methods to further accelerate computation.  
	
	\subsection{Related work}
	
	Making archetypal analysis practical for large-scale data analysis has been an active area of research in recent years. 
	Various approaches have been proposed to attack the problem from different perspectives. 
	For example, feasible optimization techniques such as projected gradients \cite{morup2012archetypal}, active-subsets \cite{chen2014fast}, and the Frank-Wolfe method  \cite{bauckhage2015archetypal} are considered for accelerating solving the quadratic programming problem in the alternating minimization scheme. 
	Relaxation methods including decoupling \cite{mei2018online} and sparse projections \cite{abrol2020geometric} are concerned with relaxing the alternating minimization into problems that enjoy better scalability properties. 
	Another direction of work is centered around approximately solving AA by first reducing the cardinality of the data via sparse representation \cite{thurau2011convex, mair2017frame}. 
	Although these approaches are demonstrated to work well empirically, they either do not address the intrinsic complexity of the problem or lack theoretical guarantee on the quality of approximation. 
	In the recent work \cite{mair2019coresets}, the authors proposed to use the coreset of the data to reduce the computational complexity of the objective function and theoretically quantified the approximation error. 
	
	Using approximate isometric embedding to reduce dimensionality is a fruitful idea in data analysis. The technique has been successfully applied to a variety of problems including least-squares regression \cite{drineas2011faster, avron2010blendenpik}, clustering \cite{boutsidis2010random, cohen2015dimensionality, makarychev2019performance}, low-rank approximation \cite{tropp2017practical,clarkson2017low,halko2009finding}, nonnegative matrix factorization  \cite{erichson2018randomized, qian2018dsanls}, and tensor decomposition \cite{wang2015fast, battaglino2018practical}.
	
	\subsection{Contributions of this paper}
	This paper proposes two novel dimensionality reduction techniques which can be combined with existing approaches to mitigate the inherent complexity of archetypal analysis.
	Both techniques come with theoretical guarantees on their approximation accuracy. 
	In particular,
	\begin{itemize}
		\item We introduce a data compression technique based on a randomized Krylov subspace method \cite{musco2015randomized} to reduce data dimension.
		This procedure allows us to circumvent frequent queries to high-dimensional data and is new in the context of archetypal analysis.
		\item We propose to use random projections to compute an approximate convex hull of the data to reduce the cardinality of the dictionary to represent archetypes. 
		\item We theoretically analyze the approximation accuracy and time complexity for both techniques. In particular, we show that the reduced archetypal analysis gives a near-optimal solution but has significantly reduced complexity provided that the data is low-dimensional and approximately described by a few extreme patterns. 
	\end{itemize}
	Our results yield an approximate algorithm that is capable of dealing with data that is large both in size and dimension.
	Numerical experiments are provided which support and illustrate our theoretical findings. 
	
	\subsection{Outline}
	The rest of the paper is organized as follows. In Section \ref{sec:2}, we review the standard alternating minimization algorithm for solving archetypal analysis as well as the corresponding computational challenges.
	In Section \ref{sec:data} and \ref{sec:card}, we introduce two separate randomized techniques to reduce the data dimension and representation cardinality of the archetypes, respectively. 
	We also quantify the approximation accuracy and the computational complexity for both techniques.  
	In Section \ref{sec:comb}, we combine the ideas in Section \ref{sec:data} and \ref{sec:card} to devise an approximate algorithm for archetypal analysis.
	We show that the proposed algorithm gives a near-optimal solution meanwhile having significantly reduced computational complexity for datasets that are approximately embedded in a low-dimensional subspace and well summarized via a few extreme points. We numerically verify our results in Section \ref{sec:num}.    
	
	\subsection{Notation}
	In the rest of the paper, $\X \in \R^{d\times N}$ denotes the data matrix. 
	We always use $(\A_\star, \B_\star)$ to denote a minimizer to \eqref{1}, and $\opt(\X) = \|\X-\X\A_\star \B_\star \|_F/\sqrt{N} $ the corresponding optimum value. 
	
	Denote $[m] = \{ 1, \ldots, m \} \subset \N$. 
	For a matrix $\A\in\R^{m\times n}$, we denote by $\sigma_i(\A)$ the $i$-th largest singular value of $\A$, and $\A^\dagger$ the Moore-Penrose pseudoinverse of $\A$.   
	For $T_1\subseteq [m]$ and $T_2\subseteq [n]$, we use notation 
	$\A[T_1, :]$,  
	$\A[-T_1, :]$, 
	$\A[:, T_2]$, and 
	$\A[:, -T_2]$ to denote the submatrices formed by taking the rows of $\A$ with indices in $T_1$, the rows of $\A$ with indices in $[m]\setminus T_1$, the columns of $\A$ with indices in $T_2$, and the columns of $\A$ with indices in $[n]\setminus T_2$, respectively. 
	When talking about subspace embedding for $\A$, we view $\A$ as $n$ points $\A[:, 1], \cdots, \A[:, n]$ in the column space of $\A$, i.e., $\text{col}(\A)$.
	We use $\co(\A)$ and $\ex(\A)$ to represent the convex hull of the columns of $\A$ and the corresponding extreme points, respectively. 
	
	Moreover, $\mathcal O(\cdot )$, $a(n_1, \cdots, n_\ell)\lesssim b(n_1, \cdots, n_\ell)$ and $a(n_1, \cdots, n_\ell)\gtrsim b(n_1, \cdots, n_\ell)$ are standard notation in complexity theory, where the implicit constants do not depend on the indices $n_1, \cdots, n_\ell$.

	\section{An alternating minimization algorithm for archetypal analysis}\label{sec:2}
	In this section, we review an alternating minimization algorithm for solving AA, due to Cutler and Breiman \cite{cutler1994archetypal}. 
	
	Note that \eqref{1} is a non-convex optimization. 
	However, when fixing $\A$ or $\B$ and solving for the other, the problem becomes convex. This observation gives rise to the following alternating minimization algorithm for computing a stationary solution for \eqref{1}.
	\medskip
	
	\begin{algorithm}[H]
		\KwIn{$\{x_i\}_{i\in [N]}$: dataset, $k$: number of archetypes}
		\KwOut{$\A$, $\B$}
		\begin{algorithmic}[1]
			\STATE {Initialize $\X\A$}
			\WHILE{not converged}{
				\STATE {$\B\gets\argmin_{\B'\in\R_\cs^{k\times N}}\|\X-\X\A\B'\|_F^2$}
			}
			\STATE {$\A\gets\argmin_{\A'\in\R_\cs^{N\times k}}\|\X-\X\A'\B\|_F^2$}
			\ENDWHILE
			\STATE {final update for $\B$:  $\B\gets\arg\min_{\B'\in\R_\cs^{k\times N}}\|\X-\X\A\B'\|_F^2$}
			\STATE{return $\A, \B$}
		\end{algorithmic}
		\caption{Alternating Minimization Algorithm for AA \cite{cutler1994archetypal}} 
		\label{alg:AM}
	\end{algorithm}
	\medskip
	
	The loop in Algorithm \ref{alg:AM} updates $\B$ and $\A$ alternatingly. 
	To analyze the computational complexity of these subroutines, we formulate the optimization problems in steps $3$ and $4$ more explicitly as follows. 
	
	In step 3, $\A$ is fixed and $\B$ needs to be updated. If we let $\bm Z = \X\A$, then the optimization is equivalent to computing the projection coefficients for each column in $\X$ to $\co(\bm Z)$. In particular, we need to solve $N$ independent $k$-dimensional quadratic programming problems:
	\begin{align*}
		&\min_{b\in\R^k,\|b\|_1 = 1, b\geq 0}\|\bm Z b-\X[:, i]\|_2^2& i\in [N]. 
	\end{align*}
	
	In step 4, $\B$ is fixed and $\A$, or equivalently, $\bm Z$, needs to be updated.
	Using the Pythagorean theorem, one can first compute the least-squares solutions
	\begin{align*}
	\argmin_{\bm Z\in\R^{d\times k}}\|\X-\bm Z\B\|_F^2 = ((\B^T)^\dagger \X^T)^T = \X\B^T(\B\B^T)^{-1},
	\end{align*}
	then update each column of $\A$ by projection:
	\begin{align}
	&\min_{a\in\R^N, \|a\|_1=1, a\geq 0}\left\|\X a - \X\B^T(\B\B^T)^{-1}[:, i]\right\|_2^2& i\in [k].\label{rev1}
	\end{align}
	Alternatively, one can use a Gauss-Seidel approach to update the columns of $\bm Z$ sequentially to accelerate computation \cite{osting2021consistency}. 
	Since the rest of the paper uses the Gauss-Seidel technique in the subroutine of solving reduced AA, we derive the optimization problems resulting from the procedure; more details can be found in \cite[Appendix B]{osting2021consistency}. 
	
	The Gauss-Seidel method updates the identified archetypes (i.e. the columns of $\bm Z$) one at a time. 
	In the $i$-th step, the procedure optimizes over the $i$-th column of $\bm Z$ with the rest kept fixed. It can be verified from direct computation that for $i\in [k]$, 
	\begin{align*}
	\|\X-\bm Z\B\|_F^2 &= \sum_{j\in [d]}\sum_{\ell\in [N]}\left[\X[j,\ell]^2-2\X[j,\ell]\sum_{s\in [k]}\bm Z[j,s]\B[s, \ell]+\left(\sum_{s\in [k]}\bm Z[j,s]\B[s, \ell]\right)^2\right]\\
	& = \sum_{j\in [d]}\sum_{\ell\in [N]}\left[(\bm Z[j,i]\B[i,\ell])^2-2\bm Z[j, i]\B[i,\ell]\left(\X[j,\ell]-\sum_{s\neq i}\bm Z[j,s]\B[s, \ell]\right)\right]+\Delta\\
	& = \|\B[i, :]\|^2_2\sum_{j\in [d]}\left[\bm Z[j,i]- \frac{1}{\|\B[i,:]\|^2_2}\sum_{\ell\in [N]}\B[i,\ell]\left(\X[j,\ell]-\sum_{s\neq i}\bm Z[j,s]\B[s, \ell]\right)\right]^2+\Delta\\
	& = \|\B[i, :]\|^2_2\left\|\bm Z[:, i] - \frac{\D_i(\B[i, :])^T}{\|\B[i, :]\|^2_2}\right\|_2^2+\Delta,
	\end{align*} 
	where $\D_i = \X - \bm Z[:,-i]\B[-i, :]$ and $\Delta$ collects the terms that do not depend on $\bm Z[:,i]$. 
	Since $\bm Z[:, i] = \X\A[:,i]$, minimizing $\|\X-\bm Z\B\|_F^2$ is equivalent to solving 
	\begin{align}
	&\min_{a\in\R^N, \|a\|_1=1, a\geq 0}\left\|\X a - \frac{\D_i(\B[i, :])^T}{\|\B[i, :]\|^2_2}\right\|_2^2& i\in [k].\label{rev2}
	\end{align}
	Either \eqref{rev1} or \eqref{rev2} involves solving $k$ quadratic programming problems with variable dimension $N$.
	
	For small $k$ and large $N$, the computation time in step 3 scales linearly in $N$ (assuming solving a $k$-dimensional quadratic programming problem takes constant time). For step 4, the computation time is approximately equal to a multiplicative constant ($k$) times the complexity of solving an $N$-dimensional quadratic programming problem, which can be computationally infeasible for large $N$. 
	We will provide a theoretically justified accelerated scheme for step 4 in Section \ref{sec:card}. 
	Moreover, when $d$ is large, taking repeated numerical operations on $\X$ is inconvenient. We will introduce a data dimensionality reduction technique to address this issue in Section \ref{sec:data}.

	\section{Data dimensionality reduction}\label{sec:data}
	
	We first consider the scenario where the data dimension is large. 
	This may happen, for instance, when each data point is obtained from the discretization of a continuous function (time series) or encodes a high-resolution image.  
	In this case, directly working with the data is inconvenient. 
	Instead, we can embed $\X$ in a lower dimensional space while maintaining the convexity structure of $\X$. 
	This compression will save us from frequently querying the columns of $\X$ in the iterative process for solving \eqref{1}, which can be computationally expensive. 
	A straightforward idea for embedding is via singular value decomposition (SVD), which we recall below: 
	
	\begin{definition}
		Suppose $\rank(\X) = r\leq \min\{N, d\}$. 
		The singular value decomposition (SVD) of $\X$ is given by $\X = \U\bm\Sigma\V^T$, where $\U\in\R^{d\times r}, \V\in\R^{r\times N}$ are the left and right singular vector matrices, respectively, and $\bm\Sigma\in\R^{r\times r}$ is a diagonal matrix with diagonal entries arranged in non-increasing order.  
	\end{definition} 
	
	Under the columns of $\U$, $\bm\Sigma\V^T\in\R^{r\times N}$ provides a sparse representation for $\X$ (since $r\leq d$). If we first embed $\X$ in $\U$ using SVD and apply AA to $\bm\Sigma\V^T$, then for every feasible $(\A, \B)$, by the unitary invariance of Frobenius norm, 
	\begin{align}
		\left\|\bm\Sigma\V^T-\bm\Sigma\V^T\A\B\right\|_F^2 = \left\|\U^T\X-\U^T\X\A\B\right\|_F^2 = \|\X-\X\A\B\|_F^2,\label{svd-aa}
	\end{align}
	which establishes the equivalence between \eqref{1} and the AA under the SVD representation. 
	
	In fact, if $\X$ has full rank but possesses low-rank structure, one may use a truncated SVD to further reduce the data dimension at a minor cost of accuracy, as made precise in the following theorem:
	
	\begin{theorem}\label{thm:tSVD}
		Suppose $p\leq r = \rank(\X)$. 
		Denote by $\U_p, \V_p$ the first $p$ columns of $\U$ and $\V$, respectively, and $\bm\Sigma_p$ the top $p\times p$ submatrix of $\mathbf \Sigma$.  
		Let $(\widetilde{\A}, \widetilde{\B})$ be a solution to the AA for the truncated SVD representation of $\X$ at $p$-th level:
		\begin{align*}
			\min_{\A\in\R_{\cs}^{N\times k},\B\in\R_\cs^{k\times N}}\frac{1}{\sqrt{N}}\left\|\mathbf \Sigma_p\V_p^T-\mathbf \Sigma_p\V_p^T\A\B\right\|_F.
		\end{align*}
		Then, 
		\begin{align}
			\frac{1}{\sqrt{N}}\left\|\X-\X\widetilde{\A}\widetilde{\B}\right\|_F\leq \opt(\X) + 4\sigma_{p+1}(\X).\label{truncate}
		\end{align}
	\end{theorem}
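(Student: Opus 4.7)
The plan is to reduce the truncated-problem error back to the original data via a head/tail split aligned with the top-$p$ left-singular subspace of $\X$. The unitary-invariance identity used in \eqref{svd-aa} extends verbatim to the truncated case: for any feasible $(\A,\B)$,
\begin{align*}
\|\bm\Sigma_p\V_p^T-\bm\Sigma_p\V_p^T\A\B\|_F = \|\U_p\bm\Sigma_p\V_p^T-\U_p\bm\Sigma_p\V_p^T\A\B\|_F.
\end{align*}
Setting $\X_p := \U_p\bm\Sigma_p\V_p^T$ and $P_p := \U_p\U_p^T$, this shows that $(\widetilde\A,\widetilde\B)$ is also a minimizer of $\min\|\X_p-\X_p\A\B\|_F$ over the same column-stochastic constraints, and we have $\X_p = P_p\X$ together with $\|\X-\X_p\|_2=\sigma_{p+1}(\X)$.

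Next I would decompose
\begin{align*}
\X-\X\widetilde\A\widetilde\B \;=\; \X_p(I-\widetilde\A\widetilde\B) \;+\; (\X-\X_p)(I-\widetilde\A\widetilde\B)
\end{align*}
and bound the two summands separately (by the triangle inequality, or equivalently by Pythagoras, since the column spaces of the summands lie in orthogonal subspaces of $\R^d$). For the head term, the optimality of $(\widetilde\A,\widetilde\B)$ on $\X_p$ combined with feasibility of the global minimizer $(\A_\star,\B_\star)$ gives
\begin{align*}
\|\X_p(I-\widetilde\A\widetilde\B)\|_F \leq \|\X_p(I-\A_\star\B_\star)\|_F = \|P_p(\X-\X\A_\star\B_\star)\|_F \leq \sqrt{N}\,\opt(\X),
\end{align*}
where the last inequality uses that $P_p$ is an orthogonal projection. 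For the tail term, operator/Frobenius submultiplicativity yields
\begin{align*}
\|(\X-\X_p)(I-\widetilde\A\widetilde\B)\|_F \leq \sigma_{p+1}(\X)\,\|I-\widetilde\A\widetilde\B\|_F.
\end{align*}

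The crux of the argument---and the reason the tail contribution collapses onto a clean multiple of $\sigma_{p+1}(\X)$ rather than the full Frobenius tail $\sqrt{\sum_{i>p}\sigma_i^2(\X)}$---is obtaining a dimension-free bound on $\|I-\widetilde\A\widetilde\B\|_F/\sqrt N$. Here the column-stochastic constraint is essential: a product of column-stochastic matrices is itself column stochastic, so each column $v_i$ of $\widetilde\A\widetilde\B$ lies in the probability simplex, giving $\|v_i\|_2\leq 1$ and $v_i[i]\geq 0$; consequently $\|e_i-v_i\|_2^2 \leq 1+\|v_i\|_2^2 \leq 2$ and $\|I-\widetilde\A\widetilde\B\|_F \leq \sqrt{2N}$. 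Combining the head and tail estimates and dividing by $\sqrt N$ yields $\opt(\X)+\sqrt{2}\,\sigma_{p+1}(\X)$, which fits comfortably inside the stated bound; the constant $4$ in \eqref{truncate} leaves slack that can equally be matched by a cruder split such as $\|I-\widetilde\A\widetilde\B\|_F \leq \|I\|_F+\|\widetilde\A\widetilde\B\|_F \leq 2\sqrt N$.
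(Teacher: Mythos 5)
Your proof is correct, and in fact establishes the sharper bound $\opt(\X)+\sqrt{2}\,\sigma_{p+1}(\X)$. The overall skeleton is the same as the paper's (split the residual along $\X_p$ and $\X-\X_p$, invoke optimality of $(\widetilde\A,\widetilde\B)$ for the $\X_p$-problem, and exploit column-stochasticity to control the tail), but your two term-by-term estimates are genuinely tighter. First, for the head term the paper applies a second triangle inequality, $\|\X_p-\X_p\A_\star\B_\star\|_F\leq\|\X-\X\A_\star\B_\star\|_F+\|\X_{-p}-\X_{-p}\A_\star\B_\star\|_F$, which introduces two extra tail contributions that must each be bounded by $\sqrt{N}\|\X_{-p}\|_2$; your observation that $\X_p(I-\A_\star\B_\star)=P_p(\X-\X\A_\star\B_\star)$ with $P_p$ an orthogonal projection collapses this to $\|\X-\X\A_\star\B_\star\|_F$ directly, with no residue. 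Second, for the tail you bound $\|I-\widetilde\A\widetilde\B\|_F\leq\sqrt{2N}$ via the simplex geometry of the columns ($\|v_i\|_2\leq\|v_i\|_1=1$ and $v_i[i]\geq 0$), whereas the paper splits $\|\X_{-p}-\X_{-p}\widetilde\A\widetilde\B\|_F\leq\|\X_{-p}\|_F+\|\X_{-p}\widetilde\A\widetilde\B\|_F$ and then needs the additional (valid but lossy) step $\|\X_{-p}\|_F\leq\sqrt{N}\|\X_{-p}\|_2$. The net effect is a constant of $\sqrt 2$ in place of $4$; the same refinements would propagate to Theorems \ref{thm:sketch}, \ref{cor:1}, and \ref{thm:last}. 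One small wording caveat: the triangle inequality and Pythagoras are not ``equivalent'' here, but since the two summands have column spaces in $\mathrm{col}(\U_p)$ and its orthogonal complement respectively, either suffices and Pythagoras is the sharper of the two.
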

	\begin{proof}
		Let 
		\begin{align}
			&\X_p: = \U_p\bm\Sigma_p\V_p&\X_{-p}: = \X-\X_p.\label{Xp}
		\end{align}
		By the Eckart--Young theorem \cite{eckart1936approximation}, $\X_p$ is the best rank-$p$ approximation for $\X$ in the spectral norm, with approximation error $\|\X_{-p}\|_2 = \sigma_{p+1}(\X)$. 
		Let $(\A_\star, \B_\star)$ be a solution to \eqref{1}. 
		Consequently, 
		\begin{align}
			\|\X-\X\widetilde{\A}\widetilde{\B}\|_F &\leq  \|\X_p-\X_p\widetilde{\A}\widetilde{\B}\|_F + \|\X_{-p}-\X_{-p}\widetilde{\A}\widetilde{\B}\|_F\nonumber\\
			& \leq  \|\X_p-\X_p\A_\star\B_\star\|_F + \|\X_{-p}-\X_{-p}\widetilde{\A}\widetilde{\B}\|_F\nonumber\\
			&\leq \|\X-\X\A_\star\B_\star\|_F + \|\X_{-p}-\X_{-p}\A_\star\B_\star\|_F + \|\X_{-p}-\X_{-p}\widetilde{\A}\widetilde{\B}\|_F\nonumber\\
			&\leq \|\X-\X\A_\star\B_\star\|_F + 2\|\X_{-p}\|_F + \|\X_{-p}\A_\star\B_\star\|_F + \|\X_{-p}\widetilde{\A}\widetilde{\B}\|_F.\label{etc}
		\end{align}
		Since $\A_\star, \widetilde{\A}, \B_\star, \widetilde{\B}$ are column stochastic matrices,  so are $\A_\star\B_\star$ and $\widetilde{\A}\widetilde{\B}$. 
		It follows from direct computation and Cauchy-Schwarz inequality that
		\begin{align}
			\|\X_{-p}\A_\star\B_\star\|_F = \sqrt{\sum_{i\in [N]}\|\X_{-p}(\A_\star\B_\star)[:, i]\|_2^2}&\leq \sqrt{\sum_{i\in [N]}\|\X_{-p}\|^2_2\|(\A_\star\B_\star)[:, i]\|^2_2}\nonumber\\
			&\leq \sqrt{\sum_{i\in [N]}\|\X_{-p}\|^2_2\|(\A_\star\B_\star)[:, i]\|^2_1}\nonumber\\
			&\leq \|\X_{-p}\|_2\sqrt{N}.\label{>>>}
		\end{align}
		Similarly, 
		\begin{align}
			\|\X_{-p}\widetilde{\A}\widetilde{\B}\|_F\leq \|\X_{-p}\|_2\sqrt{N}.\label{125}
		\end{align} 
		Plugging \eqref{>>>} and \eqref{125} into \eqref{etc} and dividing by $\sqrt{N}$ yields
		\begin{align*}
			\frac{1}{\sqrt{N}}\|\X-\X\widetilde{\A}\widetilde{\B}\|_F\leq\opt(\X) + \frac{2}{\sqrt{N}}\|\X_{-p}\|_F + 2\|\X_{-p}\|_2&\leq\opt(\X) + 4\|\X_{-p}\|_2\\
			& = \opt(\X) + 4\sigma_{p+1}(\X), 
		\end{align*}
		completing the proof. 
	\end{proof}
	
	
	Thus, for data $\X$ that admits a good low-rank approximation, AA applied to the truncated SVD representation yields a near-optimal solution in terms of prediction errors.  
	In this case, the data dimension can be significantly reduced to streamline computation. 
	However, to obtain truncated SVD representations, one often needs to compute the full SVD of $\X$, which has complexity $\mathcal O(dN\min\{d, N\})$. For large $d$ and $N$, this procedure is computationally intensive and thus can be restrictive in practice. 
	To address this issue, we consider an approximate version of the best rank-$p$ approximation without taking the SVD of $\X$. 
	
	\begin{definition}
		A matrix $\widetilde{\X}_p$ is a $(1+\epsilon)$ rank-$p$ approximation to $\X$ if $\rank(\widetilde{\X}_p) \leq p$ and 
		\begin{align}
			\|\X-\widetilde{\X}_p\|_2\leq (1+\epsilon)\|\X-\X_p\|_2,\label{lra}
		\end{align}
		where $\X_p$ is the best rank-$p$ approximation to $\X$ as defined in \eqref{Xp}. 
	\end{definition}
	
	Before turning to discuss how to find such an $\widetilde{\X}_p$, we consider a few consequences assuming its existence. 
	Similar to the previous discussion, we can apply AA to $\widetilde{\X}_p$, which can be efficiently represented using the SVD. 
	As will be seen shortly, computing the SVD of $\widetilde{\X}_p$ is much cheaper than $\X$ when $p$ is small. 
	On the other hand, let $\widetilde{\X}_p = \widetilde{\U}_p\widetilde{\bm\Sigma}_p\widetilde{\V}^T_p$ be the SVD of $\widetilde{\X}_p$ and define 
	\begin{align}
		\widetilde{\X} = \widetilde{\bm\Sigma}_p\widetilde{\V}^T_p  
		\qquad \implies \qquad  
		\widetilde{\X}_p  = \widetilde{\U}_p\widetilde{\X}. \label{svd-rep-Xp}
	\end{align}
	The following theorem quantifies the approximation error if we use $\widetilde{\X}$ in place of $\X$ for archetypal analysis:
	
	\begin{theorem}\label{thm:sketch}
		Let $\epsilon>0$. 
		Suppose $\widetilde{\X}_p$ is a $(1+\epsilon)$ rank-$p$ approximation to $\X$, and $\widetilde{\X}$ is the representation of $\widetilde{\X}_p$ under the left singular vectors.   
		Let $(\widetilde{\A}, \widetilde{\B})$ be a solution to the AA applied to $\widetilde{\X}$:
		\begin{align}
			\min_{\A\in\R_{\cs}^{N\times k},\B\in\R_\cs^{k\times N}}\frac{1}{\sqrt{N}}\left\|\widetilde{\X}-\widetilde{\X}\A\B\right\|_F. \label{s1}
		\end{align} 
		Then, 
		\begin{align*}
			\frac{1}{\sqrt{N}}\|\X-\X\widetilde{\A}\widetilde{\B}\|_F\leq \opt(\X) + 4(1+\epsilon)\sigma_{p+1}(\X).
		\end{align*}
	\end{theorem}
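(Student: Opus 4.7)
The plan is to reduce Theorem \ref{thm:sketch} to (essentially) the same triangle-inequality argument used in Theorem \ref{thm:tSVD}, with the role of $\X_p$ played by the approximate low-rank factor $\widetilde{\X}_p$. The starting point is the unitary-invariance identity
\begin{align*}
\bigl\|\widetilde{\X}-\widetilde{\X}\A\B\bigr\|_F \;=\; \bigl\|\widetilde{\U}_p\widetilde{\X}-\widetilde{\U}_p\widetilde{\X}\A\B\bigr\|_F \;=\; \bigl\|\widetilde{\X}_p-\widetilde{\X}_p\A\B\bigr\|_F,
\end{align*}
which holds for every feasible $(\A,\B)$ because $\widetilde{\U}_p$ has orthonormal columns and $\widetilde{\X}_p = \widetilde{\U}_p\widetilde{\X}$ by \eqref{svd-rep-Xp}. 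Thus solving \eqref{s1} is equivalent to solving AA for $\widetilde{\X}_p$, and in particular $(\widetilde{\A},\widetilde{\B})$ is also optimal for the objective $\|\widetilde{\X}_p - \widetilde{\X}_p\A\B\|_F$.

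Next I would set $\bm E := \X - \widetilde{\X}_p$ and bound
\begin{align*}
\bigl\|\X-\X\widetilde{\A}\widetilde{\B}\bigr\|_F \;\leq\; \bigl\|\widetilde{\X}_p-\widetilde{\X}_p\widetilde{\A}\widetilde{\B}\bigr\|_F + \bigl\|\bm E-\bm E\widetilde{\A}\widetilde{\B}\bigr\|_F,
\end{align*}
then use the optimality of $(\widetilde{\A},\widetilde{\B})$ for $\widetilde{\X}_p$ against the competitor $(\A_\star,\B_\star)$ to replace $\|\widetilde{\X}_p-\widetilde{\X}_p\widetilde{\A}\widetilde{\B}\|_F$ by $\|\widetilde{\X}_p-\widetilde{\X}_p\A_\star\B_\star\|_F$, and a further triangle inequality via $\X$ to introduce $\|\X-\X\A_\star\B_\star\|_F = \sqrt{N}\,\opt(\X)$ plus additional $\bm E$-terms. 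This is exactly the chain \eqref{etc}, only with $\X_{-p}$ replaced by $\bm E$.

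The $\bm E$-dependent terms $\|\bm E\A_\star\B_\star\|_F$ and $\|\bm E\widetilde{\A}\widetilde{\B}\|_F$ are handled by the column-stochastic Cauchy--Schwarz bound \eqref{>>>}--\eqref{125}, which only uses that $\A_\star\B_\star$ and $\widetilde{\A}\widetilde{\B}$ are column stochastic; both give $\leq \|\bm E\|_2\sqrt{N}$. The remaining Frobenius term is bounded by $\|\bm E\|_F\leq \sqrt{N}\,\|\bm E\|_2$ (since $\bm E$ has at most $N$ nonzero singular values), exactly as in the proof of Theorem \ref{thm:tSVD}. Dividing by $\sqrt{N}$ and invoking the hypothesis $\|\bm E\|_2 = \|\X-\widetilde{\X}_p\|_2 \leq (1+\epsilon)\|\X-\X_p\|_2 = (1+\epsilon)\sigma_{p+1}(\X)$ then yields
\begin{align*}
\frac{1}{\sqrt{N}}\bigl\|\X-\X\widetilde{\A}\widetilde{\B}\bigr\|_F \;\leq\; \opt(\X) + 4\,\|\bm E\|_2 \;\leq\; \opt(\X) + 4(1+\epsilon)\sigma_{p+1}(\X).
\end{align*}

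I expect no serious obstacle here: the bookkeeping is essentially identical to Theorem \ref{thm:tSVD}. The only nontrivial point is the opening unitary-invariance step, which is what legitimizes treating the compressed problem \eqref{s1} as AA for $\widetilde{\X}_p$; once that is observed, the rest is a line-by-line mimicry of the previous argument with $\X_{-p}$ replaced by $\bm E$ and the final bound $\|\bm E\|_2\leq(1+\epsilon)\sigma_{p+1}(\X)$ used in place of $\|\X_{-p}\|_2=\sigma_{p+1}(\X)$.
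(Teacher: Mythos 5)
Your proposal is correct and follows essentially the same route as the paper: the paper's own proof is a one-line reduction that reruns the chain \eqref{etc} from Theorem \ref{thm:tSVD} with $\X_p,\X_{-p}$ replaced by $\widetilde{\X}_p$ and $\widetilde{\X}_{-p}=\X-\widetilde{\X}_p$, then invokes \eqref{lra}, which is exactly your argument with $\bm E=\widetilde{\X}_{-p}$. Your explicit opening step identifying the objective in \eqref{s1} with AA for $\widetilde{\X}_p$ via the orthonormality of $\widetilde{\U}_p$ is the same observation the paper records in \eqref{svd-aa} and leaves implicit here, so you have simply supplied the details the paper omits.
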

	\begin{proof}
		Proceeding similarly as proof of Theorem \ref{thm:tSVD} with $\X_p, \X_{-p}$ replaced by $\widetilde{\X}_p$ and $\widetilde{\X}_{-p} = \X-\widetilde{\X}_p$, respectively, 
		\begin{align*}
			\frac{1}{\sqrt{N}}\|\X-\X\widetilde{\A}\widetilde{\B}\|_F&\leq\opt(\X) +4\|\widetilde{\X}_{-p}\|_2\stackrel{\eqref{lra}}{\leq}\opt(\X) +4(1+\epsilon)\sigma_{p+1}(\X).
		\end{align*}
	\end{proof}
	
	Theorem \ref{thm:sketch} implies that for $\X$ with small best rank-$p$ approximation error, using the SVD representation of $\widetilde{\X}_p$ will only result in a small impact on prediction accuracy. 
	The following algorithm, due to Musco and Musco \cite{musco2015randomized}, provides a way to compute $\widetilde{\X}_p$ (i.e., $\widetilde{\X}$) via randomized block Krylov methods. 
	The details of the algorithm are given in Algorithm \ref{alg:RK}:
	\medskip
	
	\begin{algorithm}[H]
		\KwIn{$\X\in\R^{d\times N}$: data matrix, $p$: approximation rank, $s$: power parameter}
		\KwOut{$\widetilde{\X}_p$}
		\begin{algorithmic}[1]
			\STATE {generate $p$ random initializations: $\S\in\R^{N\times p}$, $\S_{ij}\stackrel{\text{i.i.d}}{\sim}\mathcal N(0, 1)$}
			\STATE{construct the Krylov subspace: $\K = [\X\S, (\X\X^T)\X\S, \cdots, (\X\X^T)^{s-1}\X\S]\in\R^{d\times (sp)}$}
			\STATE{compute the QR decomposition for $\K$: $\K = \Q\bm R$}
			\STATE{compute the SVD of $\X_{\emd} = \X^T\Q$: $\X_{\emd} = \U_\emd\bm\Sigma_\emd\V^T_\emd$}
			\STATE{compute $\widetilde{\X}$: $\widetilde{\X}_p = \LL\LL^T\X$, with $\LL = \Q\V_\emd [:, 1:p]$}
		\end{algorithmic}
		\caption{Block Krylov Iteration \cite{musco2015randomized}} 
		\label{alg:RK}
	\end{algorithm}
	\medskip

	For moderately large $s$, with high probability, $\widetilde{\X}_p$ returned by Algorithm \ref{alg:RK} is a good approximation to $\X_p$: 
	
		\begin{lemma}[\cite{musco2015randomized}]\label{Krylov}
		For $\epsilon, \delta>0$, there exist absolute constants $C_1, C_2>0$ such that if 
		\begin{align}
			&s>\frac{C_1}{\sqrt{\epsilon}}\log \left(\frac{N}{\epsilon\delta}\right)&p\geq C_2\log\left(\frac{4}{\delta}\right),\label{mycond}
		\end{align} 
		then for with probability at least $1-\delta$, the $\widetilde{\X}_p$ in Algorithm \ref{alg:RK} satisfies \eqref{lra}. 
	\end{lemma}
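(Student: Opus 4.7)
The plan is to follow the standard randomized block Krylov analysis of Musco and Musco, whose essential ingredients are (i) a polynomial approximation argument showing that the Krylov subspace contains a near--top singular subspace of $\X$, and (ii) a Gaussian sketching argument showing that the initialization $\S$ has sufficiently large components along the top singular directions of $\X\X^T$. The conclusion is then obtained by combining these two pieces with a projection inequality that relates the spectral error of $\LL\LL^T\X$ to the quality with which $\text{col}(\LL)$ aligns with the leading singular subspace of $\X$.

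First I would observe that by construction $\text{col}(\K)$ contains $q(\X\X^T)\X\S$ for every polynomial $q$ of degree at most $s-1$. Writing $\X = \U\bm\Sigma\V^T$ and $\X\X^T = \U\bm\Sigma^2\U^T$, the effect of $q(\X\X^T)$ on a vector is to scale each left singular component by $q(\sigma_i(\X)^2)$. The core analytic step is to choose $q$ as a (shifted, scaled) Chebyshev polynomial of degree $s-1$ that equals $1$ at $\sigma_{p+1}(\X)^2$ and blows up as rapidly as possible at each $\sigma_i(\X)^2 \geq (1+\epsilon)\sigma_{p+1}(\X)^2$; the standard Chebyshev bound yields a separation factor that is exponential in $s\sqrt{\epsilon}$, and the condition $s \gtrsim \epsilon^{-1/2}\log(N/\epsilon\delta)$ makes this ratio polynomially large in $N/\delta$. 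This shows that for each top singular direction $\U[:,i]$, $i\leq p$, the vector $q(\X\X^T)\X\S[:,i]$ is, up to a tiny relative error, proportional to $\sigma_i(\X)\U[:,i]$ combined with negligible tail components.

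Next I would handle the Gaussian initialization. The random matrix $\V^T\S$ is itself Gaussian, and the submatrix corresponding to the top $p$ right singular directions is $p\times p$ Gaussian. Standard lower tail bounds for the smallest singular value of Gaussian matrices show that with $p \geq C_2\log(4/\delta)$ (or, more precisely, with an oversampling built into the block size equal to $p$), this sketch has smallest singular value bounded below by a polynomially small quantity with probability at least $1-\delta/2$. This guarantees that the top $p$ columns of $\U$ are recoverable from $q(\X\X^T)\X\S$ up to a perturbation controlled by the polynomial-amplification ratio from the previous step.

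Finally I would convert the subspace containment into the spectral guarantee \eqref{lra}. Let $\LL$ denote the columns of $\Q\V_\emd[:,1{:}p]$ as in Algorithm \ref{alg:RK}; by step 5, $\widetilde{\X}_p = \LL\LL^T\X$ is a rank-$p$ matrix obtained by orthogonally projecting $\X$ onto $\text{col}(\LL)\subseteq \text{col}(\K)$. A standard projection inequality (Musco--Musco Lemma on ``per-vector'' error bounds) says $\|\X-\LL\LL^T\X\|_2^2 \leq \sigma_{p+1}(\X)^2 + \max_{i\leq p}\,(\text{gap term}_i)$, where the gap term is controlled by the angle between $\text{col}(\LL)$ and each top singular direction. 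Plugging in the polynomial and Gaussian bounds proved above gives $\|\X-\widetilde{\X}_p\|_2\leq (1+\epsilon)\sigma_{p+1}(\X)$, which is exactly \eqref{lra}. A union bound over the two failure events yields the overall probability $1-\delta$.

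The main obstacle is the first step: designing the amplifying polynomial and carefully tracking the factors that propagate through the Chebyshev bound so that the dependence $s \gtrsim \epsilon^{-1/2}\log(N/\epsilon\delta)$ comes out correctly (rather than the weaker $\epsilon^{-1}$ scaling one would obtain from naive power iteration). The Gaussian sketching step is routine given standard extreme singular value estimates, and the final projection inequality is by now a textbook tool in randomized numerical linear algebra; since the full argument is somewhat technical, I would defer the details to \cite{musco2015randomized} and simply cite this lemma as stated.
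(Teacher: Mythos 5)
Your proposal is correct and follows essentially the same route as the paper: the paper's own proof simply cites \cite[Theorem 1]{musco2015randomized} and notes that the only modification needed to upgrade the fixed success probability there to $1-\delta$ is a sharp concentration bound on the extreme singular values of the Gaussian initialization, which is exactly the role you assign to the condition $p\geq C_2\log(4/\delta)$ in your second step. The Chebyshev-amplification and per-vector projection ingredients you sketch are the internals of the cited theorem, and deferring their details to \cite{musco2015randomized} is what the paper does as well.
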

	
	\begin{proof}
		Lemma \ref{Krylov} is a probabilistic version of \cite[Theorem 1]{musco2015randomized} where a fixed probability (0.99) is used instead of $1-\delta$ for an arbitrary $\delta$. Nevertheless, the proof is the similar except one needs to apply sharp concentration inequalities to bound extreme singular values of Gaussian matrices \cite[Corollary 7.3.3]{vershynin2018high}, \cite[Theorem 1.2]{rudelson2008littlewood} to control the failure probability. 
	\end{proof}
	
	\begin{remark}
		Other randomized low-rank approximation algorithms may also be used in place of Algorithm \ref{alg:RK}. 
		For example, one can use the randomized simultaneous iteration to compute $\widetilde{\X}_p$ \cite{halko2009finding, woodruff2014sketching}. 
		Under the same approximation error $\epsilon$ and failure probability $\delta$, the sample complexity of this method has a slightly worse dependence on $\epsilon$ (i.e. $s = \mathcal O(\log (N/\epsilon\delta)/\epsilon)$) than  \eqref{mycond}. 
		As such theoretical discrepancy was also manifested in several empirical studies in \cite{musco2015randomized}, we use Algorithm \ref{alg:RK} to compute $\widetilde{\X}_p$ in this article. 
	\end{remark}
	
	\begin{remark}
	The desired low-rank approximation $\widetilde{\X}_p$ is computed under the spectral norm, which is necessary in the derivation of approximation error in Theorem \ref{thm:sketch}. 
	Other randomized algorithms based on oblivious sketching \cite{sarlos2006improved, woodruff2014sketching, cohen2015dimensionality} or leverage score sampling \cite{cohen2017input} only produce low-rank approximations under the Frobenius norm. 
	Since an error bound under the Frobenius norm does not imply a similar bound under the spectral norm, these methods do not directly work for the problem considered in this paper. 
	\end{remark}  
	
	To apply Theorem \ref{thm:sketch}, we need to compute the SVD representation of the low-rank approximation matrix $\widetilde{\X}_p$, that is, $\widetilde{\X}$, rather than $\widetilde{\X}_p$ itself; see \eqref{svd-rep-Xp}.
	Since the output of Algorithm \ref{alg:RK} is the full low-rank matrix $\widetilde{\X}_p$, 
	finding its SVD representation may incur additional computational cost for our purpose. 
	However, in Algorithm \ref{alg:RK}, $\widetilde{\X}$ can be read off the shelf as $\widetilde{\X} = \bm\Sigma_\emd[1:p, 1:p](\U_\emd[:, 1:p])^T$, where $\bm\Sigma_\emd$ and $\U_\emd$ are computed in step 4. 
	Thus, the total cost for $\widetilde{\X}_p$ is the computational complexity for the first four steps in Algorithm \ref{alg:RK}.
	
	\begin{theorem}\label{time1}
		Let $\widetilde{\X}$ be the SVD representation of $\widetilde{\X_p}$ in Algorithm \ref{alg:RK}. Then, the computational complexity for $\widetilde{\X}$ is $\mathcal O(dNps + dp^2s^2+Nps\min\{N, ps\})$. 
	\end{theorem}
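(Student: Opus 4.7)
The plan is to tally the computational cost of each of the first four steps of Algorithm \ref{alg:RK}, observe that $\widetilde{\X}$ can be read off directly from the factors produced in step 4, and sum the resulting bounds. Generating the Gaussian initialization $\S \in \R^{N \times p}$ is negligible, so the total work concentrates in (i) building the block Krylov matrix $\K$, (ii) its QR factorization, and (iii) the SVD of $\X_{\emd} = \X^T \Q$.

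For (i), the key observation is that $\X\X^T$ should never be formed explicitly — doing so would incur an $\mathcal{O}(d^2 N)$ term that is absent from the target bound. Instead, each new block $(\X\X^T)^j \X\S$ is obtained from the previous one by first multiplying on the left by $\X^T$ (a $d \times N$ times $N \times p$ product) and then by $\X$ (an $N \times d$ times $d \times p$ product); each such pair of multiplications costs $\mathcal{O}(dNp)$, and accumulating $s$ blocks produces $\K \in \R^{d \times sp}$ at total cost $\mathcal{O}(dNps)$. For (ii), the standard Householder QR factorization of a tall $d \times (sp)$ matrix costs $\mathcal{O}(d(sp)^2) = \mathcal{O}(dp^2 s^2)$. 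For (iii), forming $\X^T \Q \in \R^{N \times sp}$ again costs $\mathcal{O}(dNps)$, and the SVD of an $N \times sp$ matrix has the classical cost $\mathcal{O}(Nps \min\{N, sp\})$; from this SVD one reads off $\widetilde{\X} = \bm\Sigma_{\emd}[1:p, 1:p](\U_{\emd}[:, 1:p])^T$ at no additional asymptotic cost, so step 5 of the algorithm need not be executed for our purpose.

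Summing the three dominant contributions yields the claimed complexity $\mathcal{O}(dNps + dp^2 s^2 + Nps \min\{N, ps\})$. The only real subtlety — and hence the main obstacle in the argument — is the Krylov construction in (i): one must track the associativity of the matrix products carefully and avoid ever materializing the $d \times d$ Gram-type matrix $\X\X^T$, since otherwise the $dNps$ term would be replaced by a strictly larger cost. Everything else amounts to routine bookkeeping with standard dense linear algebra estimates.
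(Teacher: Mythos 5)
Your proposal is correct and follows essentially the same step-by-step cost tally as the paper's proof: $\mathcal O(dNps)$ for building the Krylov block and for forming $\X^T\Q$, $\mathcal O(dp^2s^2)$ for the QR factorization, and $\mathcal O(Nps\min\{N,ps\})$ for the SVD of $\X_\emd$, with the final read-off of $\widetilde{\X}$ costing only $\mathcal O(Np)$. Your explicit remark that one must never materialize $\X\X^T$ and instead apply $\X^T$ then $\X$ blockwise is a useful elaboration of a point the paper leaves implicit, but it does not change the structure of the argument.
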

	
	\begin{proof}
		We only sketch the proof; more details can be found in \cite{musco2015randomized}.
		Step 1 in Algorithm \ref{alg:RK} generates a random Gaussian matrix which takes time $\mathcal O(Np)$.
		Step 2 computes the Krylov subspace basis which takes time $\mathcal O(dNps)$.
		The QR decomposition of $\K$ in step 3 takes time $\mathcal O(dp^2s^2)$. 
		In step 4, we first compute $\X_\emd$, which takes time $\mathcal O(dNps)$, then compute the SVD of $\X_\emd$, which takes time $\mathcal O(Nps\cdot\min\{N, ps\})$. 
	        Computing $\widetilde{\X} = \bm\Sigma_\emd[1:p, 1:p](\U_\emd[:, 1:p])^T$ takes time $\mathcal O(Np)$.
	\end{proof}
	\begin{remark}
		When $ps\ll \min\{d, N\}$, the computational complexity of $\widetilde{\X}$ becomes $\mathcal O(dNps)$, which is significantly smaller than $\mathcal O(dN\min\{d, N\})$.
	\end{remark}

	Setting $\epsilon = 1$ in Lemma \ref{Krylov} and combining Theorem \ref{thm:sketch}, we have the following result:
	
	\begin{theorem}\label{cor:1} 
		Let $\widetilde{\X}$ be the SVD representation of $\widetilde{\X}_p$ returned by Algorithm \ref{alg:RK}.
		Let $(\widetilde{\A}, \widetilde{\B})$ be a solution to the AA for $\widetilde{\X}$: 
		\begin{align}
			\min_{\A\in\R_{\cs}^{N\times k},\B\in\R_\cs^{k\times N}}\frac{1}{\sqrt{N}}\|\widetilde{\X}-\widetilde{\X}\A\B\|_F. 
		\end{align}
		For $\delta>0$, if $p$ satisfies \eqref{mycond} and $s>C\log(N/\delta)$ for some absolute constant $C>0$, then with probability at least $1-\delta$,   
		\begin{align*}
			\frac{1}{\sqrt{N}}\|\X-\X\widetilde{\A}\widetilde{\B}\|_F\leq \opt(\X) + 8\sigma_{p+1}(\X).
		\end{align*} 
	\end{theorem}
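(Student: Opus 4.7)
The plan is to chain the two preceding results at the specialization $\epsilon = 1$; the statement is essentially a clean restatement of ``Lemma~\ref{Krylov} plus Theorem~\ref{thm:sketch}'' with the approximation parameter fixed.

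First, I would invoke Lemma~\ref{Krylov} with $\epsilon = 1$. Under this choice the hypothesis on the power parameter becomes $s > C_1 \log(N/\delta)$, which is exactly the assumption $s > C \log(N/\delta)$ in the statement (set $C = C_1$), and the hypothesis on $p$ coincides with \eqref{mycond}. The conclusion is that with probability at least $1-\delta$, the output $\widetilde{\X}_p$ of Algorithm~\ref{alg:RK} is a $(1+1)$ rank-$p$ approximation to $\X$, i.e.\ $\|\X - \widetilde{\X}_p\|_2 \leq 2\,\sigma_{p+1}(\X)$.

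On this probability-$(1-\delta)$ event, Theorem~\ref{thm:sketch} applies deterministically with $\epsilon = 1$. Since $\widetilde{\X}$ in the statement is, by construction, the SVD representation of $\widetilde{\X}_p$ as in \eqref{svd-rep-Xp}, and $(\widetilde{\A},\widetilde{\B})$ is a minimizer of the reduced problem \eqref{s1}, Theorem~\ref{thm:sketch} gives
\begin{align*}
\frac{1}{\sqrt{N}}\|\X - \X\widetilde{\A}\widetilde{\B}\|_F \leq \opt(\X) + 4(1+1)\,\sigma_{p+1}(\X) = \opt(\X) + 8\,\sigma_{p+1}(\X),
\end{align*}
which is the desired bound.

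There is no real obstacle here: the proof is just a composition of Lemma~\ref{Krylov} and Theorem~\ref{thm:sketch} with $\epsilon=1$. The only bookkeeping is (i) verifying that the hypotheses on $p,s$ in Lemma~\ref{Krylov} reduce to those stated (which they do, since $1/\sqrt{\epsilon}=1$ and $\log(N/(\epsilon\delta))=\log(N/\delta)$), and (ii) noting that no extra failure probability accrues, because Theorem~\ref{thm:sketch} is deterministic given the low-rank approximation guarantee.
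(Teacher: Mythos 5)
Your proof is correct and is exactly the paper's argument: the authors state Theorem~\ref{cor:1} as an immediate consequence of ``setting $\epsilon=1$ in Lemma~\ref{Krylov} and combining Theorem~\ref{thm:sketch},'' which is precisely the composition you carry out, including the observation that Theorem~\ref{thm:sketch} is deterministic conditional on \eqref{lra} so no additional failure probability is incurred.
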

	\begin{remark}
		Fixing $\delta$ small, say $\delta = 0.01$, $s=\mathcal O(\log N)$. 
		The computational complexity of $\widetilde{\X}$ is $\mathcal O(dNp\log N + dp^2\log^2 N+Np\log N\min\{N, p\log N\})$. 
		Consequently, for data $\X$ that can be well approximated via low-rank matrices with approximation rank $p\ll N$, using Algorithm \ref{alg:RK} can effectively reduce the dimension of AA. 
	\end{remark}

	\section{Representation cardinality reduction}\label{sec:card}
	
	We now consider the situation where the dataset has a large cardinality, i.e., $N>d$.  
	In this case, to reduce computational complexity, we propose to use a parsimonious subset of points in $\X$ to approximately represent $\co(\X)$, i.e., we wish to find a small subset $T\subset [N]$ such that 
	\begin{align}
		\co(\X_T)\approx \co(\X),\label{app:conv}
	\end{align}
	where $\X_T:=\X[:, T]$ and $\approx$ will be made rigorous later. We will refer to $\co(\X_T)$ as an \emph{approximate convex hull} of $\X$. 
	
	The idea of using subsets of $\X$ (i.e. extreme points) to represent $\co(\X)$ has been considered in \cite{thurau2011convex,mair2017frame}, where exact equality in \eqref{app:conv} is expected.
	Here we only ask for approximate representation of $\co(\X)$ (allowing for a small approximation error), so that it is possible to further reduce the cardinality of the representation set for the archetypes (Figure \ref{fig:demo}). 
	
	\begin{figure}[htbp]
		\begin{center}
			\includegraphics[width=0.45\textwidth]{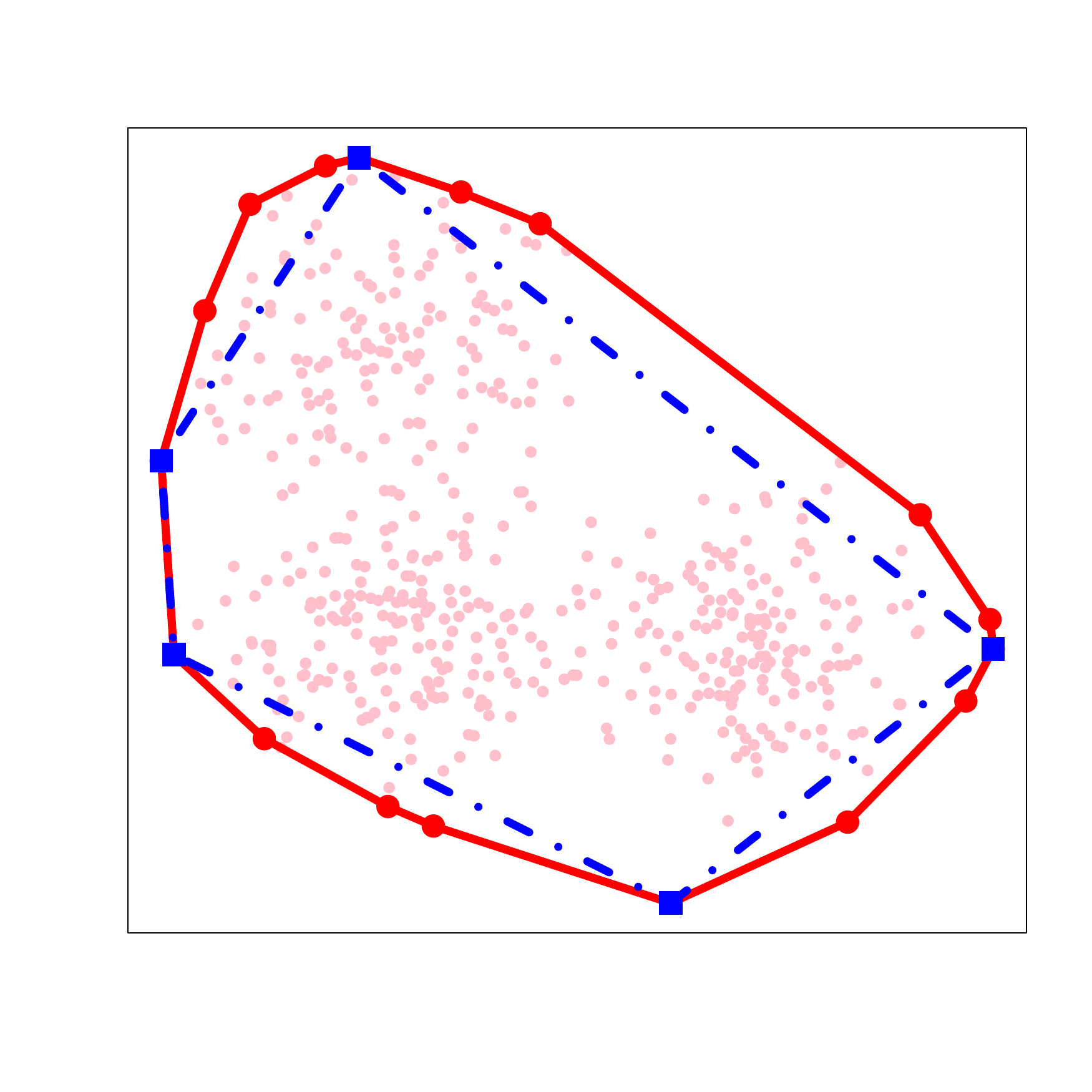}
			\caption{An example of the convex hull (red solid curves) and an approximate convex hull (blue dashed curve) of a randomly generated dataset. }
			\label{fig:demo}
		\end{center}
	\end{figure}
	
	Similar to the discussion in the previous section, we first give a few consequences assuming $\X_T$ exists. 
	
	\begin{definition}
		We say that $\X_T$ is an $\epsilon$-approximate convex hull of $\X$ if
		\begin{align}
			d_H(\co(\X_T), \co(\X))\leq \epsilon, \label{hau}
		\end{align}
		where $d_H (X,Y) := \max\left\{\,\sup_{x \in X} d(x,Y),\, \sup_{y \in Y} d(X,y) \,\right\}$ is the Hausdorff distance. 
	\end{definition}
	
	\begin{theorem}\label{thm:ach}
		For $\epsilon>0$ and $T\subseteq [N]$, suppose $\X_T$ is a $(\opt(\X)\cdot\epsilon)$-approximate convex hull of $\X$.
		Consider the following AA optimization problem constrained to $\co(\X_T)$: 
		\begin{align}
			\min_{\A\in\R_{\cs}^{|T|\times k},\B\in\R_\cs^{k\times N}}\frac{1}{\sqrt{N}}\|\X-\X_T\A\B\|_F. \label{3}
		\end{align}
		Then, the archetype points given by the solution of $\eqref{3}$ provide a $(1+\epsilon)$-approximation to the solution for \eqref{1} in terms of prediction errors:
		\begin{align*}
			\min_{\A\in\R_{\cs}^{|T|\times k},\B\in\R_\cs^{k\times N}}\frac{1}{\sqrt{N}}\|\X-\X_T\A\B\|_F\leq (1+\epsilon)\opt(\X).
		\end{align*}
	\end{theorem}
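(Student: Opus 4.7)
The plan is to construct an explicit feasible pair $(\A',\B_\star)$ for \eqref{3} by perturbing an optimal solution of \eqref{1}, and then use the triangle inequality combined with the convex combination structure of column stochastic matrices to bound the additional error.

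First I would take $(\A_\star,\B_\star)$ an optimal solution of \eqref{1}, so that each column of $\X\A_\star$ is an archetype living in $\co(\X)$. Since $\X_T$ is a $(\opt(\X)\cdot\epsilon)$-approximate convex hull under the Hausdorff distance \eqref{hau}, every archetype $(\X\A_\star)[:,i]$ admits some point in $\co(\X_T)$ within Euclidean distance $\opt(\X)\cdot\epsilon$. Writing those approximating points as columns of $\X_T\A'$ for a suitable $\A'\in\R_{\cs}^{|T|\times k}$, I obtain $\|\X\A_\star-\X_T\A'\|_{2,\text{col}} \le \opt(\X)\cdot\epsilon$ columnwise, i.e.\ each column of $\D := \X\A_\star - \X_T\A'$ has $2$-norm at most $\opt(\X)\cdot\epsilon$.

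Next I would estimate $\|\X-\X_T\A'\B_\star\|_F$ using the splitting
\begin{align*}
\|\X-\X_T\A'\B_\star\|_F \le \|\X-\X\A_\star\B_\star\|_F + \|\D\B_\star\|_F.
\end{align*}
The first term is $\sqrt{N}\,\opt(\X)$ by definition. For the second, the key observation is that each column $\B_\star[:,i]$ is a probability vector, so $\D\B_\star[:,i]=\sum_j \B_\star[j,i]\D[:,j]$ is a convex combination of columns of $\D$; hence $\|\D\B_\star[:,i]\|_2 \le \sum_j \B_\star[j,i]\|\D[:,j]\|_2 \le \opt(\X)\cdot\epsilon$ for every $i\in[N]$. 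Summing over $i$ yields $\|\D\B_\star\|_F\le \sqrt{N}\,\opt(\X)\cdot\epsilon$. Dividing by $\sqrt{N}$ and using that $(\A',\B_\star)$ is feasible for \eqref{3} gives the claimed $(1+\epsilon)\opt(\X)$ bound.

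I do not expect a genuine obstacle here: the only subtle point is to avoid a loose estimate like $\|\D\B_\star\|_F\le \|\D\|_F\|\B_\star\|_2$, which would introduce an unwanted $\sqrt{k}$ factor; instead one must exploit the columnwise convex combination structure, precisely as was done in \eqref{>>>}. Turning the Hausdorff bound into the matrix $\A'$ is mechanical — for each $i$ one picks coefficients in the simplex realizing the nearest point in $\co(\X_T)$ to $(\X\A_\star)[:,i]$ — so the argument is essentially a one line application of Hausdorff approximation followed by the convex combination bound.
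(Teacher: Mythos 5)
Your proof is correct and follows essentially the same route as the paper: both arguments move each optimal archetype $(\X\A_\star)[:,i]$ to a nearby point of $\co(\X_T)$ guaranteed by the Hausdorff bound and then apply the triangle inequality to the resulting feasible competitor. The only cosmetic difference is that the paper re-minimizes over $\B$ and controls the cross term via Cauchy--Schwarz on $\sum_i d(x_i,\co(\X\A_\star))$, whereas you keep $\B_\star$ fixed and bound $\|\D\B_\star\|_F$ columnwise using that the columns of $\B_\star$ lie in the simplex; both yield the same $(1+\epsilon)$ factor.
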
  
	
	\begin{proof}
		For an optimal solution $(\A_\star, \B_\star)$ of \eqref{1} that resides on the boundary of $\co(\X)$ (such a solution always exists \cite{cutler1994archetypal}), consider the projection of each column of $\X\A_\star$ to $\co(\X_T)$, and denote the projected points as $\bm Z$. Note that $\bm Z$ is well-defined as $\co(\X_T)\subset\co(\X)$. 
		By the triangle inequality, the distance between each column of $\X$ and $\co(\bm Z)$ is bounded by the sum of the distance between the column of $\X$ and $\co(\X\A_\star)$ and $d_H(\X\A_\star, \bm Z)$. 
		Since $\X_T$ gives an $(\opt(\X)\cdot\epsilon)$-approximate convex hull of $\X$, 
		\begin{align}
			d_H(\X\A_\star, \bm Z)\leq d_H(\co(\X), \co(\X_T))\leq\epsilon\cdot\opt(\X) = \frac{\epsilon}{\sqrt{N}}\|\X-\X\A_\star\B_\star\|_F.\label{lll}
		\end{align}
		It follows from direct computation that
		\begin{align*}
			\min_{\A\in\R_{\cs}^{|T|\times k},\B\in\R_\cs^{k\times N}}\frac{1}{N}\|\X-\X_T\A\B\|_F^2&\leq \min_{\B\in\R_\cs^{k\times N}}\frac{1}{N}\|\X-\bm Z\B\|_F^2\\
			&= \frac{1}{N}\sum_{i\in [N]}d(x_i, \co(\bm Z))^2\\
			&\leq\frac{1}{N}\sum_{i\in [N]}(d(x_i, \co(\X\A_\star)) + d_H(\X\A_\star, \bm Z))^2\\
			&\leq(1+\epsilon)^2\cdot\frac{1}{N}\|\X-\X\A_\star\B_\star\|_F^2,
		\end{align*}
		where the last inequality follows from \eqref{lll} and Cauchy-Schwarz inequality. Taking the square root on both sides completes the proof. 
	\end{proof}
	
	Theorem \ref{thm:ach} establishes an approximate equivalence between the solutions of \eqref{3} and \eqref{1} in terms of objective values. 
	Compared to \eqref{1}, the dimension of $\A$ is significantly reduced provided $|T|\ll N$, while the dimension of $\B$ stays unchanged. 
	
	To see the computational gain from solving \eqref{3} instead of \eqref{1}, recall the alternating minimization in Section \ref{sec:2}.
	When $\B$ is fixed and $\A$ is updated, one needs to solve $k$ quadratic programming problems with variable dimensions equal to the number of rows of $\A$. 
	For certain optimization methods such as the ellipsoid method, the complexity of quadratic programming problems with positive-definite quadratic matrix has weakly polynomial time (of the variable dimension) \cite{kozlov1979polynomial}. 
	Thus, when $|T|\ll N$, a notable acceleration is expected for the subroutine of updating $\A$, which justifies the significance of using a parsimonious subset of points to represent the archetypes.     
	
	On the other hand, when $\A$ is fixed and $\B$ is updated, one needs to compute the projection of each column of $\X$ to $\co(\X\A)$. This step is the same in both \eqref{1} and \eqref{3} and consists of $N$ independent quadratic programming problems with variable dimension $k$. 
	In this case, it is possible to take an additional step of acceleration via parallelization combined with the coreset approximation \cite{mair2019coresets}, which reduces the computation of $N$ projection coefficient vectors to a small subset of points in $\X$ with appropriate weights, similar to the ideas of quadrature.
	Indeed, given a coreset $\X_C\subset\X$ and appropriate weight diagonal matrix $\bm W$, one can approximate the objective function in \eqref{1} with $\|\bm W\X_C-\bm W\X\A\B\|_F/\sqrt{N}$. Note the complexity of the subproblem for updating $\B$ in the alternating minimization algorithm is proportional to the number of points in the objective function. Therefore, when $|\X_C|\ll |\X|$, the step of solving the $\B$-subproblem can be significantly accelerated. 
	Combining the idea of coreset with \eqref{lll} yields an approximate objective function $\|\bm W\X_C-\bm W\X_T\A\B\|_F/\sqrt{N}$, which has significantly reduced complexity when solved by the alternating minimization algorithm. The details are not discussed here. 
	
	We next discuss how to find a ``small'' subset $T$ such that \eqref{hau} is satisfied. 
	Note that to represent $\co(\X)$, it suffices to consider the extreme points of $\X$. 
	In other words, we will find a subset $\X_T\subset \ex(\X)$ whose convex hull can well approximate $\co(\X)$. 
	As will be seen below, this procedure can be effectively implemented by taking random projections. 
	Indeed, random projections are linear maps whose inverse image of the extreme points of a convex set are a subset of the extreme points of the inverse image of that convex set \cite{lax2002functional}.
	Similar ideas have been used in the empirical study of archetypal analysis to seek extreme points \cite{thurau2011convex, damle2017geometric}. 
	
	Finding all the extreme points of $\co(\X)$ may itself be computationally demanding unless $\ex(\X)$ is small. 
	When $\co(\X)$ can be well approximately using a few extreme points, it is desired to single them out to further shrink the complexity of the problem at a small sacrifice of accuracy.     
	To this end, we need to know which extreme points are more important than the others in terms of composing $\co(\X)$.
	The following result, which originally appeared in \cite{graham2017approximate}, is precisely what is needed here.  
	
	Observe that under a random projection $v\in\mathbb S^{d-1}$, the points in $\X$ have projected values $\{\langle x_i, v\rangle\}_{i\in [N]}$, which with probability one have a unique maximum. 
	The inverse image of the maximum is an element in $\ex(\X)$.  
	Thus, throwing away a null set, we can partition the unit sphere $\mathbb S^{d-1}$ as follows: 
	\begin{align*}
		&\mathbb S^{d-1} = \bigsqcup_{x\in\ex(\X)}V_x& V_x = \{v\in\mathbb S^{d-1} : v^Tx\geq v^Tx_i, i\in [N]\}.
	\end{align*} 
	For $x\in\ex(\X)$, its curvature is defined as $$\kappa(x) = \frac{|V_x|}{|\mathbb S^{d-1}|},$$ 
	which is the relative area of the directions that distinguish $x$ as the maximum to the unit sphere in $\R^d$. 
	By definition, points with larger curvature are more likely to be sampled if $v$ is uniformly drawn from $\mathbb S^{d-1}$; 
	in fact, they are also more `important' as specified by the following lemma \cite[Theorem 3.4]{graham2017approximate}: 
	
	\begin{lemma}\label{th:high}
		Let $S\subset [N]$. 
		Suppose that both $\co(\X_S)$ and $\co(\X)$ are non-degenerate (i.e., with nonempty interior), and $R: =\max_{i\in [N]}\|x_i\|_2$. 
		Then, 
		\begin{align}
			&d_H(\co(\X), \co(\X_S))\leq \min\left\{\sqrt{2}\pi (2\omega)^{\frac{1}{d-1}}, 2\right\}\cdot R & \omega = \sum_{i\in [N]\setminus S}\kappa(x_i). \label{alex}
		\end{align}
	\end{lemma}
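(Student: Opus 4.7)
The argument would pass to support functions. Since $\co(\X_S) \subseteq \co(\X)$, the Hausdorff distance collapses to
\[
d_H(\co(\X), \co(\X_S)) = \sup_{v \in \mathbb S^{d-1}} \bigl(h(v) - h_S(v)\bigr),
\]
where $h(v) = \max_{i \in [N]} \langle v, x_i\rangle$ and $h_S(v) = \max_{i \in S} \langle v, x_i\rangle$. Both support functions are $R$-Lipschitz on $\mathbb S^{d-1}$ and pointwise bounded in absolute value by $R$. This already yields the trivial estimate $d_H \leq 2R$, corresponding to the second term in the $\min$ in \eqref{alex}.

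For the curvature-dependent branch, I would partition the sphere (up to a null set) into the good region $G = \bigsqcup_{i \in S} V_{x_i}$, on which $h = h_S$, and the bad region $B = \bigsqcup_{i \notin S} V_{x_i}$, whose normalized spherical area equals $\omega$. For any $v \in B$, choosing a nearest $v' \in \overline{G}$ and applying the Lipschitz bound to each function gives
\[
h(v) - h_S(v) \le \bigl(h(v) - h(v')\bigr) + \bigl(h_S(v') - h_S(v)\bigr) \le 2R\|v - v'\|_2.
\]
Thus the task reduces to estimating $\sup_{v \in B} d(v, G)$ in terms of $\omega$, which is the crux of the argument.

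This last reduction is the main obstacle and is where the exponent $1/(d-1)$ enters. The mechanism is isoperimetric on $\mathbb S^{d-1}$: if some $v \in B$ had angular distance exceeding $\theta$ from $G$, then the open spherical cap of angular radius $\theta$ about $v$ would lie inside $B$; since the normalized area of such a cap is of order $\theta^{d-1}$, this forces $\theta \lesssim \omega^{1/(d-1)}$. Converting angular to chord distance via $\|v - v'\|_2 = 2\sin(\theta/2) \leq \sqrt{2}\,\theta$ and carrying the explicit constants from the spherical cap volume formula (a factor of $2$ enters from comparing a cap of measure $\omega$ with a circumscribing cap of measure $2\omega$) produces the stated coefficient $\sqrt{2}\pi(2\omega)^{1/(d-1)}$. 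The detailed tracking of constants is carried out in Theorem 3.4 of \cite{graham2017approximate}, which I would invoke directly; the non-degeneracy hypothesis on $\co(\X_S)$ and $\co(\X)$ ensures that both support functions are well-defined maxima and that the partition above really does cover the sphere.
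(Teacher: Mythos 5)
Your proposal is correct and takes the same route as the paper, which offers no proof of this lemma at all but simply cites it as Theorem 3.4 of \cite{graham2017approximate} --- the very result you invoke for the constant-tracking. Your sketch of the underlying mechanism (support functions, the $2R$ Lipschitz/boundedness bounds, the Voronoi partition of $\mathbb S^{d-1}$, and the spherical-cap isoperimetric estimate giving the $\omega^{1/(d-1)}$ rate) is sound and in fact supplies more detail than the paper does; the only fuzziness is in the exact constants, which you appropriately defer to the cited theorem.
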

	
	As a result, to compute a sparse approximate convex hull, it suffices to use high-curvature points to approximately represent $\co(\X)$.  
	To find high-curvature points, we apply a Monte-Carlo (MC) procedure to estimate the curvature of each point and then truncate at some thresholding parameter. 
	The details are given in Algorithm \ref{alg:RPACH}:
	\medskip
	
	\begin{algorithm}[H]
		\KwIn{$\{x_i\}_{i\in [N]}$: dataset, $M$: number of projections, $\eta$: approximation accuracy}
		\KwOut{approximate convex hull $\co(\{x_i\}_{i\in T})$}
		\begin{algorithmic}[1]
			\STATE {$e_j = 0$ for $j\in [N]$}.
			\FOR{$i = 1, \cdots, M$}{
				\STATE {$v_i\sim\text{Uniform}(\mathbb S^{d-1})$}
				\STATE {$u_i = \arg\max_{j} v^T_ix_j$}
				\STATE {$e_{u_i}\gets e_{u_i} + 1$}
			}
			\ENDFOR
			\STATE sort $\{e_j\}_{j\in [N]}$ in decreasing order as $e^{(1)}\geq\cdots\geq e^{(N)}$ 
			\STATE compute $L =\min\left\{\ell: \frac{1}{M}\sum_{j\in [\ell]}e^{(j)}>1-\eta/3\right\}$
			\STATE compute $L\gets \max\left\{L, d+1\right\}$ 
			\STATE let $T$ be the index set of $e^{(1)}, \cdots, e^{(L)}$ and return $\{x_j\}_{j\in T}$
		\end{algorithmic}
		\caption{Approximate Convex Hull} 
		\label{alg:RPACH}
	\end{algorithm}
	\medskip

	 A similar MC method based on a different truncation rule has been proposed \cite[Algorithm 1]{graham2017approximate}, where points are removed whenever their estimated curvatures are below some fixed threshold. 
	To ensure that the remaining points have large cumulative curvature, this algorithm requires the thresholding parameter to be overly small, leaving most points unremoved.
	To facilitate parsimony, Algorithm \ref{alg:RPACH} first sorts points based on their estimated curvatures, then truncates based on the estimated cumulative curvatures.

	The computational complexity of Algorithm \ref{alg:RPACH} can be easily obtained from direct computation:
	
	\begin{theorem}\label{time2}
		The computational complexity for Algorithm \ref{alg:RPACH} is $\mathcal O(MNd+ N\log N)$. 
	\end{theorem}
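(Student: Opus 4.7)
The plan is to bound the cost of each step of Algorithm \ref{alg:RPACH} separately and then add them up; the two dominant contributions will be the $M$ random-projection iterations and the final sort, giving the claimed $\mathcal O(MNd + N\log N)$.

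First I would handle the initialization (step 1) which is $\mathcal O(N)$. Then, for each iteration $i \in [M]$ of the main loop, I would argue that sampling $v_i$ uniformly from $\mathbb S^{d-1}$ can be done in $\mathcal O(d)$ time by drawing $d$ independent standard Gaussians and normalizing (this is the standard Muller/Marsaglia construction), the $\arg\max$ in step 4 requires $N$ inner products each of cost $\mathcal O(d)$ for a total of $\mathcal O(Nd)$, and the counter update in step 5 is $\mathcal O(1)$. Summed over the $M$ iterations this contributes $\mathcal O(MNd)$.

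Next I would bound the post-loop bookkeeping. Sorting the $N$ counters $\{e_j\}_{j\in[N]}$ (step 7) costs $\mathcal O(N\log N)$ using any comparison-based sort. Computing the threshold index $L$ in step 8 requires at most one linear scan through the sorted sequence while maintaining a running sum, which is $\mathcal O(N)$; the $\max\{L, d+1\}$ in step 9 is $\mathcal O(1)$; and returning $\{x_j\}_{j\in T}$ is $\mathcal O(|T|) = \mathcal O(N)$.

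Adding all contributions yields $\mathcal O(N) + \mathcal O(MNd) + \mathcal O(N\log N) + \mathcal O(N) = \mathcal O(MNd + N\log N)$, as claimed. There is no real obstacle here; the only subtlety worth flagging is that I am implicitly assuming a constant-time model for arithmetic on real numbers (so that Gaussian sampling and the normalization step for $v_i$ cost $\mathcal O(d)$ rather than $\mathcal O(d)$ times a bit-complexity factor), which is the standard convention used elsewhere in the paper.
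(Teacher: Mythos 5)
Your proof is correct and follows essentially the same decomposition as the paper's: the $M$ projection iterations each cost $\mathcal O(Nd)$ for the inner products and maximum, and the sort contributes $\mathcal O(N\log N)$, with all remaining bookkeeping absorbed into these terms. The extra details you supply (Gaussian sampling of $v_i$, the linear scan for $L$) are fine and only make the argument slightly more explicit than the paper's version.
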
 
	\begin{proof}
		The MC procedure in Algorithm \ref{alg:RPACH} (step 2 to 6) involves $M$ repetitions of computing $N$ $d$-dimensional vector inner product and finding the (index of the) maximum of the projected points, which takes time $\mathcal O(MNd)$ in total.
		Step 7 is a simple sorting that has complexity $\mathcal O(N\log N)$ using Merge sort \cite{knuth1997art}. 
	\end{proof}
	
	We will show that for large $M$, with high probability, the output of Algorithm \ref{alg:RPACH} satisfies \eqref{hau} with $\epsilon = \min\left\{\sqrt{2}\pi \eta^{\frac{1}{d-1}}, 2\right\}\cdot R$.  
	Without loss of generality, in the following discussion we assume $|\ex(\X)| = h$ and 
	\begin{align}
		\kappa(x_1)\geq\kappa(x_2)\geq\cdots\geq \kappa(x_h)>\kappa (x_{h+1})=\cdots =\kappa(x_N) = 0.\label{assume}
	\end{align} 
	We have the following theorem:
	
	\begin{theorem}\label{main}
		Let $T$ be the subset returned by Algorithm \ref{alg:RPACH}, and $R = \max_{i\in [N]}\|x_i\|_2$. 
		Suppose $\co(\X_D)$ is non-degenerate for every $D\subset [N]$ with $|D|>d$. 
		Denote $q$ as the smallest integer such that $\sum_{i\in [q]}\kappa(x_i)\geq 1-\eta/18$:
		\begin{align}
			q: = \min\left\{j: \sum_{i\in [q]}\kappa(x_i)\geq 1-\frac{\eta}{18}\right\},\label{back}
		\end{align} 
		and the truncation gap 
		\begin{align*}
			\Delta: = \kappa(x_{q})-\kappa(x_{q+1})>0.
		\end{align*}
		If
		\begin{align}
			M\geq \max\left\{\frac{324q^2}{\eta^2}, \frac{4}{\Delta^2}\right\}\log\left(\frac{3N}{\sqrt{\delta}}\right),\label{Mbdd}
		\end{align}
		then with probability at least $1-\delta$, $|T|\leq \max\{q, d+1\}$ and 
		\begin{align}
			d_H(\co(\X_T), \co(\X))\leq \min\left\{\sqrt{2}\pi \eta^{\frac{1}{d-1}}, 2\right\}\cdot R.\label{des1}
		\end{align}
	\end{theorem}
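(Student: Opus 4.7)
The plan is to view each counter $e_j$ produced by Algorithm \ref{alg:RPACH} as a Binomial$(M,\kappa(x_j))$ variable and use Hoeffding concentration to show that all empirical frequencies $e_j/M$ are simultaneously close to the true curvatures $\kappa(x_j)$. Since $v_i$ is uniform on $\mathbb S^{d-1}$, the probability that $\argmax_\ell v_i^T x_\ell = j$ equals $|V_{x_j}|/|\mathbb S^{d-1}| = \kappa(x_j)$, so $e_j \sim \mathrm{Bin}(M,\kappa(x_j))$. Hoeffding combined with a union bound over $j\in [N]$ yields
\begin{align*}
\mathbb P\Big(\max_{j\in[N]}|e_j/M-\kappa(x_j)|>\tau\Big)\leq 2N\exp(-2M\tau^2).
\end{align*}
I will choose $\tau$ just below $\min\{\Delta/2,\ \eta/(6q)\}$ (up to an absolute constant) and check that \eqref{Mbdd} is strong enough to force this failure probability to be at most $\delta$. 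The remainder of the argument is deterministic on the favorable event $\mathcal E_\tau = \{\max_j|e_j/M-\kappa(x_j)|\leq\tau\}$.

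First I would establish the cardinality bound. On $\mathcal E_\tau$ with $\tau<\Delta/2$, the gap at index $q$ is preserved in the empirical ordering: every $j\leq q$ satisfies $e_j/M\geq\kappa(x_q)-\tau$, whereas every $j>q$ satisfies $e_j/M\leq\kappa(x_{q+1})+\tau<\kappa(x_q)-\tau$. Hence the top $q$ sorted empirical counts $e^{(1)},\ldots,e^{(q)}$ correspond (up to permutation) exactly to $\{x_1,\ldots,x_q\}$. Combined with $\tau\leq\eta/(6q)$ and the definition \eqref{back} of $q$,
\begin{align*}
\frac{1}{M}\sum_{i\in[q]}e^{(i)}\geq\sum_{i\in[q]}\kappa(x_i)-q\tau\geq 1-\tfrac{\eta}{18}-\tfrac{\eta}{6}>1-\tfrac{\eta}{3},
\end{align*}
so by minimality the truncation index produced in step 7 satisfies $L_{\mathrm{orig}}\leq q$, and after the floor $L\gets\max\{L,d+1\}$ in step 8 one obtains $|T|\leq\max\{q,d+1\}$.

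Next I would derive the Hausdorff bound. Let $T_0\subseteq T$ denote the subset corresponding to the original top-$L_{\mathrm{orig}}$ empirical counts. A symmetric estimate on $\mathcal E_\tau$ gives
\begin{align*}
\sum_{j\in T_0}\kappa(x_j)\geq\tfrac{1}{M}\sum_{i\in[L_{\mathrm{orig}}]}e^{(i)}-L_{\mathrm{orig}}\tau>(1-\tfrac{\eta}{3})-\tfrac{\eta}{6}=1-\tfrac{\eta}{2},
\end{align*}
using $L_{\mathrm{orig}}\leq q$ and $\tau\leq\eta/(6q)$. The floor $\max\{L,d+1\}$ can only enlarge $T$ beyond $T_0$, so the same lower bound survives for $\sum_{j\in T}\kappa(x_j)$, giving $\omega := \sum_{j\in[N]\setminus T}\kappa(x_j)<\eta/2$. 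The floor also forces $|T|\geq d+1$, so the non-degeneracy hypothesis permits invoking Lemma \ref{th:high}, and $(2\omega)^{1/(d-1)}<\eta^{1/(d-1)}$ yields the desired estimate \eqref{des1}.

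The main obstacle will be tuning $\tau$ so that the two antagonistic requirements -- $\tau<\Delta/2$ (needed to preserve the sorted order around index $q$, so that the empirical top $q$ agrees with the true top $q$) and $\tau\lesssim\eta/q$ (needed to make empirical cumulative curvatures accurate proxies of the true ones) -- both fit inside the sample budget \eqref{Mbdd}; the maximum of these two scales is precisely what appears on the right of \eqref{Mbdd}. A secondary nuisance is the interaction of the two truncation rules in steps 7 and 8 of Algorithm \ref{alg:RPACH}, but since raising $L$ to $d+1$ can only increase $\sum_{j\in T}\kappa(x_j)$, the Hausdorff estimate is unaffected by this floor.
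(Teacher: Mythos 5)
Your proposal is correct and follows essentially the same route as the paper: Hoeffding concentration of the Monte--Carlo curvature estimates, preservation of the empirical ordering across the gap $\Delta$ at index $q$, a bound of $\eta/2$ on the cumulative curvature missed by $T$, and an appeal to Lemma~\ref{th:high} together with the floor $L\geq d+1$ for non-degeneracy. The only (immaterial) difference is organizational --- you use a single uniform deviation event over all $j\in[N]$ at scale $\tau\approx\min\{\Delta/2,\eta/(6q)\}$, whereas the paper splits into a deviation bound for $j\in[q]$ plus the aggregated tail and a separate pairwise bound on $e_i-e_j$ for the ordering event --- and your constants check out against the budget \eqref{Mbdd}.
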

	
	\begin{remark}
		Setting the upper bound in \eqref{des1} equal to $\opt(\X)\epsilon$ yields 
		\begin{align*}
			M\geq \max\left\{324q^2\left(\frac{2\pi^2R^2}{\opt(\X)^2\epsilon^2}\right)^{d-1}, \frac{4}{\Delta^2}\right\}\log\left(\frac{3N}{\sqrt{\delta}}\right),
		\end{align*}
		which has an unpleasant but expected exponential dependence on $d$ (curse of dimensionality).  
		For datasets with low-dimensional structure, i.e., well approximated via rank-$p$ matrices with $p\ll d$, it is possible to use ideas in Section \ref{sec:data} to improve the exponential dimension dependence to $p$ (Algorithm \ref{alg:AAA}).  
	\end{remark}
	
	
	\begin{proof}[Proof of Theorem \ref{main}]
		Note that step 9 in Algorithm \ref{alg:RPACH} ensures that $\co(\X_T)$ is non-degenerate. 
		Therefore, to show \eqref{des1}, by \eqref{alex}, it suffices to show $\sum_{i\in [N]\setminus T}\kappa(x_i)\leq \eta/2$, or equivalently, $\sum_{i\in T}\kappa(x_i)\geq 1-\eta/2$. 
		
		We first show that for $M$ satisfying \eqref{Mbdd}, with high probability, the estimated curvatures $e_j/M$ are close to their expectations for all reasonably large $e_j$. 
		Note for every $j\in [q]$, $e_j$ is a sum of $M$ independent Bernoulli random variables with parameter $\kappa(x_j)$, and the tail sum $\sum_{j>q}e_j$ is a sum of $M$ independent Bernoulli random variables with parameter $\sum_{j>q}\kappa(x_j)<\eta/18$.
		Thus, by Hoeffding's inequality \cite{hoeffding1994probability}, 
		\begin{align*}
			\P\left[\left|\frac{1}{M}e_j-\kappa(x_j)\right|\leq \frac{\eta}{18q}\right]&\geq 1-2\exp\left(-\frac{M\eta^2}{162q^2}\right)\\
			\P\left[\left|\frac{1}{M}\sum_{j>q}e_j-\sum_{j>q}\kappa(x_j)\right|\leq \frac{\eta}{18q}\right]&\geq 1-2\exp\left(-\frac{M\eta^2}{162q^2}\right).
		\end{align*}
		Taking a union bound over $j\in [q]$ and combining the two inequalities yields
		\begin{align}
			&\P\left[\left\{\max_{j\in [q]}\left|\frac{1}{M}e_j-\kappa(x_j)\right|, \left|\frac{1}{M}\sum_{j>q}e_j-\sum_{j>q}\kappa(x_j)\right|\right\}\leq\frac{\eta}{18q}\right]\nonumber\\
			\geq&\ 1-2(q+1)\exp\left(-\frac{M\eta^2}{162q^2}\right)\geq 1-4q\exp\left(-\frac{M\eta^2}{162q^2}\right).\label{goodevent} 
		\end{align}
		The right-hand side in \eqref{goodevent} can be further lower bounded by $1-\delta/2$ if $M$ satisfies \eqref{Mbdd}. 
		
		We next show that for large $M$, with high probability, the largest $q$ terms of $e_j$, i.e., $e^{(1)}, \cdots, e^{(q)}$, coincide with $\{x_j\}_{j\in [q]}$. 
		Particularly, denoting the index of $e^{(j)}$ as $\ell_j$, we will show $[q] = \{\ell_1, \cdots, \ell_q\}$. 
		Note that $[q] = \{\ell_1, \cdots, \ell_q\}$ if and only if the following probabilistic event occurs:
		\begin{align*}
			C_q: = \left\{\min_{i\leq q}e_i>\max_{j>q}e_j\right\}.
		\end{align*}
		Since for every $i\leq q$ and $j>q$, $e_i-e_j$ is a sum of $M$ i.i.d. random variables $Z$, where $Z=1$ with probability $\kappa(x_i)$, $Z=-1$ with probability $\kappa(x_j)$, and $Z=0$ otherwise.
		Thus, we can bound the probability of $C_q$ from below with another application of Hoeffding's inequality:
		\begin{align*}
			\P\left[C_q\right] = 1-\P\left[C^\complement_q\right]&\geq 1-\sum_{i\leq q, j>q}\P\left[e_i-e_j<0\right]\\
			&\geq 1-\sum_{i\leq q, j>q}\P\left[e_i-e_j-\E[e_i-e_j]<-M\Delta\right]\\
			&\geq 1-\frac{h^2}{4}\exp\left(-\frac{M\Delta^2}{2}\right)\\
			&\geq 1-\frac{N^2}{4}\exp\left(-\frac{M\Delta^2}{2}\right),
		\end{align*}
		which is lower bounded by $\delta/2$ if $M$ satisfies \eqref{Mbdd}. 
		Taking a union bound, for $M$ satisfying \eqref{Mbdd}, both the event in \eqref{goodevent} and $C_q$ occur with probability at least $1-\delta$. 
		
		To finish the proof, it suffices to show that conditional on both events, (i) $\sum_{i\in T}\kappa(x_i)\geq 1-\eta/2$ and (ii) $|T|\leq \max\{q, d+1\}$. 
		Let $T_- = T\setminus\{\ell_L\}$.
		Conditional on the event in \eqref{goodevent}, it follows from the stopping rule in Algorithm \ref{alg:RPACH} that
		\begin{align}
			\sum_{j\in T}\kappa(x_j)&\stackrel{\eqref{goodevent}}{\geq} \sum_{j\in T\cap [q]}\left(\frac{1}{M}e_j-\frac{\eta}{18q}\right) + \sum_{j\in T\cap [N]\setminus [q]}\kappa(x_j)\nonumber\\
			&\geq \sum_{j\in T\cap [q]}\left(\frac{1}{M}e_j-\frac{\eta}{18q}\right) +  \sum_{j\in [N]\setminus [q]}\kappa(x_j)-\sum_{j\in [N]\setminus [q]}\kappa(x_j)\nonumber\\
			&\stackrel{\eqref{goodevent}, \eqref{back}}{\geq} \sum_{j\in T\cap [q]}\left(\frac{1}{M}e_j-\frac{\eta}{18q}\right) +  \sum_{j\in [N]\setminus [q]}\frac{1}{M}e_j-\frac{\eta}{18q}-\frac{\eta}{18}\nonumber\\
			&\geq \sum_{j\in T\cap [q]}\left(\frac{1}{M}e_j-\frac{\eta}{18q}\right) +  \sum_{j\in T\cap [N]\setminus [q]}\frac{1}{M}e_j-\frac{\eta}{18q}-\frac{\eta}{18}\nonumber\\
			&\geq 1-\frac{\eta}{3} - \frac{\eta}{18} -\frac{\eta}{18q}-\frac{\eta}{18}>1-\frac{\eta}{2}\nonumber,
		\end{align}
		which shows that (i) holds true. 
		
		To show (ii), it suffices to consider the case where $\min\left\{\ell: \frac{1}{M}\sum_{j\in [\ell]}e^{(j)}>1-\eta/3\right\}>d+1$, since otherwise $L = d+1\leq\max\{q, d+1\}$.  
		In this case, the stopping rule in Algorithm \ref{alg:RPACH} tells us 
		\begin{align}
			\sum_{j\in T_-}\kappa(x_j)&\stackrel{\eqref{goodevent}}{\leq} \sum_{j\in T_-\cap [q]}\left(\frac{1}{M}e_j+\frac{\eta}{18q}\right) + \sum_{j\in T_-\cap [N]\setminus [q]}\kappa(x_j)\nonumber\\
			&\stackrel{\eqref{back}}{\leq} \sum_{j\in T_-\cap [q]}\left(\frac{1}{M}e_j+\frac{\eta}{18q}\right) +\frac{\eta}{18}\nonumber\\
			&\leq\sum_{j\in T_-}\frac{1}{M}e_j+\frac{\eta}{18} +\frac{\eta}{18}\stackrel{\eqref{back}}{<}\sum_{j\in [q]}\kappa(x_j).\label{1111}
		\end{align} 
		Further conditioning on $C_q$, we have $T_-\subset [q]$ or $ [q]\subset T_-$. But the latter cannot happen owing to \eqref{1111}. This implies $|T| =  |T_-|+1\leq q = \max\{q, d+1\}$, establishing (ii). 
	\end{proof}
	
	\section{An approximate AA algorithm}\label{sec:comb}
	
	Putting results in Section \ref{sec:data} and \ref{sec:card} together, we have the following approximate algorithm for archetypal analysis (AAA): 
	\medskip
	
	\begin{algorithm}[H]
		\KwIn{$\{x_i\}_{i\in [N]}$: dataset, $k$: number of archetypes, $p$: approximation rank, $s$: Krylov subspace parameter, $M$: number of projections, $\eta$: approximation accuracy}
		\KwOut{ an approximate solution to \eqref{1}}
		\begin{algorithmic}[1]
			\STATE {generate $p$ random initializations: $\S\in\R^{N\times p}$, $\S_{ij}\stackrel{\text{i.i.d}}{\sim}\mathcal N(0, 1)$}
			\STATE{construct the Krylov subspace: $\K = [\X\S, (\X\X^T)\X\S, \cdots, (\X\X^T)^{s-1}\X\S]\in\R^{d\times (sp)}$}
			\STATE{compute the QR decomposition for $\K$: $\K = \Q\bm R$}
			\STATE{compute the SVD of $\X_{\emd} = \X^T\Q$: $\X_{\emd} = \U_\emd\bm\Sigma_\emd\V^T_\emd$}
			\STATE{form approximate reduced SVD representation: $\widetilde{\X} = \bm\Sigma_\emd[1:p, 1:p](\U_\emd[:, 1:p])^T$}
			\STATE{apply Algorithm \ref{alg:RPACH} to $\widetilde{\X}$ with parameters ($M, \eta$) to find a subset of $T\subset [N]$}
			\STATE{solve the reduced archetypal analysis problem: 
				\begin{align*}
					(\widetilde{\A}_\star, \widetilde{\B}_\star)\in\arg\min_{\widetilde{\A}\in\R_\cs^{|T|\times k}, \widetilde{\B}\in\R^{k\times N}_\cs}\frac{1}{\sqrt{N}}\|\widetilde{\X}-\widetilde{\X}_T\widetilde{\A}\widetilde{\B}\|_F
			\end{align*}}
			\STATE{extend $\widetilde{\A}_\star$ to an $\R^{N\times k}$ matrix by first creating a zero matrix $\A_{null}\in\R^{N\times k}$, then $\A_{null}[T, :] \gets \widetilde{\A}_\star$, and finally $\widetilde{\A}_\star\gets\A_{null}$}
			\STATE{return $\widetilde{\A}_\star$, $\widetilde{\B}_\star$}
		\end{algorithmic}
		\caption{Approximate Archetypal Analysis (AAA)} 
		\label{alg:AAA}
	\end{algorithm}
	\medskip
	
	Under appropriate assumptions on the input parameters, we have the following guarantee for the solutions computed by Algorithm \ref{alg:AAA}:
	
	\begin{theorem}\label{thm:last}
		Under the same assumptions in Theorem \ref{main} and $p\gtrsim\log(1/\delta)$, 
		if
		\begin{align}
			s &\geq C\log\left(\frac{N}{\delta}\right)&\eta = \left(\frac{\opt(\X)\epsilon}{\sqrt{2}\pi\max_{i\in [N]}\|x_i\|_2}\right)^{p-1}\label{mb}\\
			M&\geq \max\left\{\frac{324q^2}{\eta^2}, \frac{4}{\Delta^2}\right\}\log\left(\frac{3N}{\sqrt{\delta}}\right), 
		\end{align}
		where $C$ is the same constant as in Theorem \ref{cor:1}, $\opt(\X)$ is the optimum value of \eqref{1}, $q$, $\Delta$ are the same as defined in Theorem \ref{main}, then with probability at least $1-2\delta$, $|T|\leq \max\{q, p+1\}$, and the approximate archetypes $\X\widetilde{\A}_\star$ as well as the coefficient matrix $\widetilde{\B}_\star$ returned by Algorithm \ref{alg:AAA} satisfy
		\begin{align*}
			\frac{1}{\sqrt{N}}\|\X-\X\widetilde{\A}_\star\widetilde{\B}_\star\|_F\leq (1+\epsilon)\left(\opt(\X)+8\sigma_{p+1}(\X)\right). 
		\end{align*}
	\end{theorem}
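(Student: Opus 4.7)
The strategy is to chain the two pieces of machinery developed in Sections \ref{sec:data} and \ref{sec:card}, using essentially the proofs of Theorems \ref{thm:sketch} and \ref{thm:ach} one on top of the other, with a small modification that expresses the cardinality-reduction bound in terms of $\opt(\X)$ rather than the unknown $\opt(\widetilde{\X})$.

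First, I would invoke Lemma \ref{Krylov} with $\epsilon = 1$ at the given $p, s$ to secure, with probability at least $1-\delta$, the event $\|\widetilde{\X}_{-p}\|_2 \leq 2\sigma_{p+1}(\X)$, where $\widetilde{\X}_p = \LL\LL^T\X$ and $\widetilde{\X}_{-p} = \X - \widetilde{\X}_p$. Since $\widetilde{\X} = \LL^T \X$ isometrically represents $\widetilde{\X}_p$ in $\R^p$, the columns satisfy $\|\widetilde{x}_i\|_2 \leq \|x_i\|_2$. I then apply Theorem \ref{main} to the $N$ points $\widetilde{\X}$ viewed in $\R^p$; the calibration of $\eta$ in \eqref{mb} is chosen precisely so that $\sqrt{2}\pi\,\eta^{1/(p-1)} \max_i \|\widetilde{x}_i\|_2 \leq \opt(\X)\,\epsilon$. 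With probability at least $1-\delta$, this gives $|T| \leq \max\{q, p+1\}$ and $d_H(\co(\widetilde{\X}_T), \co(\widetilde{\X})) \leq \opt(\X)\epsilon$.

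Next, I would generalize Theorem \ref{thm:ach} to a form decoupled from $\opt(\widetilde{\X})$. Running the same triangle inequality/Cauchy--Schwarz calculation as in its proof, but applied to $\widetilde{\X}$ and an arbitrary Hausdorff tolerance $\tau$, yields
\begin{align*}
\widetilde{\opt}_T := \min_{\A,\B} \frac{1}{\sqrt{N}}\|\widetilde{\X} - \widetilde{\X}_T \A \B\|_F \leq \widetilde{\opt} + \tau,
\end{align*}
where $\widetilde{\opt}$ is the ambient optimum of AA on $\widetilde{\X}$. The elementary bound $\widetilde{\opt} \leq \opt(\X)$ is immediate: plugging the optimizer $(\A_\star, \B_\star)$ of \eqref{1} into the $\widetilde{\X}$-objective and using $\|\LL^T(\cdot)\|_F \leq \|\cdot\|_F$ gives $\widetilde{\opt} \leq \frac{1}{\sqrt{N}}\|\LL^T(\X - \X\A_\star \B_\star)\|_F \leq \opt(\X)$. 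Combining with $\tau = \opt(\X)\epsilon$ yields $\widetilde{\opt}_T \leq (1+\epsilon)\opt(\X)$.

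Finally, I transfer the reduced-problem error back to the original data by mimicking the triangle inequality from the proof of Theorem \ref{thm:sketch}. Decomposing $\X = \widetilde{\X}_p + \widetilde{\X}_{-p}$, the column-stochastic estimate \eqref{>>>}--\eqref{125} applied to $\widetilde{\X}_{-p}$ gives $\|\widetilde{\X}_{-p}(I - \widetilde{\A}_\star\widetilde{\B}_\star)\|_F \leq 2\sqrt{N}\|\widetilde{\X}_{-p}\|_2$, while orthonormality of $\LL$ together with the zero-padding step $8$ produces $\|\widetilde{\X}_p - \widetilde{\X}_p\widetilde{\A}_\star\widetilde{\B}_\star\|_F = \|\widetilde{\X} - \widetilde{\X}_T \widetilde{\A}_\star\widetilde{\B}_\star\|_F = \sqrt{N}\,\widetilde{\opt}_T$. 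Summing the two contributions, substituting the step-one estimate, and using $4\sigma_{p+1}(\X) \leq 8(1+\epsilon)\sigma_{p+1}(\X)$ gives
\begin{align*}
\frac{1}{\sqrt{N}}\|\X - \X\widetilde{\A}_\star\widetilde{\B}_\star\|_F \leq \widetilde{\opt}_T + 2\|\widetilde{\X}_{-p}\|_2 \leq (1+\epsilon)\bigl(\opt(\X) + 8\sigma_{p+1}(\X)\bigr).
\end{align*}
A union bound over the two failure events of probability $\delta$ each yields the claimed probability $1-2\delta$. The main obstacle is the middle step: the original Theorem \ref{thm:ach} ties its Hausdorff tolerance directly to $\opt(\X)$, whereas here the tolerance is, by construction of $\eta$, a surrogate for the unknown ambient quantity $\widetilde{\opt}$; the inequality $\widetilde{\opt} \leq \opt(\X)$ via orthogonal projection is what closes this gap, after which the rest is bookkeeping.
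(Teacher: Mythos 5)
Your proposal is correct and follows the same overall architecture as the paper's proof: Lemma \ref{Krylov} for the spectral-norm event $\|\widetilde{\X}_{-p}\|_2\leq 2\sigma_{p+1}(\X)$, Theorem \ref{main} applied to $\widetilde{\X}$ for the cardinality bound and the Hausdorff control, a Theorem \ref{thm:ach}-type argument for the reduced problem, and the column-stochastic triangle-inequality transfer back to $\X$, finished by a union bound. The one place you genuinely diverge is the middle step, and your version is the more careful one. The paper invokes Theorem \ref{thm:ach} in multiplicative form, writing $\|\widetilde{\X}-\widetilde{\X}\widetilde{\A}_\star\widetilde{\B}_\star\|_F\leq(1+\epsilon)\|\widetilde{\X}-\widetilde{\X}\widetilde{\A}\widetilde{\B}\|_F$; but the hypothesis of that theorem, applied to $\widetilde{\X}$, requires the Hausdorff tolerance to be $\epsilon\cdot\opt(\widetilde{\X})$, whereas the calibration of $\eta$ in \eqref{mb} only guarantees $\epsilon\cdot\opt(\X)\geq\epsilon\cdot\opt(\widetilde{\X})$, which is the wrong direction for a literal citation. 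Your additive decoupling $\widetilde{\opt}_T\leq\opt(\widetilde{\X})+\tau$ (which is what the Cauchy--Schwarz step in the proof of Theorem \ref{thm:ach} actually delivers), combined with the contraction $\opt(\widetilde{\X})\leq\opt(\X)$ from $\widetilde{\X}=\LL^T\X$ with $\LL$ orthonormal, closes this gap cleanly and even yields the slightly sharper bound $(1+\epsilon)\opt(\X)+4\sigma_{p+1}(\X)$ before relaxing to the stated form; the paper instead re-expands $\|\widetilde{\X}-\widetilde{\X}\A_\star\B_\star\|_F$ through $\widetilde{\X}_p=\X-\widetilde{\X}_{-p}$ and absorbs the extra terms into the $8\sigma_{p+1}(\X)$ constant. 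Your explicit observation that $\max_i\|\widetilde{x}_i\|_2\leq\max_i\|x_i\|_2$ under the orthogonal projection is also a detail the paper leaves implicit when calibrating $\eta$.
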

	
	\begin{remark}
		According to Theorem \ref{time1} and Theorem \ref{time2}, the computational complexity of data dimensionality reduction (step 1 to step 5) and representation cardinality reduction (step 6) is $\mathcal O(dNp\log N +dp^2\log^2 N + Np\log N\min\{N, p\log N\})$ and $\mathcal O(MNp+N\log N) = \mathcal O\left((q^2\epsilon^{-2(p-1)}+\Delta^{-2} + N)\log N\right)$, respectively. 
		With probability at least $1-2\delta$, step 6 solves the reduced problem which has data dimension $p$ and representation cardinality $|T|\leq \max\{p+1, q\}$. 
		Thus, the overall complexity for Algorithm \ref{alg:AAA} is small if both $p$ and $q$ are small and $\Delta$ is away from $0$. 
		This corresponds to the scenario where $\X$ is approximately low-rank and has most of the curvature concentrated on a small subset.   
	\end{remark}
	
	\begin{proof}[Proof of Theorem \ref{thm:last}]
		Let $(\A_\star, \B_\star)$ and $(\widetilde{\A}, \widetilde{\B})$ be solutions to \eqref{1} and 
		\begin{align}
			\min_{\A\in\R_{\cs}^{N\times k},\B\in\R_\cs^{k\times N}}\frac{1}{\sqrt{N}}\|\widetilde{\X}-\widetilde{\X}\A\B\|_F,
		\end{align}
		respectively. 
		Under the assumptions on $\eta$ and $M$, Theorem \ref{thm:ach} and Theorem \ref{main} together imply that with probability at least $1-\delta$, $|T|\leq\max\{p+1, q\}$ and 
		\begin{align}
			\|\widetilde{\X}-\widetilde{\X}\widetilde{\A}_\star\widetilde{\B}_\star\|_F\leq (1+\epsilon)\|\widetilde{\X}-\widetilde{\X}\widetilde{\A}\widetilde{\B}\|_F\leq (1+\epsilon)\|\widetilde{\X}-\widetilde{\X}\A_\star\B_\star\|_F. \label{f1}
		\end{align}
		Let $\widetilde{\X}_p = \widetilde{\U}_p\widetilde{\X}$ and $\widetilde{\X}_p=\X-\widetilde{\X}_p$, where $\widetilde{\U}_p$ is the left singular vector matrix of the low-rank approximation given by Algorithm \ref{alg:RK}. 
		For $s$ satisfying \eqref{mb}, it follows from Lemma \ref{Krylov} that with probability at least $1-\delta$, 
		\begin{align}
			\|\widetilde{\X}_{-p}\|_2 = \|\X-\widetilde{\X}_p\|_2\leq 2\|\X-\X_p\|_2 = 2\sigma_{p+1}(\X),\label{f2}
		\end{align}
		where $\X_p$ is the best rank-$p$ approximation for $\X$. 
		Thus, both \eqref{f1} and \eqref{f2} hold with probability $1-2\delta$. 
		Conditioning on \eqref{f1} and \eqref{f2}, 
		the rest of the proof is similar to the computation in \eqref{etc}:
		\begin{align*}
			\|\X-\X\widetilde{\A}_\star\widetilde{\B}_\star\|_F &\leq  \|\widetilde{\X}_p-\widetilde{\X}_p\widetilde{\A}_\star\widetilde{\B}_\star\|_F + \|\widetilde{\X}_{-p}-\widetilde{\X}_{-p}\widetilde{\A}_\star\widetilde{\B}_\star\|_F\nonumber\\
			& =  \|\widetilde{\X}-\widetilde{\X}\widetilde{\A}_\star\widetilde{\B}_\star\|_F + \|\widetilde{\X}_{-p}-\widetilde{\X}_{-p}\widetilde{\A}_\star\widetilde{\B}_\star\|_F\nonumber\\
			&\stackrel{\eqref{f1}}{\leq}  (1+\epsilon)\|\widetilde{\X}-\widetilde{\X}\A_\star\B_\star\|_F + \|\widetilde{\X}_{-p}-\widetilde{\X}_{-p}\widetilde{\A}_\star\widetilde{\B}_\star\|_F\nonumber\\
			&= (1+\epsilon)\|\widetilde{\X}_p-\widetilde{\X}_p\A_\star\B_\star\|_F + \|\widetilde{\X}_{-p}-\widetilde{\X}_{-p}\widetilde{\A}_\star\widetilde{\B}_\star\|_F\nonumber\\
			&\leq (1+\epsilon)\|\X-\X\A_\star\B_\star\|_F + (1+\epsilon)\|\widetilde{\X}_{-p}-\widetilde{\X}_{-p}\A_\star\B_\star\|_F + \|\widetilde{\X}_{-p}-\widetilde{\X}_{-p}\widetilde{\A}_\star\widetilde{\B}_\star\|_F\\
			&\stackrel{\eqref{>>>}}{\leq}(1+\epsilon)\left(\|\X-\X\A_\star\B_\star\|_F + 2\|\widetilde{\X}_{-p}\|_F + 2\sqrt{N}\|\widetilde{\X}_{-p}\|_2\right)\\
			&\leq(1+\epsilon)\left(\|\X-\X\A_\star\B_\star\|_F + 4\sqrt{N}\|\widetilde{\X}_{-p}\|_2\right)\\
			&\stackrel{\eqref{f2}}{\leq}(1+\epsilon)\left(\|\X-\X\A_\star\B_\star\|_F + 8\sigma_{p+1}(\X)\sqrt{N}\right).
		\end{align*}
		Dividing both sides by $\sqrt{N}$ yields the desired result. 
	\end{proof}

	\section{Numerical experiments}\label{sec:num}
	
	In this section, we apply the proposed algorithm (Algorithm \ref{alg:AAA}) to compute the archetypes for three real datasets, including a time series dataset and two image datasets. 
	When implementing the alternating minimization algorithm for solving AA, we use the $k$-means to find an initial guess for the archetypes; the subproblems are solved using the existing package `\texttt{quadprog}' \cite{quadprog} in \texttt{R} \cite{R}. 
	The algorithm stops if the relative objective decrease falls below \texttt{1e-3}.  
	We will compare the computation time and accuracy of the following algorithms:
	\begin{itemize}
		\item (SVD-AA): Alternating minimization applied to the reduced singular value representation of $\X$ as in \eqref{svd-aa}, where truncation keeps $99.99\%$ of the variance of the data. SVD is implemented using the built-in function `\texttt{svd}' in \texttt{R}.   
		\item (AAA): Approximate archetypal analysis (Algorithm \ref{alg:AAA}), with $s = \lceil\log N\rceil$. 
		\item (archetypes): A function for archetypal analysis in the package \texttt{archetypes} in \texttt{R} \cite{aa1,aa2}, whose implementation is different from Algorithm \ref{alg:AM}. 
	\end{itemize}
	To ensure comparability of the results, we do not include other accelerated algorithms such as the active-subset solver \cite{chen2014fast} and the coreset approximation \cite{mair2019coresets}, 
	which have a different focus than our methods.  
	All reported results in this section were obtained on a Macbook Air with an M1 processor and 8GB of RAM. 
	
	%
	%
	
	\subsection{S\&P 500 cumulative log-returns}
	The Standard and Poor's 500 (S\&P 500) is a stock market index consisting of 500 large companies listed on stock exchanges in the United States. 
	It is one of the most commonly used equity indices to evaluate the financial market as well as the economy. 
	The companies that are selected for the S\&P 500 index are changing with time. 
	In this example, we consider a dataset comprised of $385$ companies that are currently in the S\&P 500 index by January 2022, with close price recorded from December 2011 to December 2021. 
	We compute for each column in $\X$ a $2515$-dimensional time series representing the cumulative log-return (CLR) of a company over ten years. 
	The CLR is calculated on a daily basis using the adjusted prices of stocks.
	Visualization of the dataset is given in the first plot in Figure \ref{fig:1}. 
	In the rest of the section, we assume that the CLR of each company in the S\&P 500 index can be decomposed with respect to a few distinct growth patterns that can be identified via AA. 
		
	\begin{figure}[htbp]
		\begin{center}
			\includegraphics[width=0.40\textwidth]{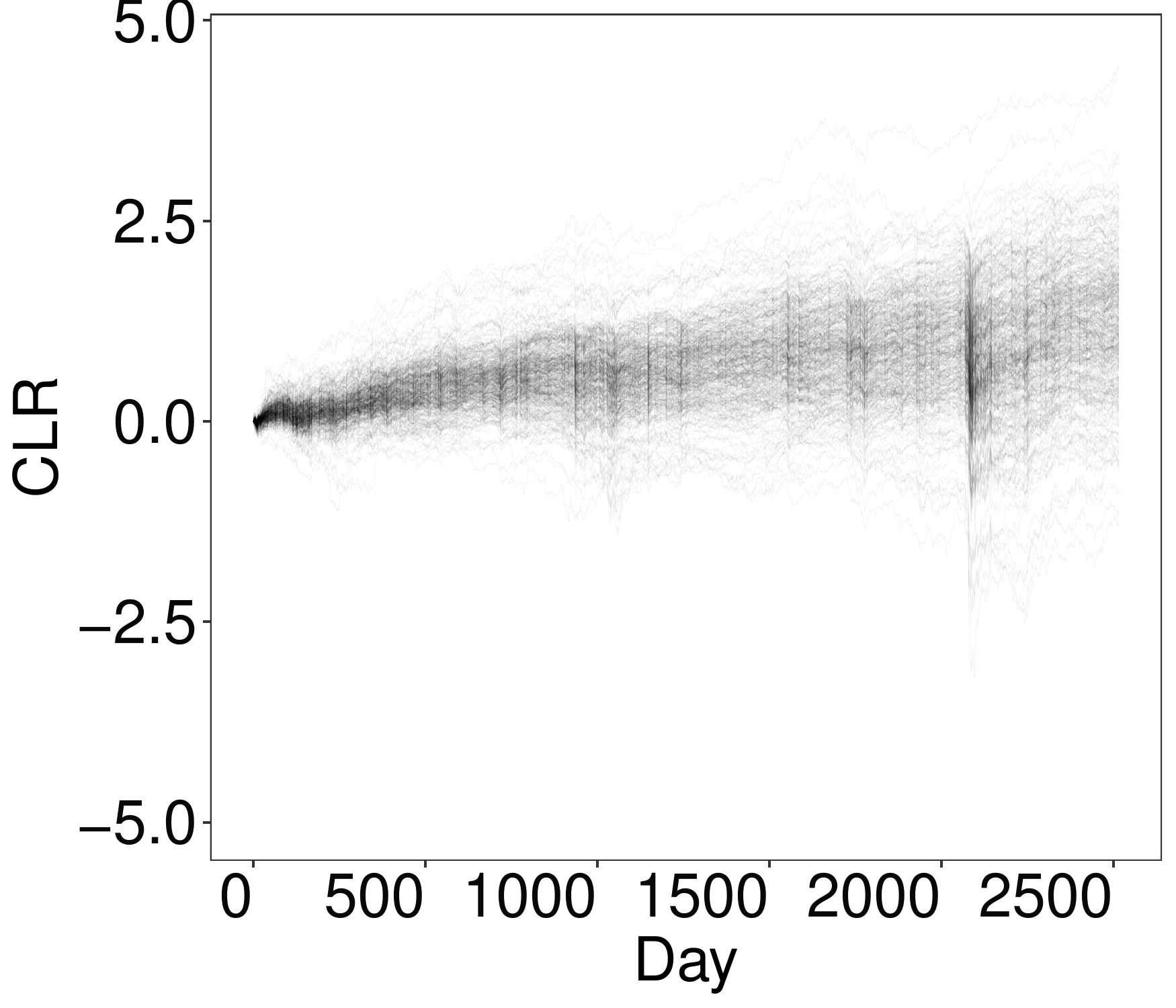}\hspace{2 cm}
			\includegraphics[width=0.40\textwidth]{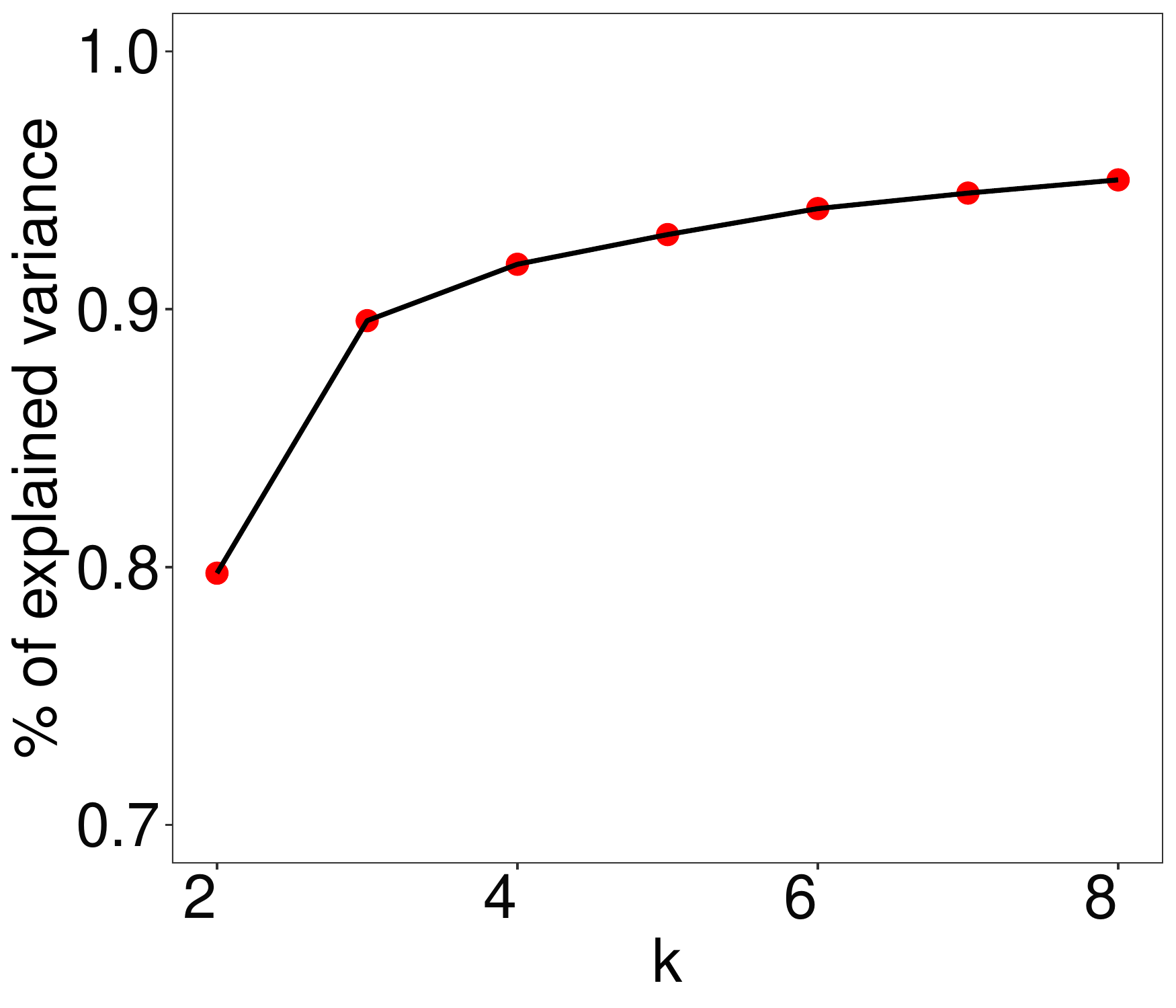}
			\caption{The CLR of $504$ S\&P 500 stocks from December 2011 to December 2021 (left). Variances of $\X$ explained by the $k$ archetypes identified by AA as a function of $k$ for $k=2,\cdots, 8$ (right).}
			\label{fig:1}
		\end{center}
	\end{figure}
	
To apply AA, we need to first determine the number of archetypes $k$. Like other unsupervised learning methods such as the $k$-means and PCA, there is no principled rule to find the correct number of $k$ for real-life datasets. 
	A more practical solution is to follow the heuristic ``elbow rule'' \cite{thorndike1953belongs} to choose $k$ approximately. In this case, we apply SVD-AA to find such a $k$. In particular, we plot the variance of the dataset explained by the archetypes given by SVD-AA as a function of $k$ (see Figure \ref{fig:1}) and choose $k$ to be the point where the curve starts to plateau, which is around $k=3$.
	
	Setting $k=3$, we apply SVD-AA, AAA, and archetypes to compute the archetypes for $\X$. 
	The parameters $p$, $M$ and $\eta$ in AAA are set as $20$, $10000$ and $0.003$ (so that $\eta/3 = 0.001$), respectively.
	Each experiment is repeated $100$ times, with the learned archetypes (in the first 10 experiments), the running times (elapsed time computed using the `\texttt{system.time()}' function in \texttt{R}) and residuals reported in Figure \ref{fig:2} and Figure \ref{fig:3}, respectively.
	
	\begin{figure}[htbp]
		\begin{center}
			\includegraphics[width=0.193\textwidth]{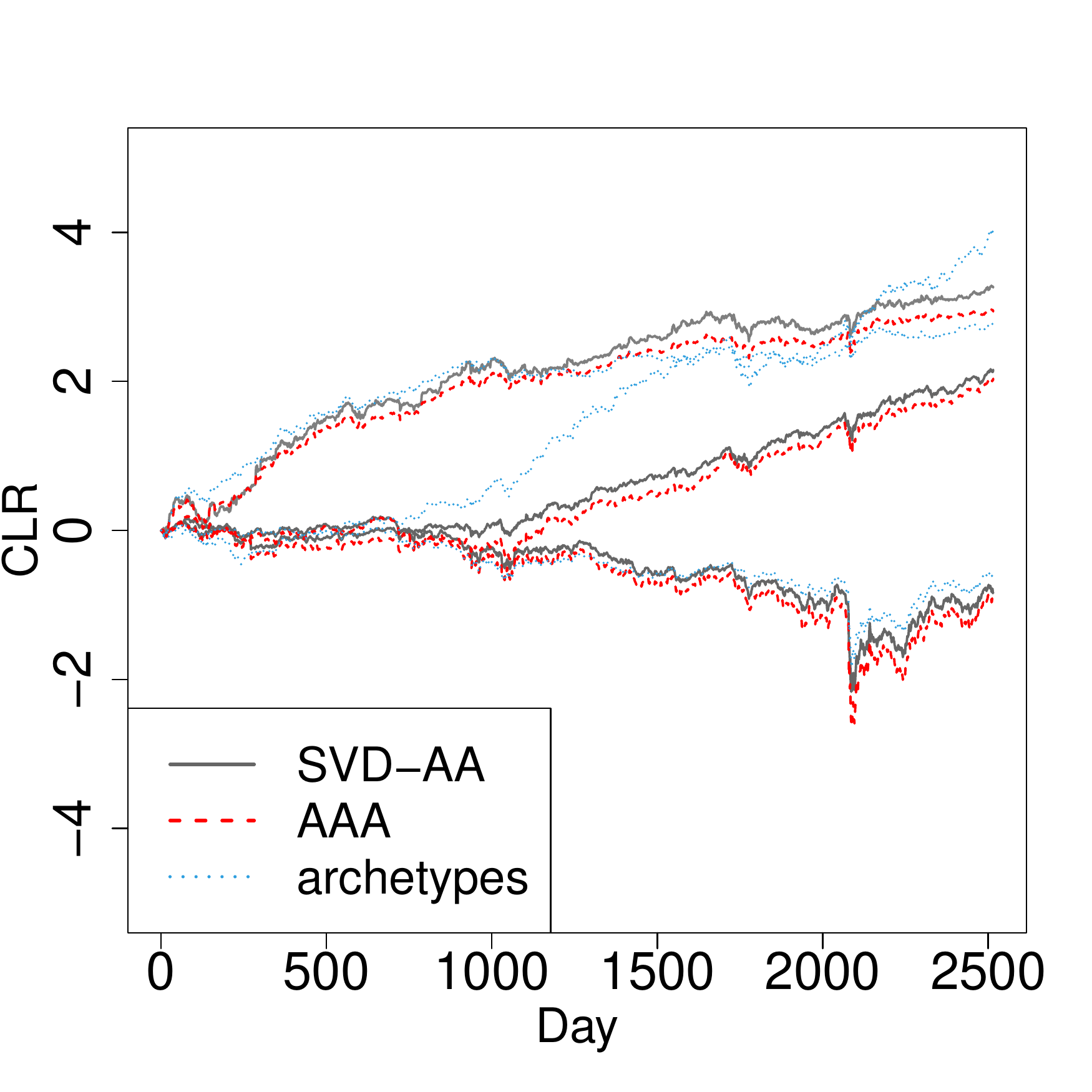}
			\includegraphics[width=0.193\textwidth]{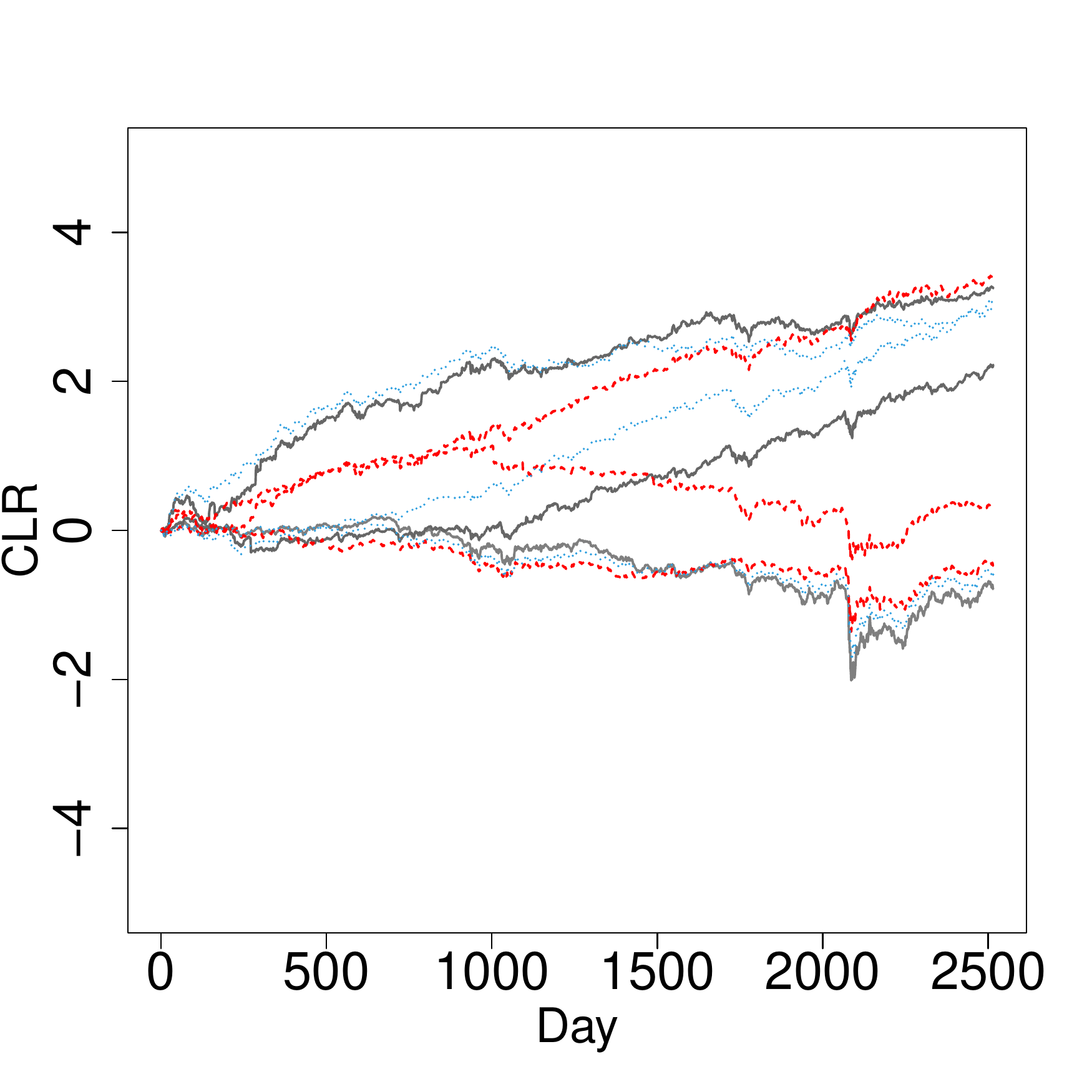}
			\includegraphics[width=0.193\textwidth]{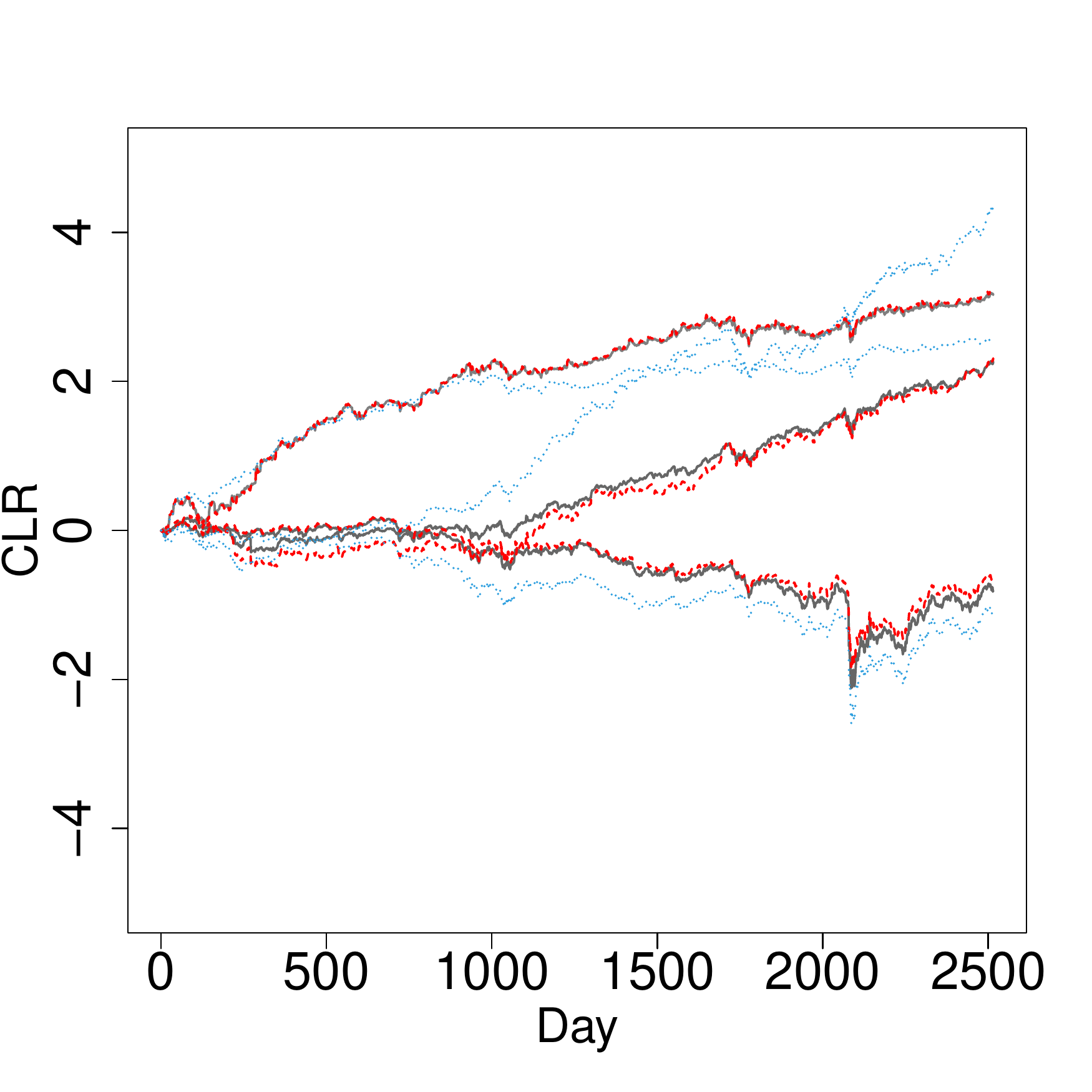}
			\includegraphics[width=0.193\textwidth]{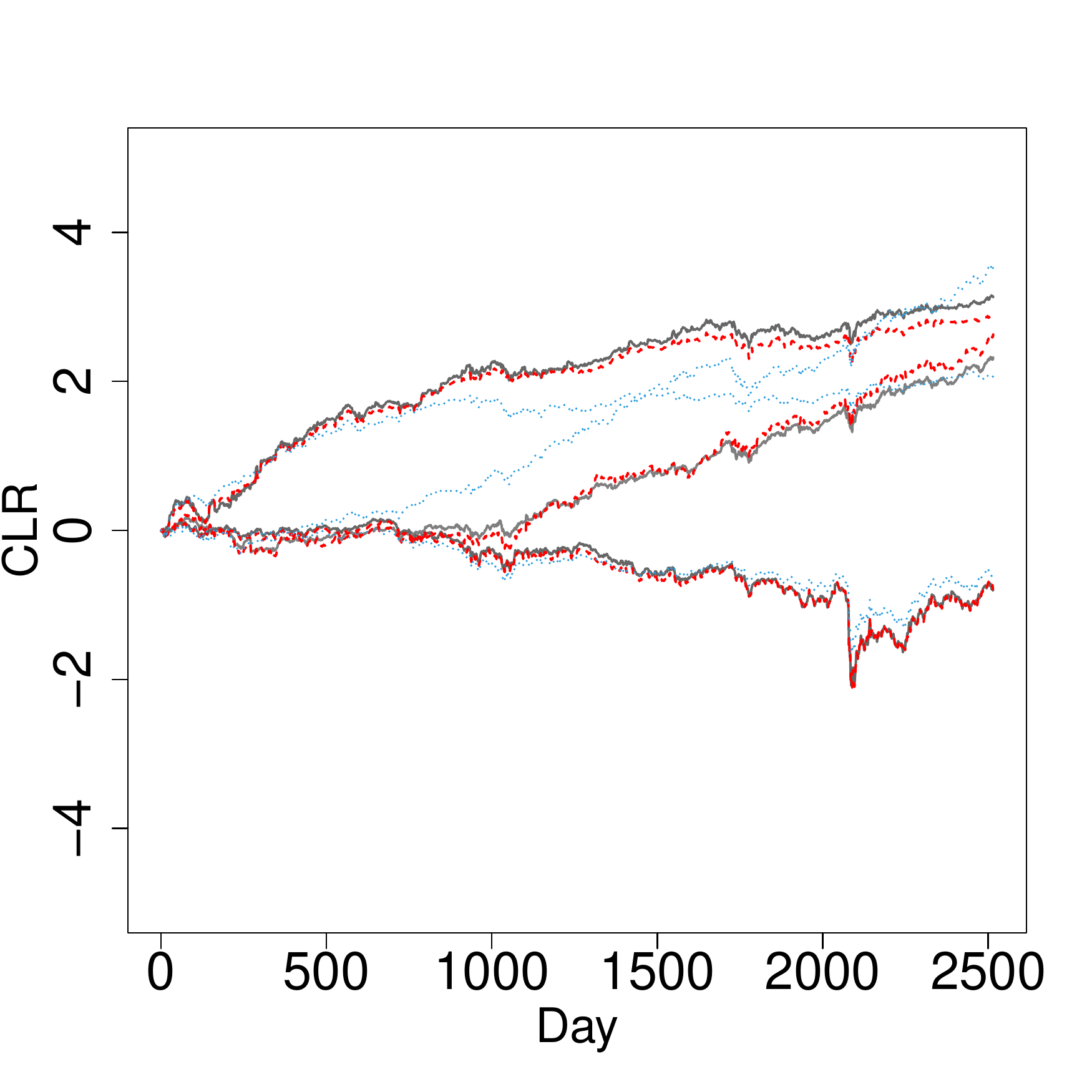}
			\includegraphics[width=0.193\textwidth]{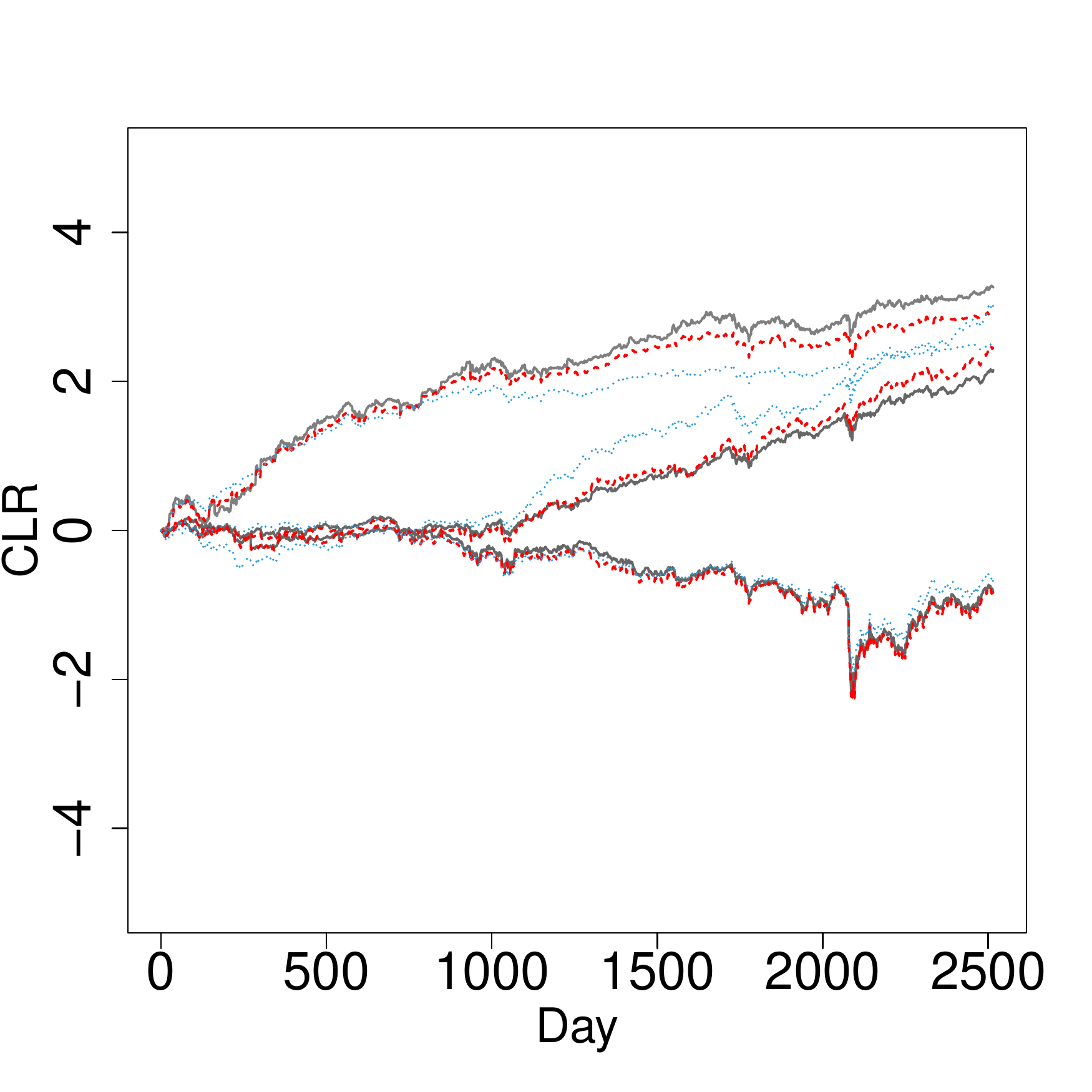}
			\includegraphics[width=0.193\textwidth]{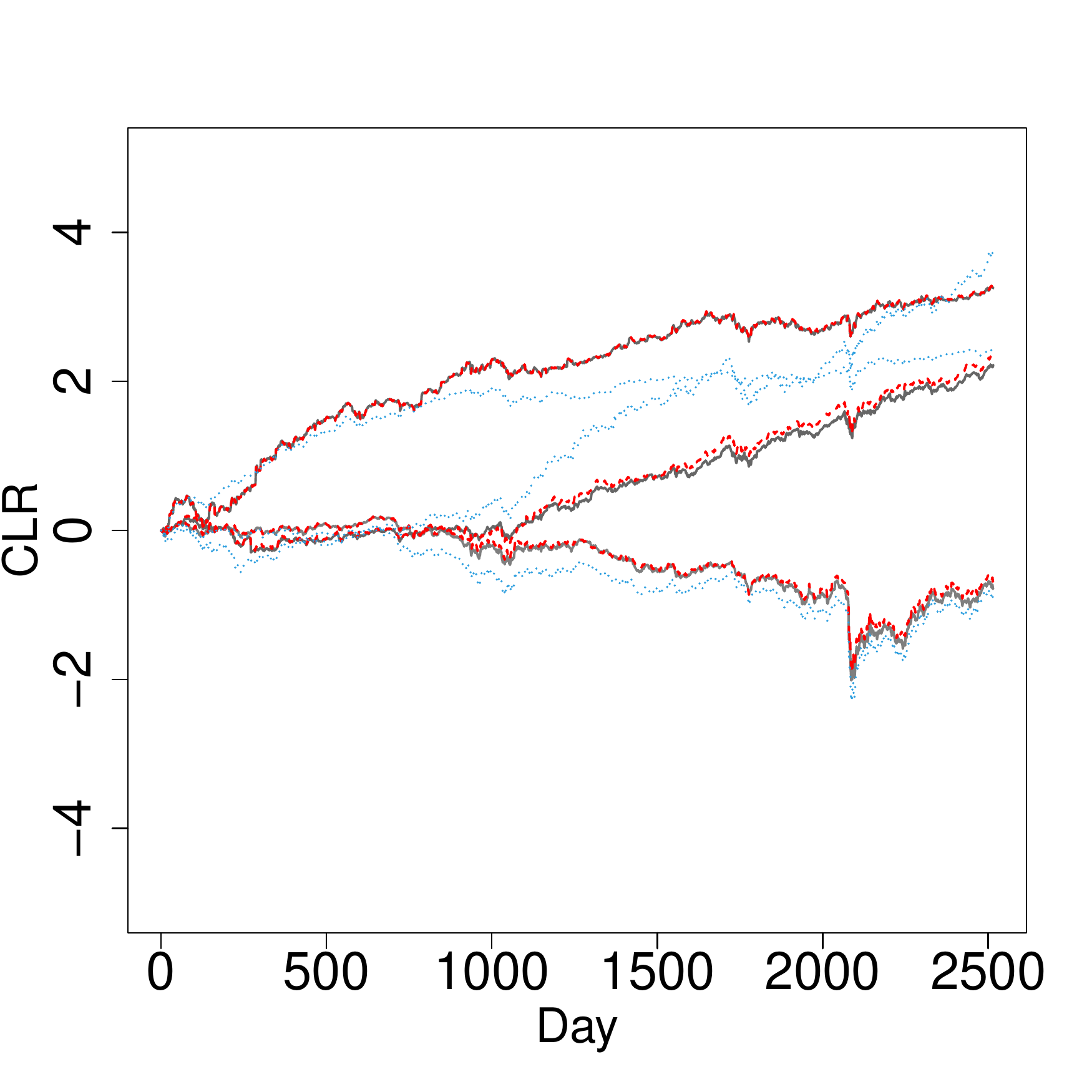}
			\includegraphics[width=0.193\textwidth]{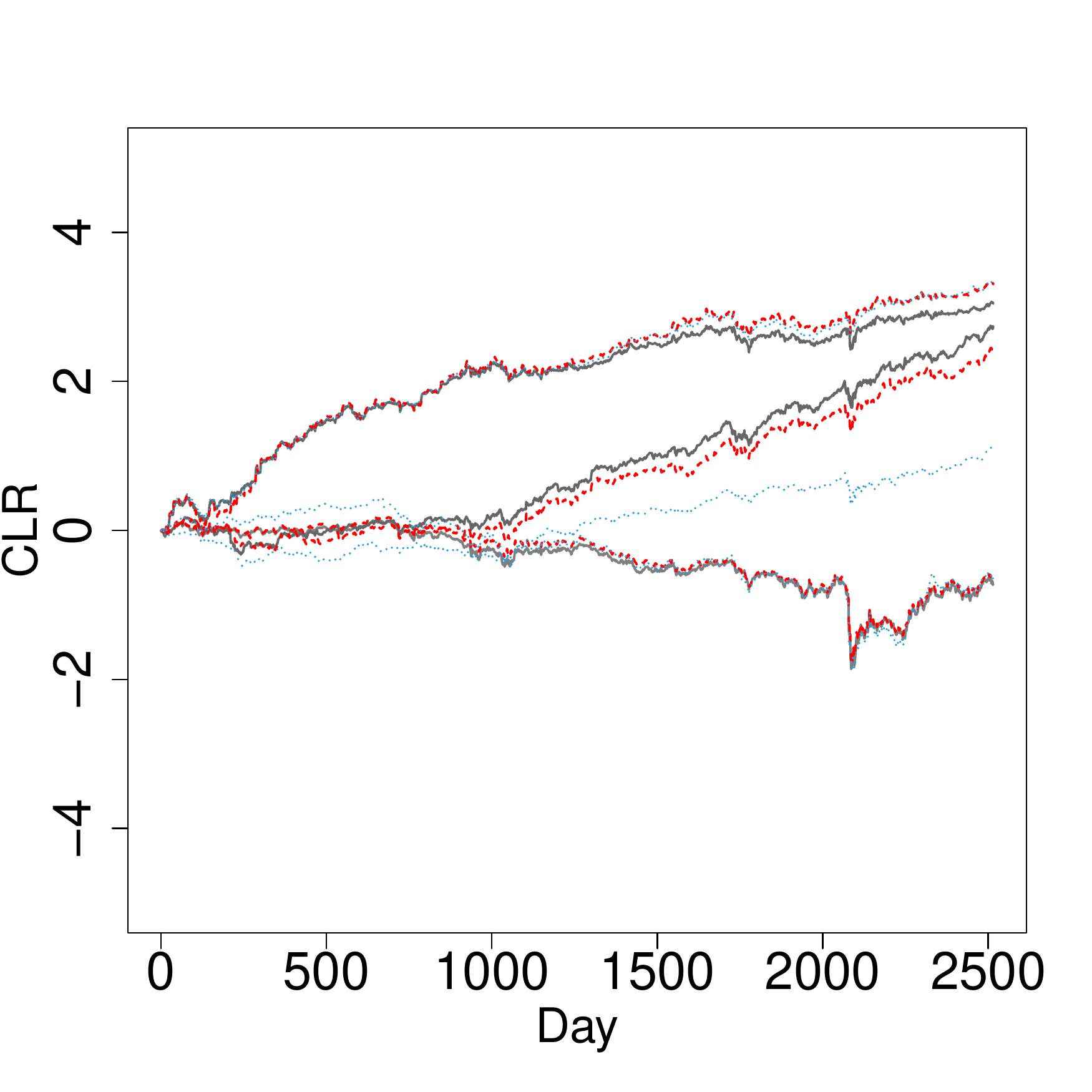}
			\includegraphics[width=0.193\textwidth]{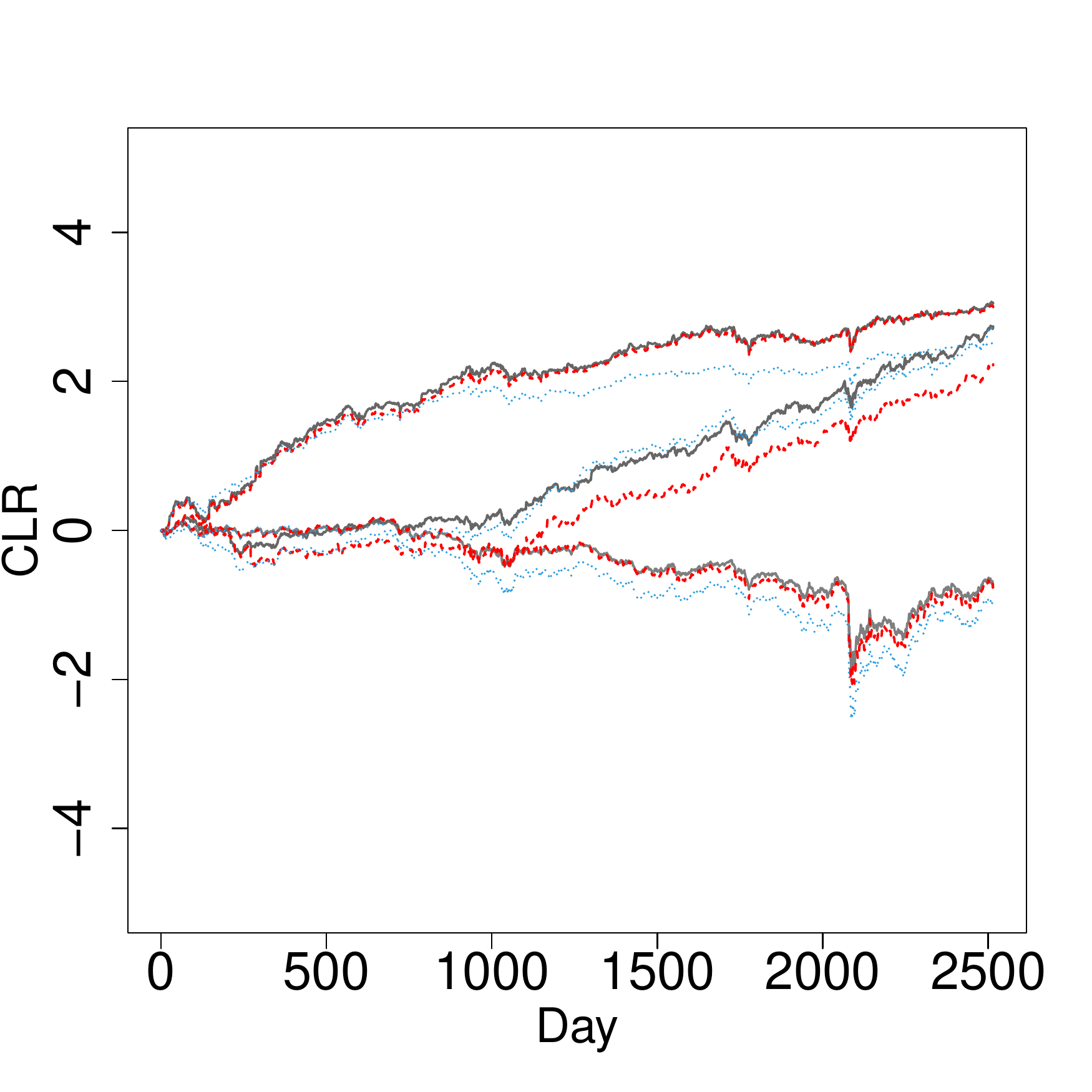}
			\includegraphics[width=0.193\textwidth]{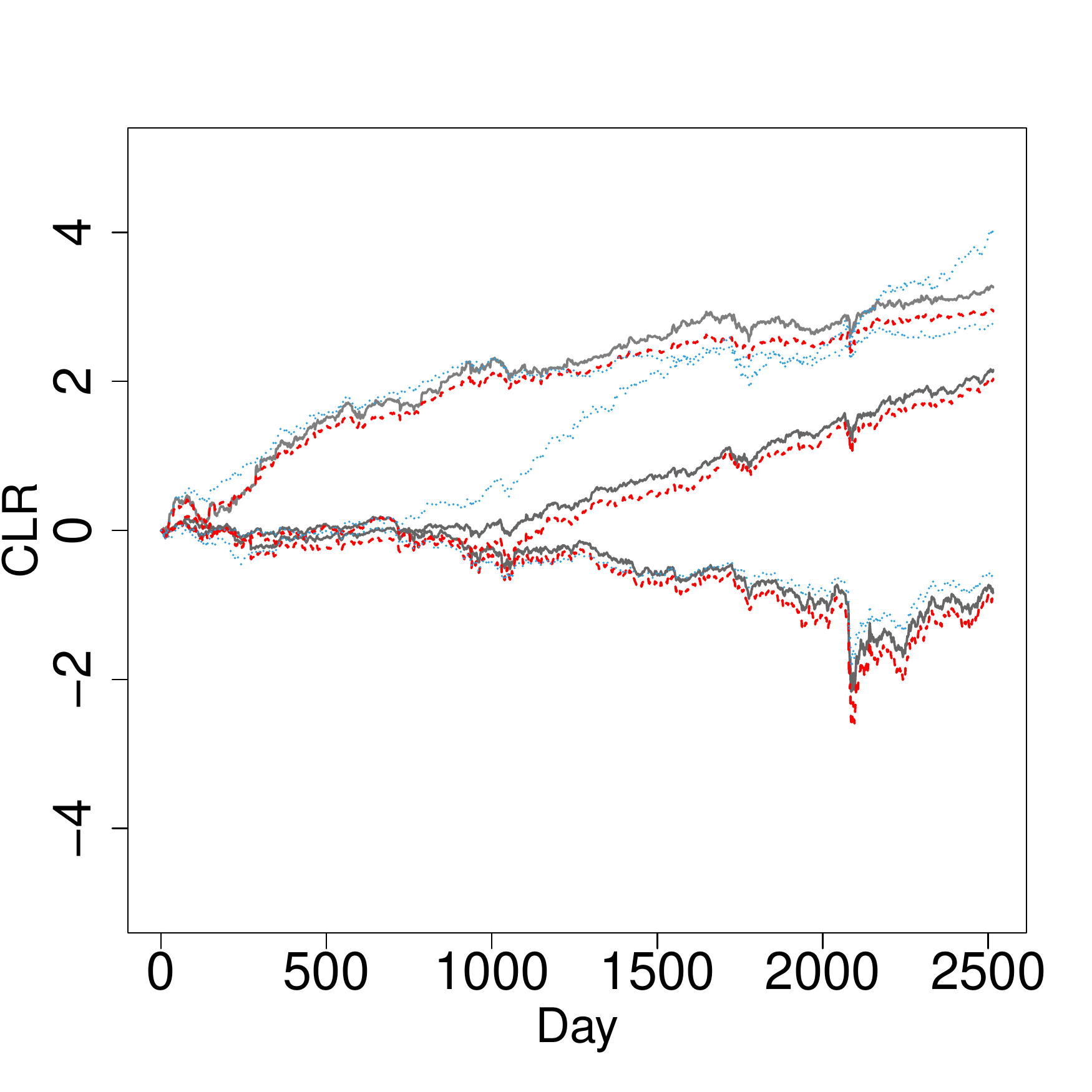}
			\includegraphics[width=0.193\textwidth]{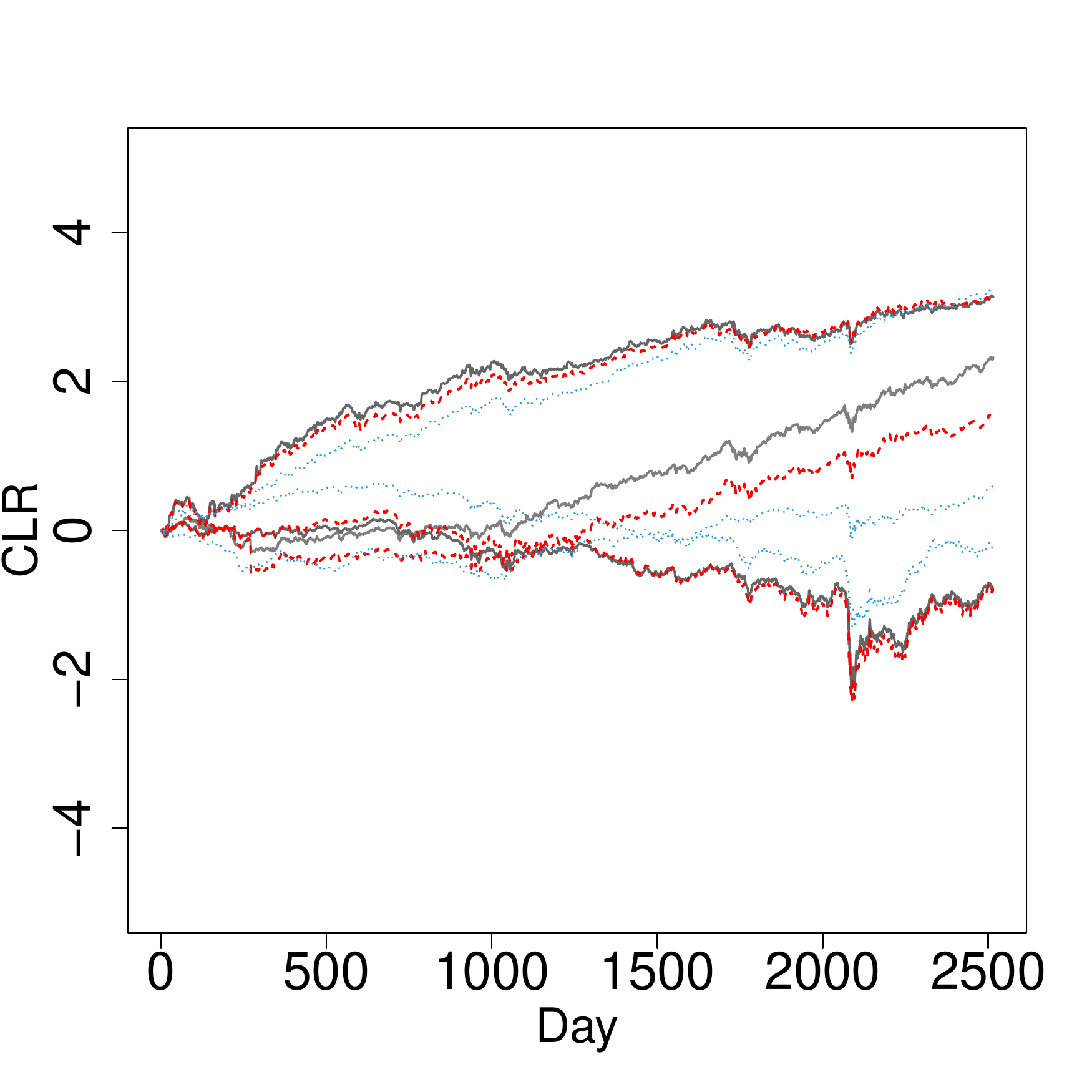}
			\caption{Instances of the computed archetypes by SVD-AA, AAA and archetypes in the first 10 experiments. }
			\label{fig:2}
		\end{center}
	\end{figure}
	
	\begin{figure}[htbp]
		\begin{center}
			\includegraphics[width=0.40\textwidth]{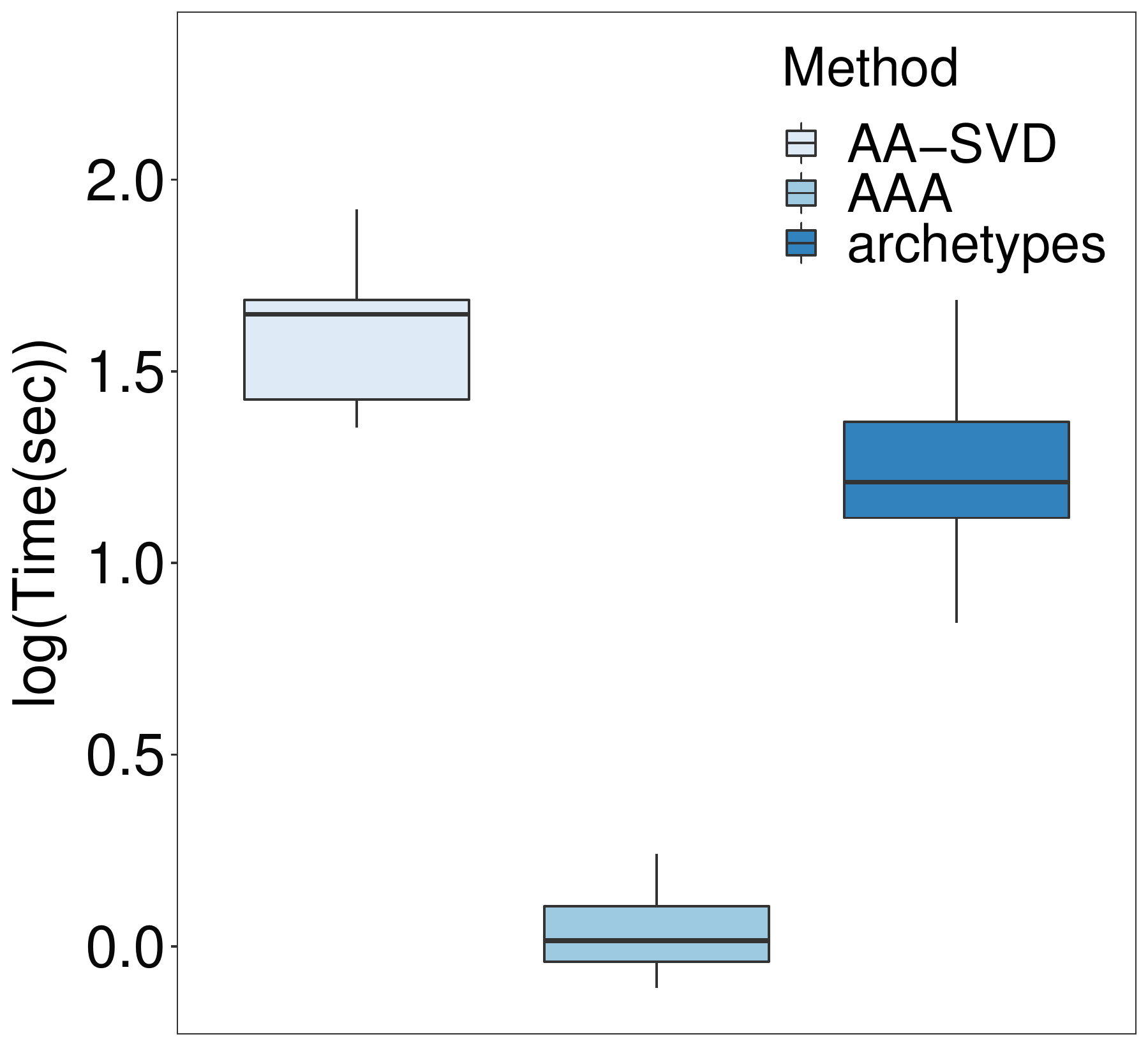}\hspace{2 cm}
			\includegraphics[width=0.40\textwidth]{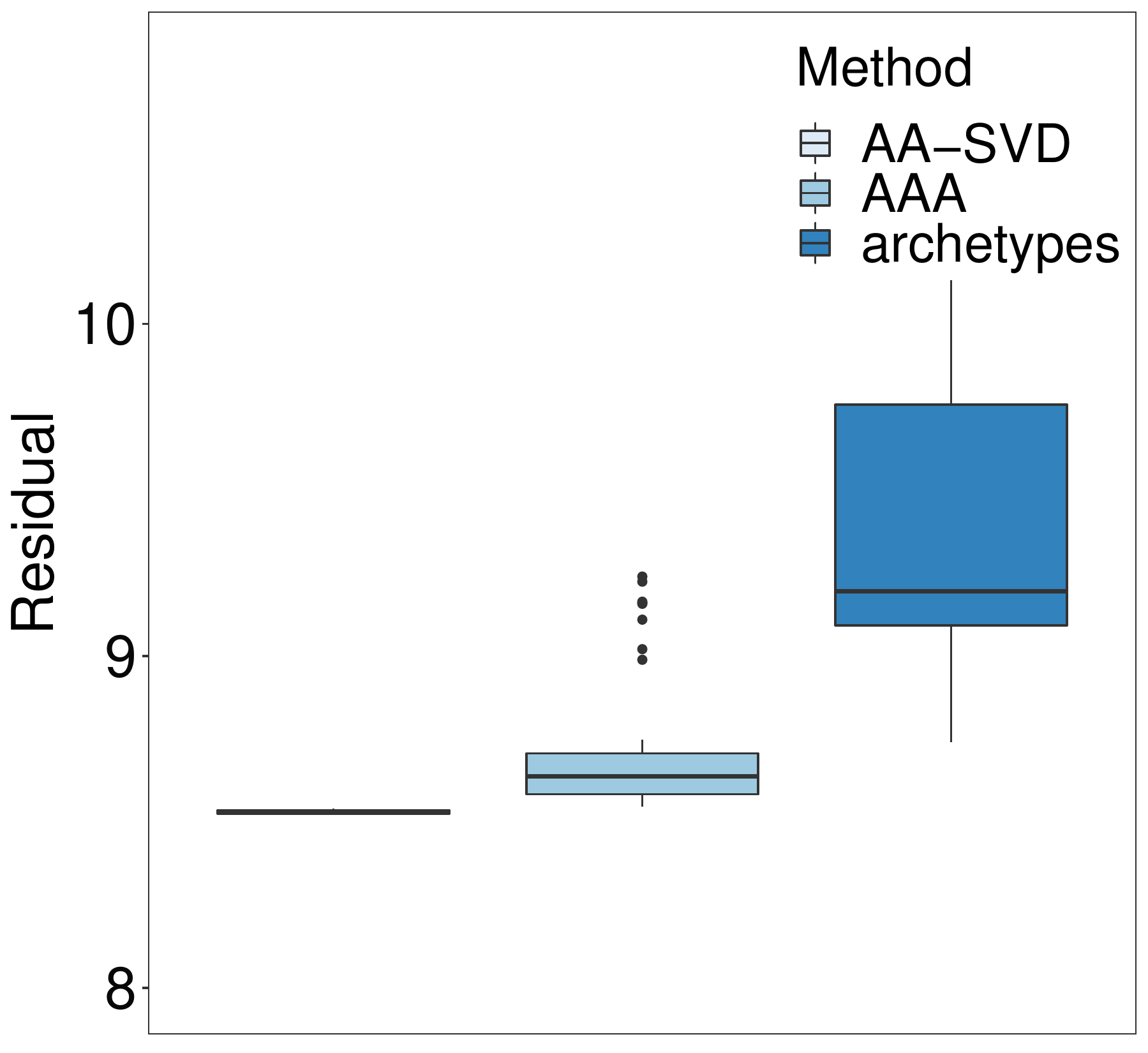}
			\caption{Boxplots of the running times (left) and residuals (right) of SVD-AA, AAA and archetypes in 100 experiments. }
			\label{fig:3}
		\end{center}
	\end{figure}
	
	\begin{figure}[htbp]
		\begin{center}
			\includegraphics[width=0.40\textwidth]{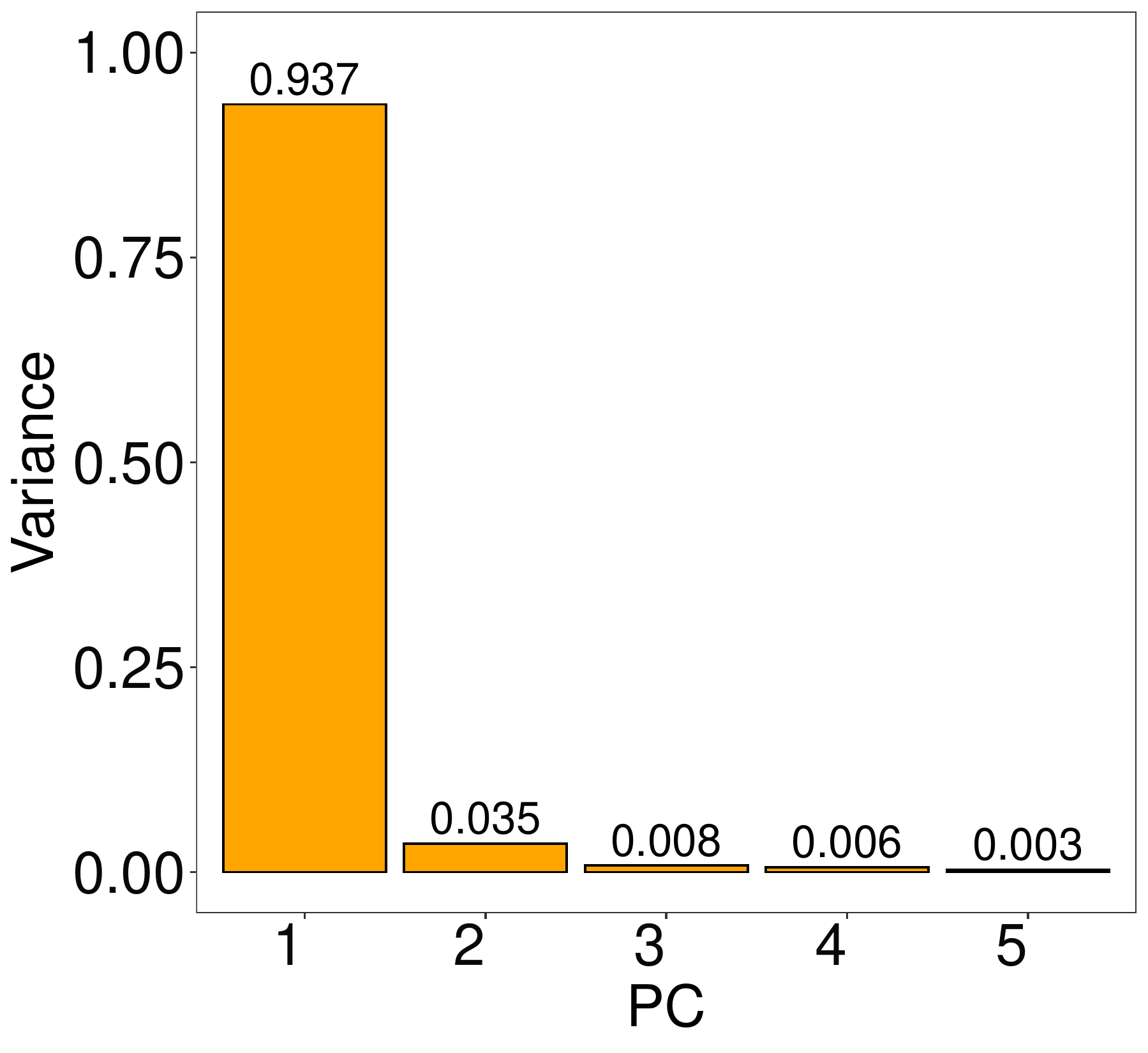}\hspace{2 cm} 
			\includegraphics[width=0.40\textwidth]{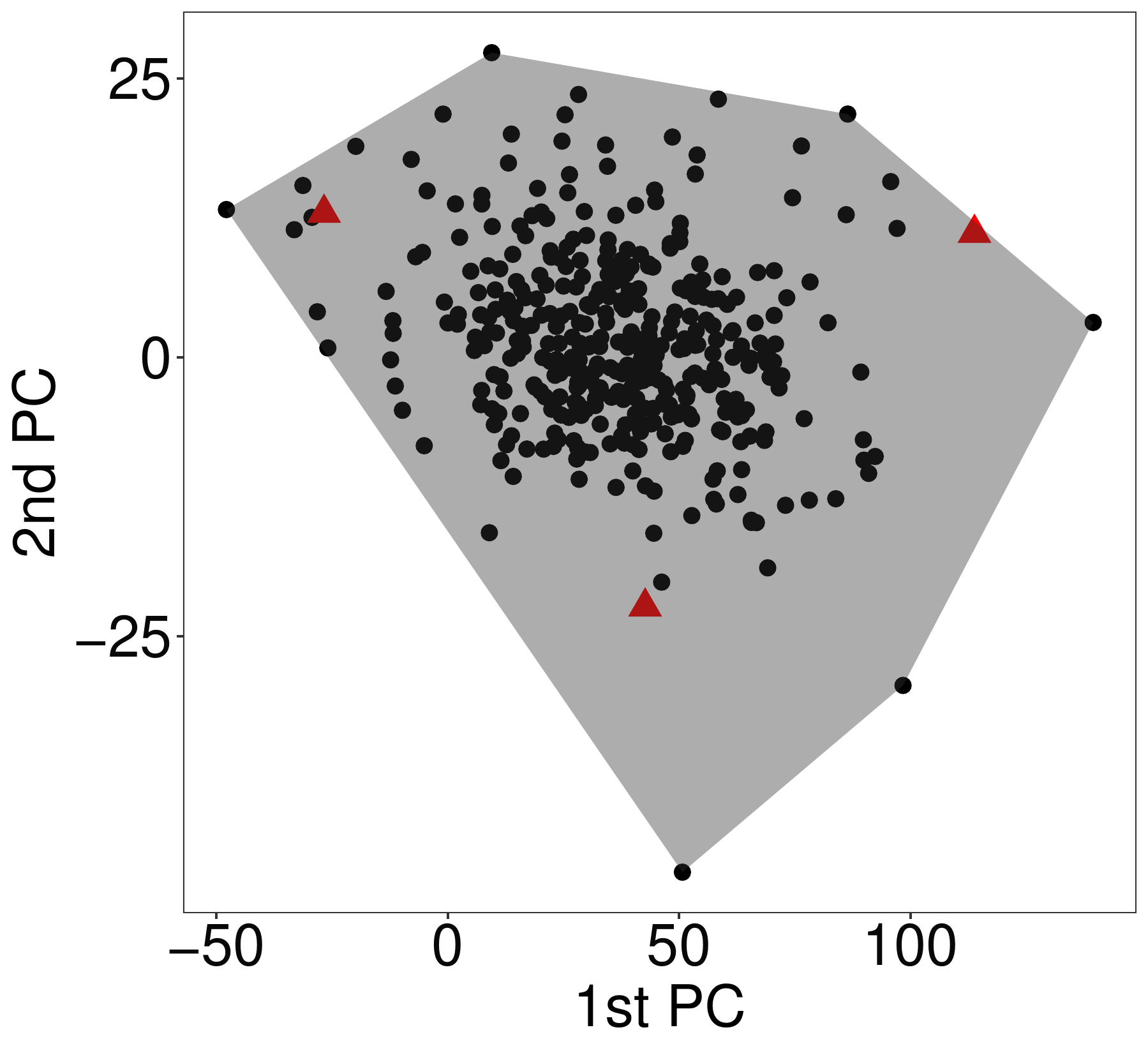}
			\caption{Variances explained by the first five principal components of $\X$ (left). Scatterplot of the reduced representation of $\X$ with respect to the first two left singular vectors (which account for $97\%$ of the variation of the dataset) and its convex hull. The red triangles are the reduced representation of the three archetypes (right). }
			\label{fig:4}
		\end{center}
	\end{figure}
	
	It can be seen from Figure \ref{fig:3} that SVD-AA, as expected, gives the best-computed archetypes in terms of the residual on average; however, its computation time is significantly longer than the other two methods. 
	The built-in function archetypes has the worst performance, and its computation time is between the other two methods. 
	The AAA, which first reduces the dimension of the dataset before applying the alternating minimization, achieves competitive results with SVD-AA (despite a few outliers) but takes much less time (more than $30$ times faster than SVD-AA). 
	This may be because $\X$ is essentially low-dimensional and admits a parsimonious approximation for its convex hull. A numerical justification for this argument can be seen from the spectral decay of the sample covariance matrix of $\X$ as well as the scatterplot of the reduced representation of $\X$ with respect to the first two principal components (PCs), as illustrated in Figure \ref{fig:4}.  
	
	To implement AAA, it is necessary to choose the input parameters in advance.
	The optimal choice for the parameters is problem-dependent and often there is no universal tuning strategy for it.
	The Krylov subspace parameter $s$ is set as $\lceil\log N\rceil$ deterministically.   
	We investigate the accuracy/running time dependence on $p, M$, and $\eta$.  
	In particular, we will use the same parameters as in the previous simulation.
	Whenever we test the dependence on one parameter, the other two are set fixed.   
	We will test $p, M$ and $\eta$ at three different values, respectively, i.e., $p = 10, 20, 30$, $M = 10^3, 10^4, 10^5$ and $\eta = 0.3, 0.03, 0.003$. The results are given in Figure \ref{fig:5}. 
	
	\begin{figure}[htbp]
		\begin{center}
			\includegraphics[width=0.40\textwidth]{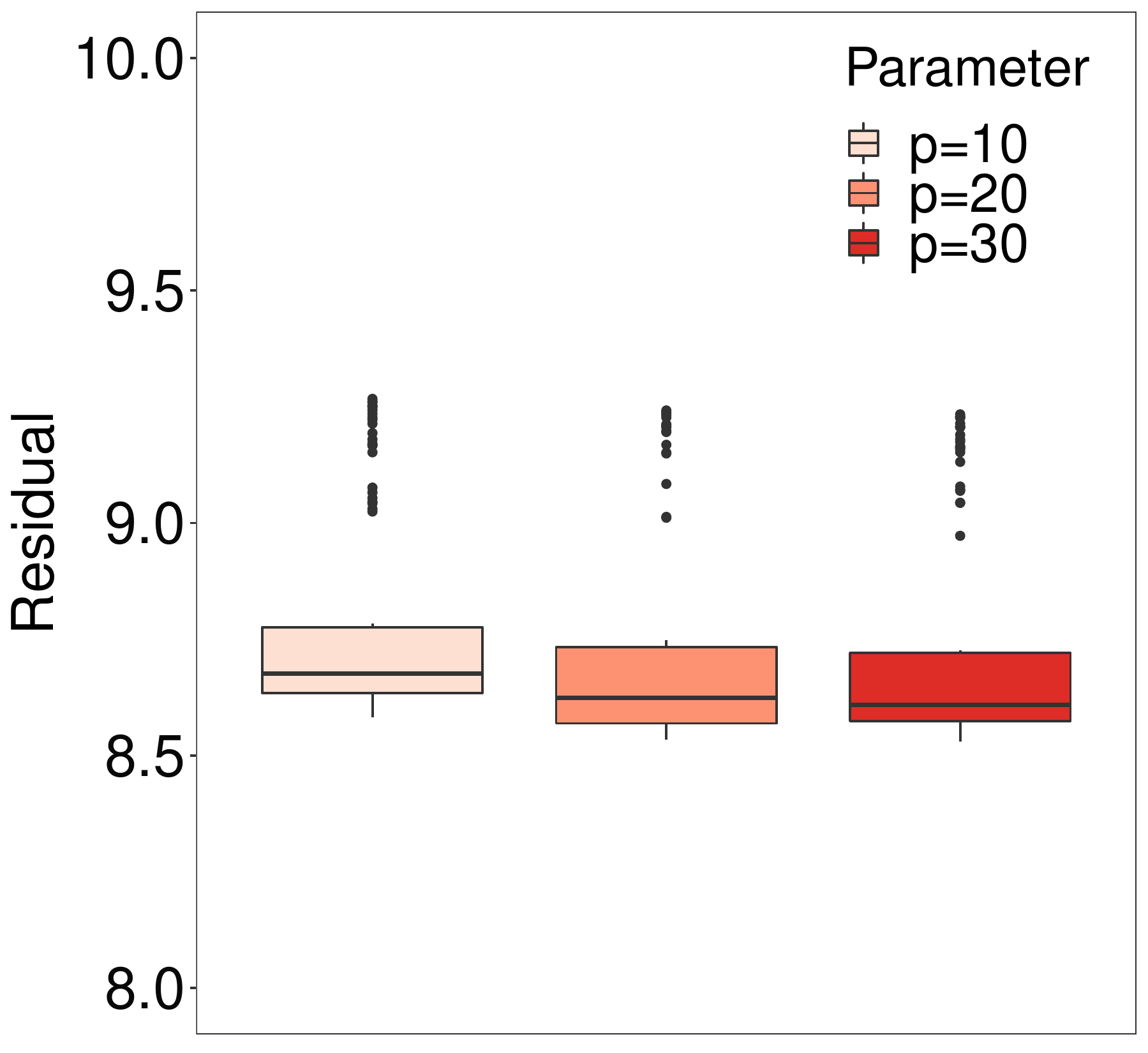}\hspace{2 cm}
			\includegraphics[width=0.40\textwidth]{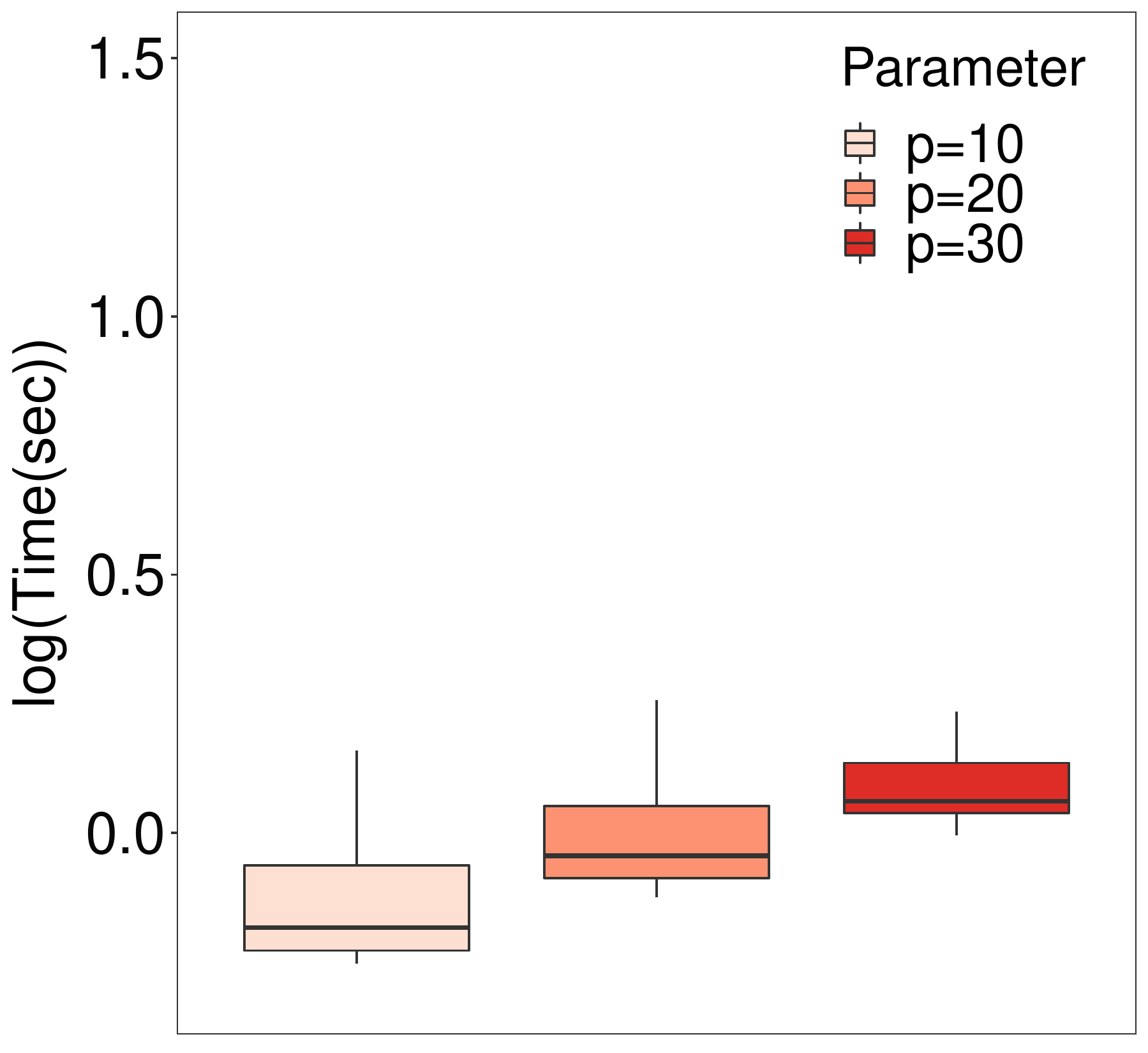}
			\vspace{0.4 cm}
			\includegraphics[width=0.40\textwidth]{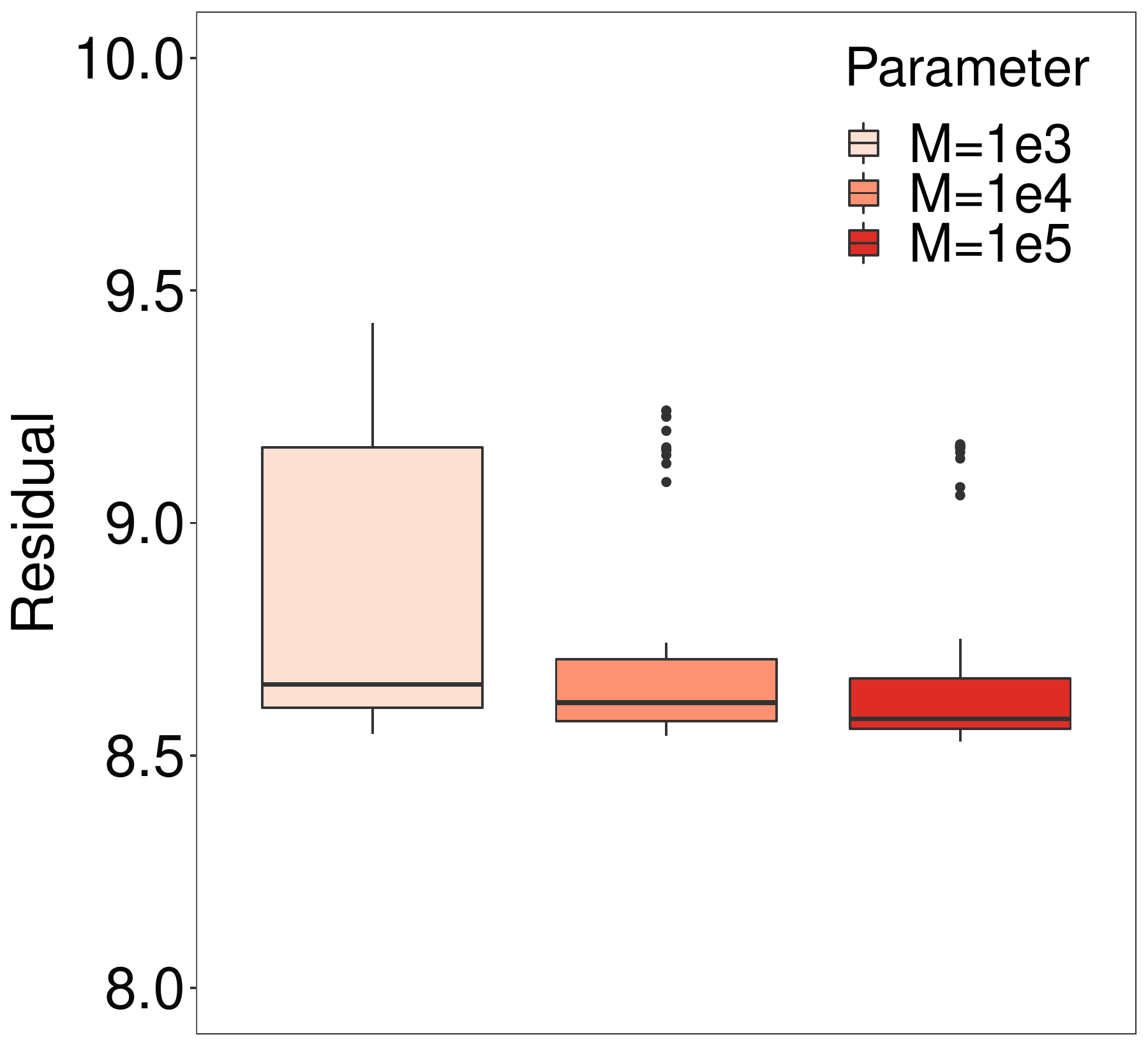}\hspace{2 cm}
			\includegraphics[width=0.40\textwidth]{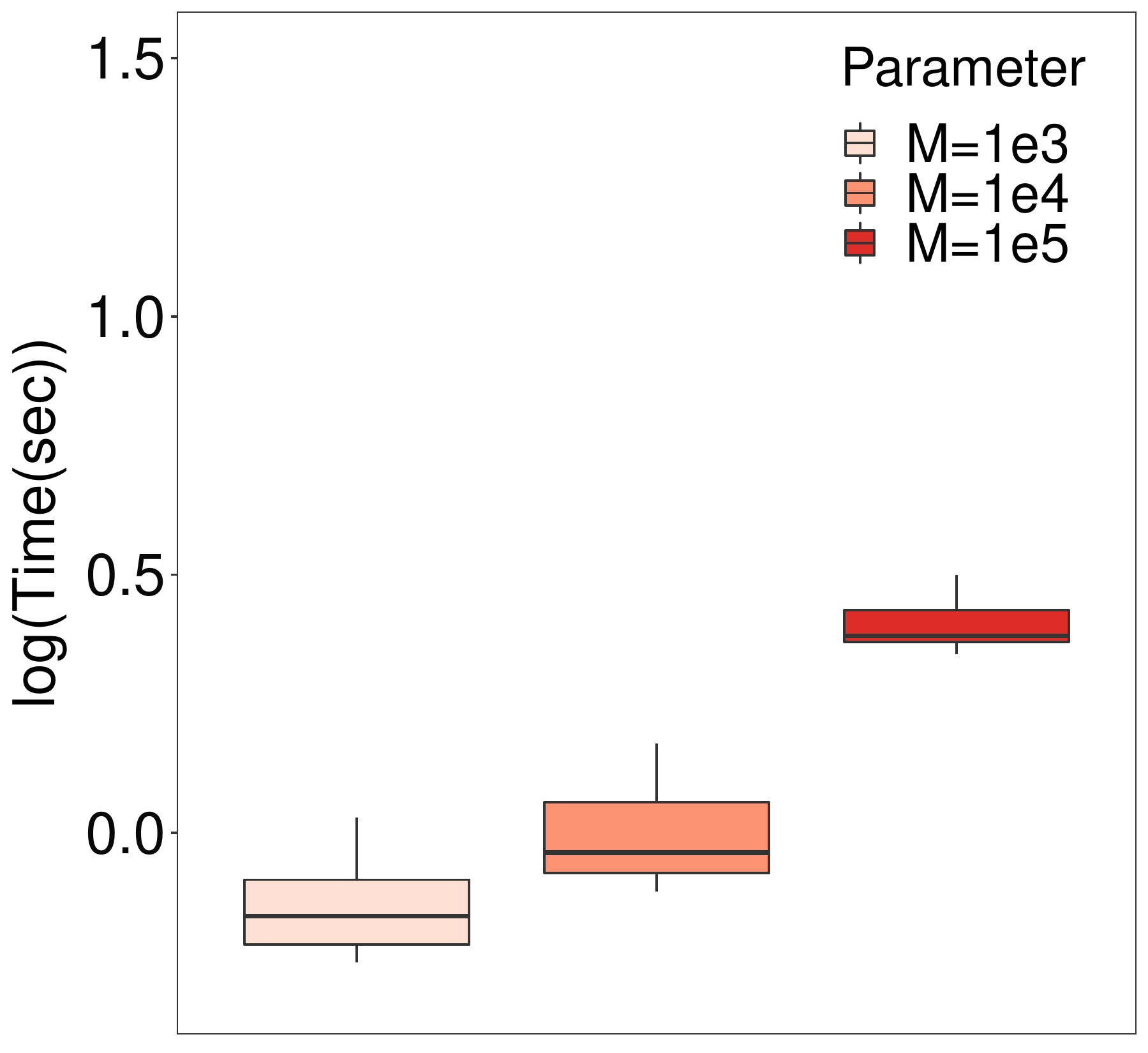}
			\vspace{0.4 cm}
			\includegraphics[width=0.40\textwidth]{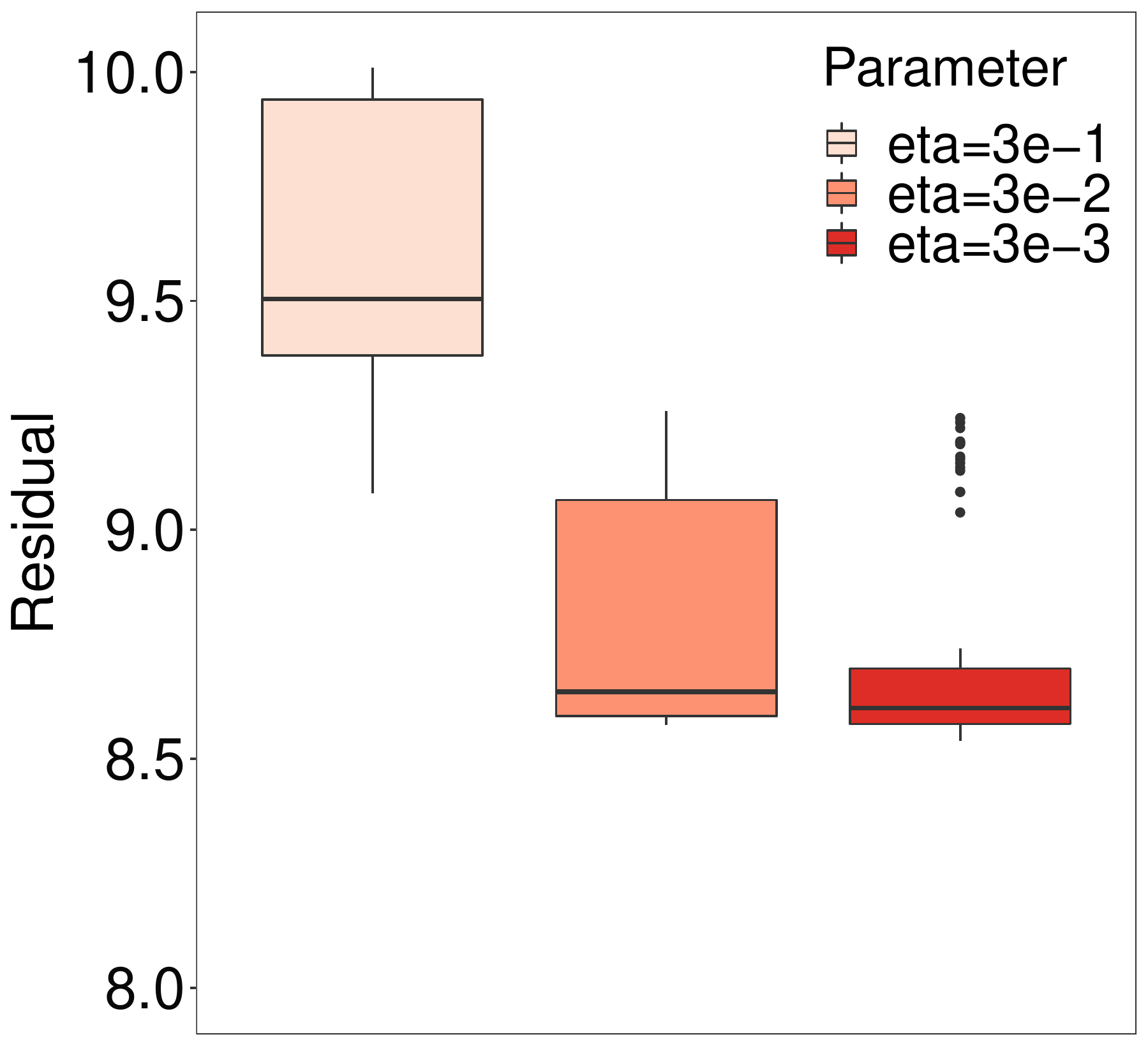}\hspace{2 cm}
			\includegraphics[width=0.40\textwidth]{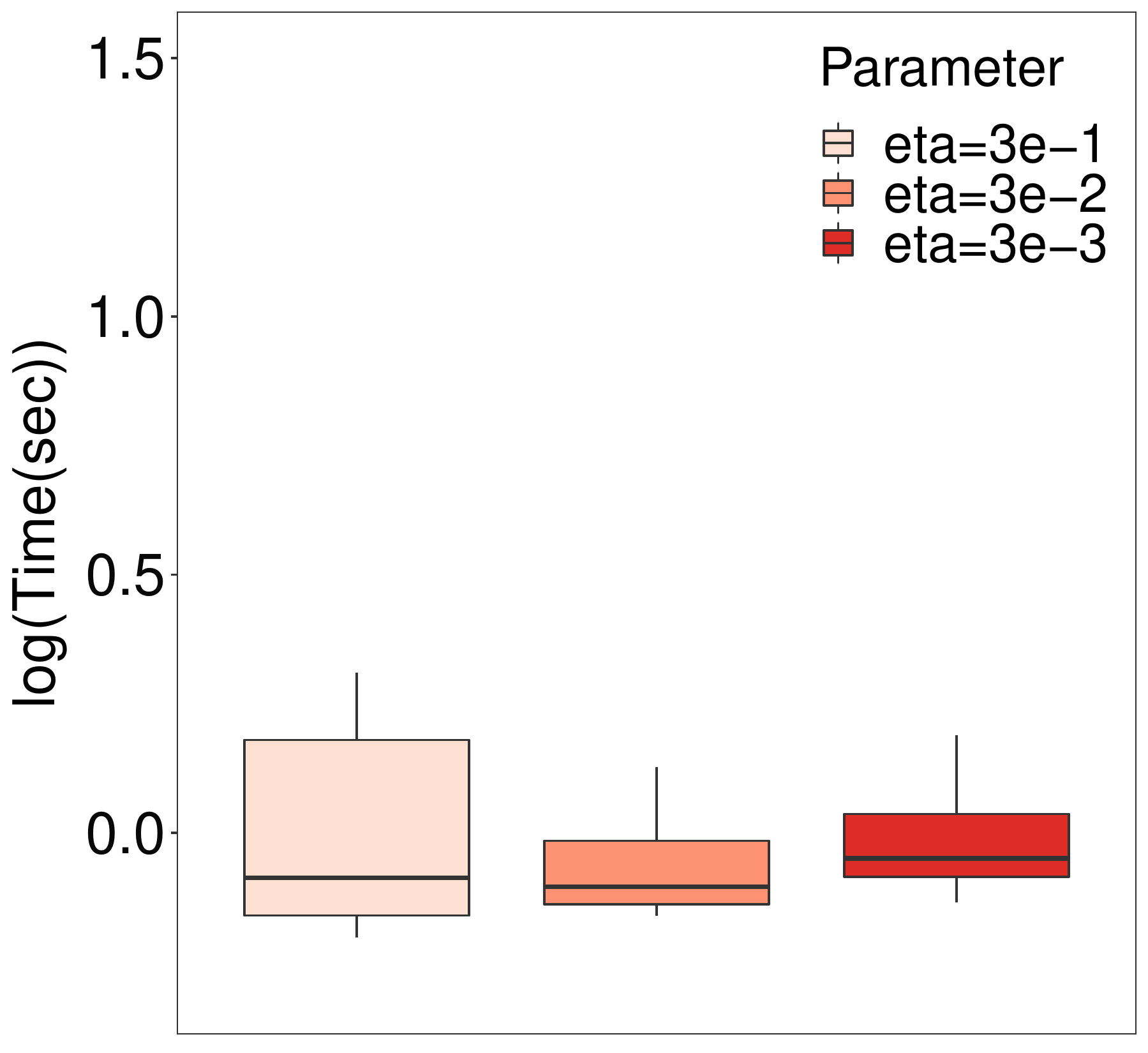}
			\caption{Accuracy/running time dependence of AAA algorithm for the S\&P 500 dataset with base parameters $p=20, M = 10^4$ and $\eta = 0.003$.}
			\label{fig:5}
		\end{center}
	\end{figure} 
	
	Figure \ref{fig:5} shows that for the S\&P 500 dataset, the accuracy of AAA has a strong dependence on $\eta$, which measures the missing proportion of curvature in the approximate convex hull construction. 
	The number of random projections $M$ mostly influences the running time while having only a mild impact on the accuracy when $p=20$. 
	The approximation rank $p$, as long as set reasonably large, is sufficient to give a good approximation result.  
	
	In this example, we compare the three archetypes with the same number of centers identified by the $k$-means; see the first plot in Figure \ref{fig:44}.
	It can be seen that the archetypal curves are visually more illustrative than the centers of the $k$-means, which share a similar growth pattern with differing slopes.     
	Indeed, the percentage of variance explained by AA is around $90\%$; the same number for the $k$-means and PCA are $64\%$ and $98\%$, respectively. 
	Visualization of the convex coefficients for each data point with respect to the three archetypes is given in the ternary plot in Figure \ref{fig:44}. 
	In this case, most of the data fall in the interior of the simplex, suggesting the S\&P dataset can be well summarized using a polytopic structure. 		
		\begin{figure}[htbp]
		\begin{center}
			\includegraphics[width=0.40\textwidth]{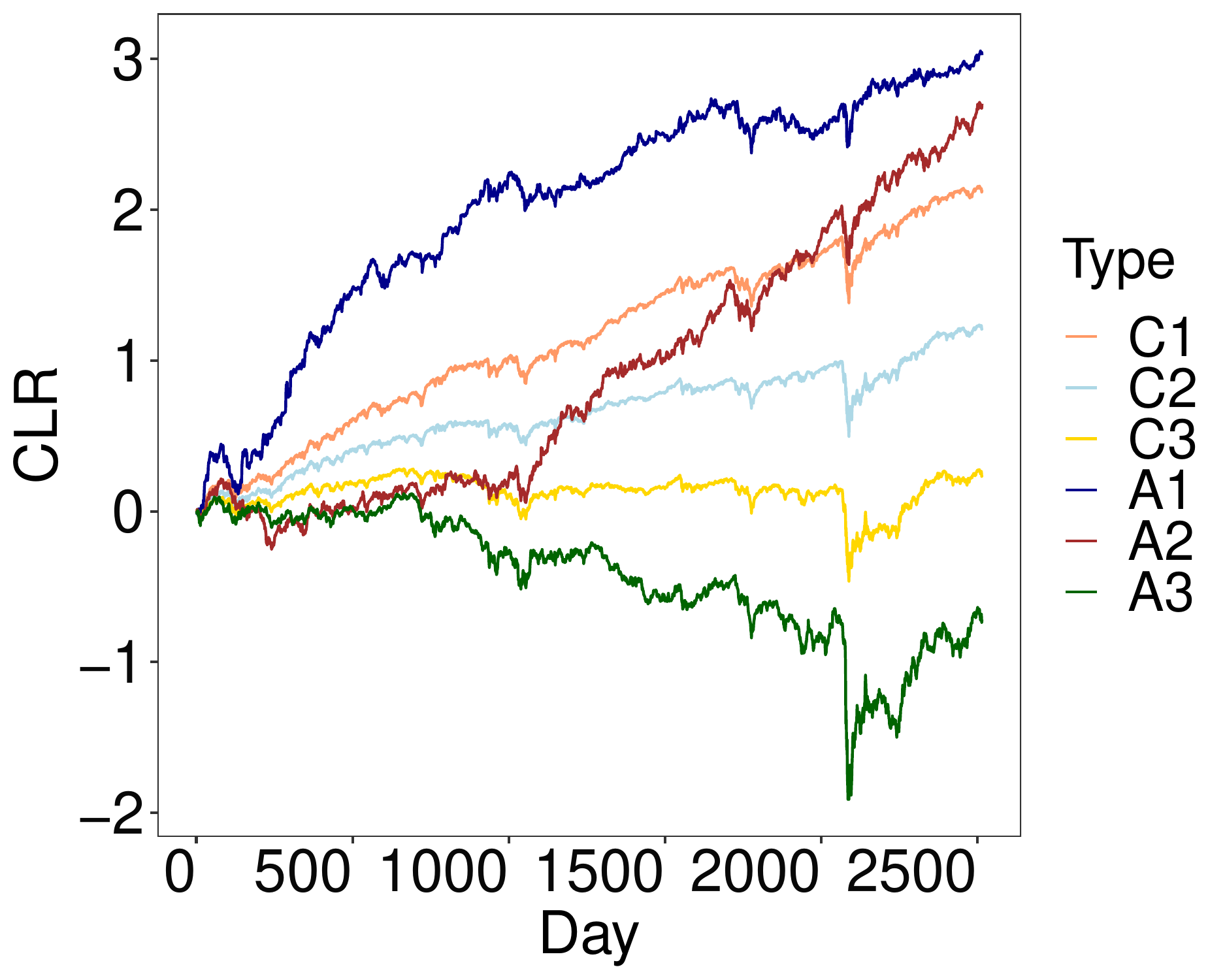}\hspace{2 cm} 
			\includegraphics[width=0.40\textwidth]{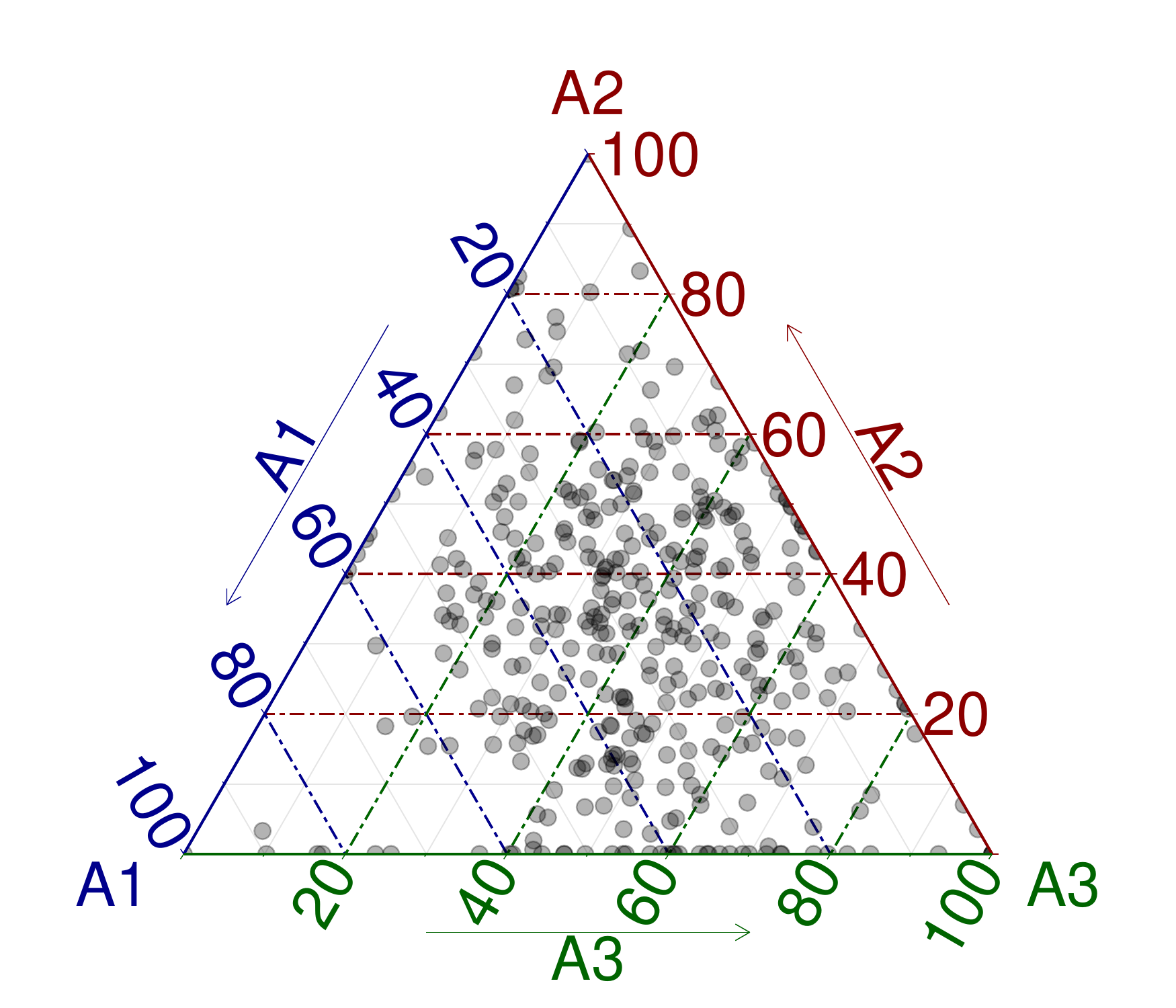}
			\caption{Comparison of the three centers (C1, C2, C3) given by the $k$-means and the three archetypes (A1, A2, A3) given by the SVD-AA for the S\&P 500 dataset (left). Visualization of the convex combination coefficients of each data point with respect to the three archetypes (right).}
			\label{fig:44}
		\end{center}
	\end{figure}

To further understand the meanings of the three archetypes, for each archetype, we single out the tickers of the top five companies having the largest component in the direction:
\begin{itemize}
\item A1: NFLX: $(\bm{1}, 0, 0)$, STZ.B: $(\bm{0.9},0.1,0)$, ILMN: $(\bm{0.89}, 0.03, 0.08)$, REGN: $(\bm{0.84}, 0.16, 0)$, FLT: $(\bm{0.83}, 0.17, 0)$;
\item A2: AMD: $(0, \bm{1}, 0)$, FTNT: $(0, \bm{0.89},0.11)$, ISRG: $(0.02, \bm{0.83}, 0.15)$, LRCX: $(0.17, \bm{0.83}, 0)$, CPRT: $(0.18, \bm{0.81}, 0)$;
\item A3: MRO: $(0, 0, \bm{1})$, DVN: $(0, 0, \bm{1})$, OXY: $(0, 0, \bm{1})$, APA: $(0, 0, \bm{1})$, MOS: $(0, 0.03, \bm{0.97})$.
\end{itemize}

All the five companies in A3 are in the energy industry (oil, mining, etc.), representing the traditional aspect of the financial market. 
Companies in A1 and A2 leave more room for interpretation. 
In particular, for A1, NFLX is an entertainment company, STZ.B is a food company (beer and wine), ILMN is a biotechnology company, REGN is a pharmaceutical company, FLT is a financial service company; for A2, both AMD and LRCX are in the semiconductor industry, FTNT is a cybersecurity company, ISRG is a surgical equipment design company, and CPRT is a company that provides online vehicle auction and remarketing services.
According to the quant ratings on \href{https://seekingalpha.com/}{https://seekingalpha.com/} between 2021 and 2022, all these companies have high profitability; each of these companies has consecutively ranked above A- and many have been A+ in the three latest reports (the factor grade ranges from A+ to F).
This feature is also manifested in the upward trend in the archetypal curves associated with A1 and A2. 
Moreover, they are more resilient than the traditional industries when unexpected events occur (e.g. Covid-19 pandemic in early 2020), as can be seen from the ``V'' shape of these curves near Day $2000$ in the first plot in Figure \ref{fig:44}.
The difference between A1 and A2 is more difficult to corroborate using recent financial data. 
From a macroscopic perspective, A1 represents the more established highly profitable industries in the market; they maintain a steady pace of CLR growth over time.
A2 represents the emerging industries that, while not as profitable as A1, possess relatively more growth potential.
This conclusion can be numerically inspected by comparing the slope of the A1 and A2 curves in Figure \ref{fig:44}.

	\subsection{Intel Image}
	
	The Intel Image dataset \cite{intel} has been used for multi-class classification in machine learning, and consists of $24000$ images representing $6$ different categories of the scene: Buildings, Forest, Glacier, Mountain, Sea, and Street. 
	Each image is a $150\times 150$ pixel color image, which corresponds to a $67500$-dimensional vector through vectorization and stacking of the pixel matrices ($d = 67500$).  
	We randomly select $3000$ samples in the training dataset ($N = 3000$) and apply AAA to extract representative patterns. 
	Note we could have used the full dataset; however, this would require using a more efficient optimization solver for the subproblems to ensure the computation is done in a reasonable time.
	Since there are $6$ different categories of images, we set $k=6$. 
	The input parameters for AAA are chosen as $p = 30$, $M = 10^4$ and $\eta = 0.03$. 
	We compare the computed archetypes given by AAA with the clustering centers given by the $k$-means in Figure \ref{fig:6}. 
	
	\begin{figure}[htbp]
		\begin{center}
			\includegraphics[width=0.16\textwidth]{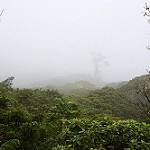}
			\includegraphics[width=0.16\textwidth]{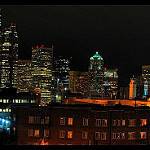}
			\includegraphics[width=0.16\textwidth]{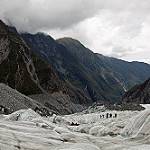}
			\includegraphics[width=0.16\textwidth]{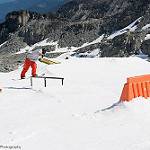}
			\includegraphics[width=0.16\textwidth]{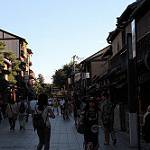}
			\includegraphics[width=0.16\textwidth]{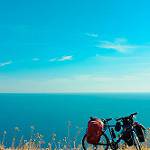}
			\includegraphics[width=0.16\textwidth]{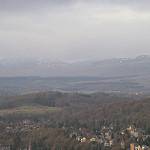}
			\includegraphics[width=0.16\textwidth]{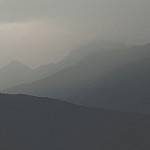}
			\includegraphics[width=0.16\textwidth]{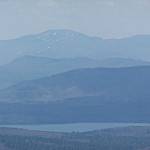}
			\includegraphics[width=0.16\textwidth]{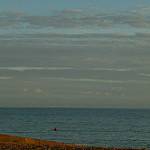}
			\includegraphics[width=0.16\textwidth]{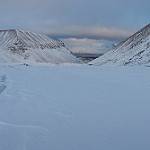}
			\includegraphics[width=0.16\textwidth]{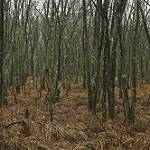}
			\caption{Six images in the dataset with largest component in each archetypal direction identified by AAA (top) compared to the six images in the dataset closest to the centers of the $k$-means (bottom).}
			\label{fig:6}
		\end{center}
	\end{figure}
	
	In this experiment, the instance running time is $527.356$s ($107.147$s for data dimensionality reduction, $1.627$s for representation cardinality reduction and $418.582$s for solving the reduced problem using Algorithm \ref{alg:AM}) for AAA, and $151.787$s for the $k$-means.   
	In this case, the cardinality of the extreme points used to build up the approximate convex hull is $738$. 
	The $6$ archetypes account for about $41.2\%$ of the variance of the dataset, as opposed to $28.3\%$ explained by the $k$-means. 
	The other two methods, SVD-AA, and archetypes cannot be implemented within a reasonable time.  
	
	For each archetype, we find the image that has the largest component with respect to it in the dataset. 
	We also identify the images closest to the $k$-means centers.
	The results are reported in Figure \ref{fig:6}.  
	According to the label information, the images on the top and bottom panels in Figure \ref{fig:6} (from left to right) correspond to ``Forest'', ``Buildings'', ``Glacier'', ``Glacier'', ``Street'', ``Sea'' and ``Mountain'', ``Mountain'', ``Mountain'', ``Sea'', ``Glacier'', ``Forest'', respectively. 
	Despite an approximate algorithm, AAA produces more diversified results than the $k$-means in terms of image content. The only repetition occurs in the third and fourth pictures, where both the snow mountains are classified as Glacier.


	\subsection{MNIST dataset}
	
	The MNIST database \cite{lecun1998mnist} is a large database of handwritten digits that is commonly used for both classification and clustering tasks. 
 	Each data point in MNIST is a $28\times 28$ gray-scale image (i.e. a $784$-dimensional vector) representing handwritten digits from $0$ to $9$. 
	The total size of the training dataset is 42000. 
	
	In this experiment, we use both the $k$-means and AA to analyze the data structure in each label class separately.
	We first split the training data into 10 different datasets corresponding to labeled digits $0, \cdots, 9$, respectively,  each having a size of around 4000. 
	We apply both the $k$-means and AAA to the split datasets to identify the typical patterns and the archetypes, respectively.  
	After running the ``elbow inspection'' for the $k$-means at different labels, we found $k=5$ to be a reasonable choice on average. 
	To be consistent, we also use $k=5$ for AAA. 
	Moreover, the other parameters in AAA are set as $p = 10$, $M = 10^4$ and $\eta = 0.03$. 
	As before, in each label class, we find the images in the corresponding datasets that are closest to the $k$-means centers as well as have the largest convex combination coefficients with respect to the archetypes. 
	The results are reported in Figure \ref{fig:7}.
	
	In general, the $k$-means centers are the images that are representative of each label class.
	They are more standard and usually can be distinguished using raws eyes. 
	On the flip side, the approximate archetypes found by AAA are more extreme in terms of size, shape, position, etc. 
	
	\begin{figure}[htbp]
		\begin{center}
			\includegraphics[width=0.05\textwidth]{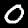}
			\includegraphics[width=0.05\textwidth]{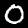}
			\includegraphics[width=0.05\textwidth]{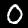}
			\includegraphics[width=0.05\textwidth]{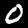}
			\includegraphics[width=0.05\textwidth]{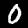}\hspace{3 cm}
			\includegraphics[width=0.05\textwidth]{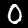}
			\includegraphics[width=0.05\textwidth]{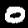}
			\includegraphics[width=0.05\textwidth]{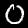}
			\includegraphics[width=0.05\textwidth]{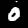}
			\includegraphics[width=0.05\textwidth]{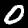}\\
			\includegraphics[width=0.05\textwidth]{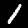}
			\includegraphics[width=0.05\textwidth]{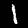}
			\includegraphics[width=0.05\textwidth]{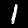}
			\includegraphics[width=0.05\textwidth]{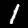}
			\includegraphics[width=0.05\textwidth]{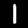}\hspace{3 cm}
			\includegraphics[width=0.05\textwidth]{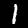}
			\includegraphics[width=0.05\textwidth]{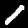}
			\includegraphics[width=0.05\textwidth]{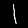}
			\includegraphics[width=0.05\textwidth]{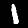}
			\includegraphics[width=0.05\textwidth]{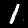}\\
			\includegraphics[width=0.05\textwidth]{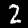}
			\includegraphics[width=0.05\textwidth]{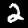}
			\includegraphics[width=0.05\textwidth]{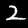}
			\includegraphics[width=0.05\textwidth]{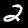}
			\includegraphics[width=0.05\textwidth]{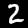}\hspace{3 cm}
			\includegraphics[width=0.05\textwidth]{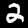}
			\includegraphics[width=0.05\textwidth]{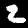}
			\includegraphics[width=0.05\textwidth]{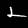}
			\includegraphics[width=0.05\textwidth]{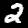}
			\includegraphics[width=0.05\textwidth]{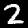}\\
			\includegraphics[width=0.05\textwidth]{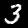}
			\includegraphics[width=0.05\textwidth]{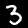}
			\includegraphics[width=0.05\textwidth]{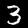}
			\includegraphics[width=0.05\textwidth]{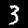}
			\includegraphics[width=0.05\textwidth]{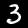}\hspace{3 cm}
			\includegraphics[width=0.05\textwidth]{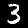}
			\includegraphics[width=0.05\textwidth]{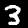}
			\includegraphics[width=0.05\textwidth]{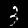}
			\includegraphics[width=0.05\textwidth]{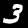}
			\includegraphics[width=0.05\textwidth]{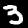}\\
			\includegraphics[width=0.05\textwidth]{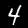}
			\includegraphics[width=0.05\textwidth]{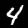}
			\includegraphics[width=0.05\textwidth]{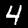}
			\includegraphics[width=0.05\textwidth]{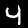}
			\includegraphics[width=0.05\textwidth]{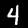}\hspace{3 cm}
			\includegraphics[width=0.05\textwidth]{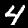}
			\includegraphics[width=0.05\textwidth]{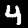}
			\includegraphics[width=0.05\textwidth]{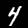}
			\includegraphics[width=0.05\textwidth]{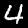}
			\includegraphics[width=0.05\textwidth]{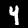}\\
			\includegraphics[width=0.05\textwidth]{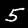}
			\includegraphics[width=0.05\textwidth]{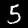}
			\includegraphics[width=0.05\textwidth]{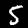}
			\includegraphics[width=0.05\textwidth]{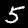}
			\includegraphics[width=0.05\textwidth]{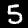}\hspace{3 cm}
			\includegraphics[width=0.05\textwidth]{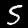}
			\includegraphics[width=0.05\textwidth]{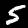}
			\includegraphics[width=0.05\textwidth]{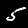}
			\includegraphics[width=0.05\textwidth]{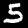}
			\includegraphics[width=0.05\textwidth]{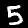}\\
			\includegraphics[width=0.05\textwidth]{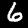}
			\includegraphics[width=0.05\textwidth]{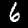}
			\includegraphics[width=0.05\textwidth]{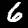}
			\includegraphics[width=0.05\textwidth]{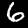}
			\includegraphics[width=0.05\textwidth]{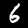}\hspace{3 cm}
			\includegraphics[width=0.05\textwidth]{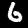}
			\includegraphics[width=0.05\textwidth]{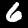}
			\includegraphics[width=0.05\textwidth]{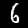}
			\includegraphics[width=0.05\textwidth]{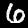}
			\includegraphics[width=0.05\textwidth]{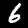}\\
			\includegraphics[width=0.05\textwidth]{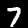}
			\includegraphics[width=0.05\textwidth]{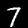}
			\includegraphics[width=0.05\textwidth]{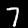}
			\includegraphics[width=0.05\textwidth]{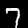}
			\includegraphics[width=0.05\textwidth]{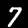}\hspace{3 cm}
			\includegraphics[width=0.05\textwidth]{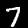}
			\includegraphics[width=0.05\textwidth]{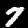}
			\includegraphics[width=0.05\textwidth]{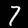}
			\includegraphics[width=0.05\textwidth]{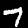}
			\includegraphics[width=0.05\textwidth]{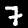}\\
			\includegraphics[width=0.05\textwidth]{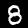}
			\includegraphics[width=0.05\textwidth]{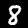}
			\includegraphics[width=0.05\textwidth]{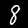}
			\includegraphics[width=0.05\textwidth]{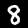}
			\includegraphics[width=0.05\textwidth]{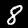}\hspace{3 cm}
			\includegraphics[width=0.05\textwidth]{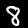}
			\includegraphics[width=0.05\textwidth]{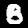}
			\includegraphics[width=0.05\textwidth]{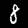}
			\includegraphics[width=0.05\textwidth]{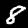}
			\includegraphics[width=0.05\textwidth]{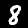}\\
			\includegraphics[width=0.05\textwidth]{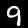}
			\includegraphics[width=0.05\textwidth]{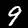}
			\includegraphics[width=0.05\textwidth]{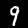}
			\includegraphics[width=0.05\textwidth]{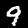}
			\includegraphics[width=0.05\textwidth]{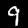}\hspace{3 cm}
			\includegraphics[width=0.05\textwidth]{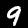}
			\includegraphics[width=0.05\textwidth]{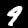}
			\includegraphics[width=0.05\textwidth]{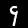}
			\includegraphics[width=0.05\textwidth]{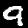}
			\includegraphics[width=0.05\textwidth]{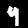}\\
			\caption{Typical patterns (left) and archetypes (right) idenfied by the $k$-means and AAA in each label class in the MNIST training dataset.}
			\label{fig:7}
		\end{center}
	\end{figure}

	\section*{Acknowledgements}
	We would like to thank the anonymous referees for their very helpful comments which significantly improved the presentation of the paper.
	We would like to thank Yu Zhu for providing us with the S\&P 500 dataset and helping clarify some related questions related to interpretation.  
	We also thank Akil Narayan for reading through an early version of the draft, and for providing several comments that improved the presentation of the manuscript.
	Y. Xu would like to thank the organizers of the MSRI Summer Graduate School on Mathematics of Big Data: Sketching and (Multi-) Linear Algebra for motivating discussions.

	\section*{Funding}
	R. Han is supported by the Direct Grant for Research from The Chinese University of Hong Kong, Hong Kong under Grant No. 4053474.
	B. Osting is supported by the National Science Foundation under Grant No. DMS-1752202.  
	D. Wang is supported by the National Natural Science Foundation of China grant 12101524 and the University Development Fund from The Chinese University of Hong Kong, Shenzhen under Grant No. UDF01001803. 
	Y. Xu is supported by the National Science Foundation under Grant No. DMS-1848508.

	\printbibliography

\end{document}